\DeclareRobustCommand{\greektext}{%
  \fontencoding{LGR}\selectfont\def\encodingdefault{LGR}}
\DeclareRobustCommand{\textgreek}[1]{\leavevmode{\greektext #1}}
\numberwithin{equation}{section}
\numberwithin{figure}{section}
\theoremstyle{plain}
\newtheorem{thm}{\protect\theoremname}[section]
  \theoremstyle{plain}
  \newtheorem{cor}[thm]{\protect\corollaryname}
  \theoremstyle{plain}
  \newtheorem{prop}[thm]{\protect\propositionname}
  \theoremstyle{plain}
  \newtheorem*{prop*}{\protect\propositionname}
  \theoremstyle{remark}
  \newtheorem{rem}[thm]{\protect\remarkname}
  \theoremstyle{definition}
  \newtheorem{defn}[thm]{\protect\definitionname}
  \theoremstyle{plain}
  \newtheorem{lem}[thm]{\protect\lemmaname}
  \theoremstyle{definition}
  \newtheorem{example}[thm]{\protect\examplename}
  \theoremstyle{definition}
  \newtheorem{problem}[thm]{\protect\problemname}
  \theoremstyle{plain}
  \newtheorem{conjecture}[thm]{\protect\conjecturename}
\def\makebbb#1{
    \expandafter\gdef\csname#1\endcsname{
        \ensuremath{\Bbb{#1}}}
}\makebbb{R}\makebbb{N}\makebbb{Z}\makebbb{C}\makebbb{H}\makebbb{E}\makebbb{H}\makebbb{P}\makebbb{B}\makebbb{Q}\makebbb{E}
  \providecommand{\conjecturename}{Conjecture}
  \providecommand{\corollaryname}{Corollary}
  \providecommand{\definitionname}{Definition}
  \providecommand{\examplename}{Example}
  \providecommand{\lemmaname}{Lemma}
  \providecommand{\problemname}{Problem}
  \providecommand{\propositionname}{Proposition}
  \providecommand{\remarkname}{Remark}
\providecommand{\theoremname}{Theorem}
\begin{document}

\title{Large deviations for Gibbs measures with singular Hamiltonians and
emergence of Kähler-Einstein metrics}
\begin{abstract}
In the present paper and the companion paper \cite{berm8} a probabilistic
(statistical-mechanical) approach to the construction of canonical
metrics on a complex algebraic varieties $X$ is introduced, by sampling
''temperature deformed'' determinantal point processes. The main
new ingredient is a large deviation principle for Gibbs measures with
singular Hamiltonians, which is proved in the present paper. As an
application we show that the unique Kähler-Einstein metric with negative
Ricci curvature on a canonically polarized algebraic manifold $X$
emerges in the many particle limit of the canonical point processes
on $X.$ In the companion paper \cite{berm8} the extension to algebraic
varieties $X$ with positive Kodaira dimension is given and a conjectural
picture relating negative temperature states to the existence problem
for Kähler-Einstein metrics with positive Ricci curvature is developed. 
\end{abstract}

\author{Robert J. Berman}

\maketitle
\tableofcontents{}

\section{Introduction}

In the present paper and the companion paper \cite{berm8} a probabilistic
approach to the construction of canonical metrics on a complex algebraic
varieties $X$ is introduced, by sampling random point processes defined
in terms of algebro-geometric data, canonically attached to $X.$
The processes are ``positive temperature deformations'' of determinantal
(fermionic) point processes and the main new ingredient is a large
deviation principle for Gibbs measures with singular Hamiltonians
which is proved in the present paper. As an application we show that
the unique Kähler-Einstein metric with negative Ricci curvature on
a canonically polarized algebraic manifold $X$ emerges in the many
particle limit of the canonical point processes on $X.$ More generally,
in the presence of a stress-energy tensor on $X$ it is shown that
the unique Kähler metric solving Einstein's equation on $X$ with
negative cosmological constant (in Euclidean signature) emerges in
the many particle limit. 

The generalization to the construction of canonical metrics and measures
on a general algebraic variety $X$ of positive Kodaira dimension
are given in the companion paper \cite{berm8}, by exploiting the
global pluripotential theory and variational calculus in \cite{b-b,begz,bbgz,berm6}.
This leads to a new probabilistic link between algebraic geometry\emph{
}on one hand (in particular the Minimal Model Program) and Kähler-Einstein
geometry on the other. A conjectural picture is also developed describing
the relation between the existence of negative temperature states
and the existence problem for Kähler-Einstein metrics with positive
Ricci curvature. In particular, relations to algebro-geometric stability
properties, as in the Yau-Tian-Donaldson conjecture are described
in \cite{berm8}. See also \cite{berm7,hu} for connections to optimal
transport in the real setting (corresponding to the case when $X$
is toric and abelian variety, respectively) and \cite{berm5} for
connections to physics.

\subsection{A large deviation principle for Gibbs measures}

Let $X$ be a compact Riemannian manifold and denote by $dV$ the
corresponding volume form. Given a sequence of symmetric lower semi-continuous
functions $H^{(N)}$ on the $N-$fold products $X^{N}$ the corresponding
Gibbs measures at inverse temperature $\beta\in]0,\infty[$ is defined
as the following sequence of symmetric probability measures on $X^{N}:$
\[
\mu_{\beta}^{(N)}:=e^{-\beta H^{(N)}}dV^{\otimes N}/Z_{N,\beta},
\]
 where the normalizing constant 
\[
Z_{N,\beta}:=\int_{X^{N}}e^{-\beta H^{(N)}}dV^{\otimes N}
\]
is called the\emph{ ($N-$particle) partition function. }The ensemble
$(X^{N},\mu_{\beta}^{(N)})$ defines a random point process with $N$
particles on $X$ which, from the point of view of statistical mechanics,
models $N$ identical particles on $X$ interacting by the\emph{ Hamiltonian
}(interaction energy) $H^{(N)}$ in thermal equilibrium at \emph{inverse
temperature} $\beta.$ The corresponding \emph{empirical measure}
is the random measure 
\begin{equation}
\delta_{N}:\,\,X^{N}\rightarrow\mathcal{M}_{1}(X),\,\,\,(x_{1},\ldots,x_{N})\mapsto\delta_{N}(x_{1},\ldots,x_{N}):=\frac{1}{N}\sum_{i=1}^{N}\delta_{x_{i}}\label{eq:emp measure intro}
\end{equation}
 taking values in the space $\mathcal{M}_{1}(X)$ of all normalized
positive measures on $X,$ i.e. the space of all probability measures
on $X$. 

A classical problem is to establish conditions for the existence of
a macroscopic limit of the empirical measures $\delta_{N}$ in the
many particle limit $N\rightarrow\infty.$ More precisely, the problem
is to show that the random measures $\delta_{N}$ admit a deterministic
limit $\mu_{\beta}\in\mathcal{M}_{1}(X)$ in the sense that the law
\begin{equation}
\Gamma_{N}:=(\delta_{N})_{*}\mu_{\beta}^{(N)}\label{eq:law intro}
\end{equation}
of $\delta_{N},$ defining a probability measure on $\mathcal{M}_{1}(X),$
converges, as $N\rightarrow\infty,$ weakly to a Dirac mass concentrated
on some $\mu_{\beta}$ in $\mathcal{M}_{1}(X).$ Equivalently, the
marginals $(\mu_{\beta}^{(N)})_{j}$ of $\mu_{\beta}^{(N)}$ on $X^{j}$
satisfy 
\[
(\mu_{\beta}^{(N)})_{j}:=\int_{X^{N-j}}\mu_{\beta}^{(N)}\rightarrow\mu_{\beta}^{\otimes j},
\]
 weakly as probability measures on $X^{j}$ as $N\rightarrow\infty,$
which in the terminology of Kac and Snitzmann \cite{sn} means that
the sequence $\mu_{\beta}^{(N)}$ is\emph{ chaotic}. A stronger exponential
notion of convergence of $\delta_{N},$ with an explicit speed and
rate functional, is offered by the theory of large deviations, by
demanding that the laws $\Gamma_{N}$ satisfy a\emph{ Large Deviation
Principle (LDP)} with \emph{speed} $r_{N}$ and a \emph{rate functional}
$F,$ symbolically expressed as 
\[
\Gamma_{N}(\mu)\sim e^{-r_{N}F(\mu)},\,\,N\rightarrow\infty
\]
 and assuming that $F$ admits a unique minimizer $\mu_{\beta}$ in
$\mathcal{M}_{1}(X).$ Loosely speaking this means that the probability
of finding a cloud of $N$ points $x_{1},...,x_{N}$ on $X$ such
that the corresponding measure $\frac{1}{N}\sum_{i}\delta_{x_{i}}$
approximates a volume form $\mu$ is exponentially small unless $\mu$
is the minimizer $\mu_{\beta}$ of $F_{\beta}.$ 

Our main general result establish such a LDP for a class of singular
Hamiltonians:
\begin{thm}
\label{thm:gibbs intro}Let $H^{(N)}$ be a sequence of functions
(Hamiltonians) on $X^{N}$ as above\textup{. Assume that }
\begin{itemize}
\item \textup{there exists a sequence $\beta_{N}\rightarrow\infty$ of positive
numbers $\beta_{N}$ such that for any continuous function $u$ on
$X$ 
\[
\mathcal{F}_{\beta_{N}}(u):=-\frac{1}{N\beta_{N}}\log\int_{X^{N}}e^{-\beta_{N}\left(H^{(N)}(x_{1},...,x_{N})+u(x_{1})+...+u(x_{N})\right)}dV^{\otimes N}
\]
converges, as $N\rightarrow\infty,$ to a }Gateaux differentiable
functional $\mathcal{F}(u)$ on $C^{0}(X)$ 
\item $H^{(N)}$ is uniformly quasi-superharmonic, i.e. \emph{$\Delta_{x_{1}}H^{(N)}(x_{1},x_{2},...x_{N})\leq C$
on $X^{N}$}
\end{itemize}
Then, for any fixed $\beta>0,$ the measures $(\delta_{N})_{*}(e^{-\beta H^{(N)}}dV^{\otimes N})$
on $\mathcal{M}_{1}(X)$ satisfy, as $N\rightarrow\infty,$ a large
deviation principle (LDP) with \emph{speed} $\beta N$ and good \emph{rate
functional} 
\begin{equation}
F_{\beta}(\mu)=E(\mu)+\frac{1}{\beta}D_{dV}(\mu)\label{eq:free energy func theorem gibbs intro}
\end{equation}
 where the functional $E(\mu)$ is the Legendre-Fenchel transform
of $-\mathcal{F}(-\cdot)$ and $D_{dV}(\mu)$ is the entropy of $\mu$
relative to $dV.$ In particular, the empirical measures $\delta_{N}$
of the corresponding random point processes on $X$ converge in law
to the deterministic measure given by the unique minimizer $\mu_{\beta}$
of $F_{\beta}.$ Moreover, if the equation 
\begin{equation}
d\mathcal{F}_{|u}=\frac{e^{\beta u}dV}{\int_{X}e^{\beta u}dV}\label{eq:abstract mean field eq intro}
\end{equation}
on $C^{0}(X)$ admits a solution $u_{\beta},$ then the corresponding
differential $\mu_{\beta}:=d\mathcal{F}_{|u_{\beta}}$ is the minimizer
of $F_{\beta}.$ 
\end{thm}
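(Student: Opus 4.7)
The plan is to deduce the LDP from an abstract form of the G\"artner-Ellis theorem applied to the laws $\Gamma_N=(\delta_N)_*\mu_\beta^{(N)}$ on the Polish space $\mathcal{M}_1(X)$. The central quantity is the perturbed free energy per particle at fixed $\beta$,
\[
\mathcal{F}_\beta^{(N)}(u):=-\frac{1}{\beta N}\log\int_{X^N}e^{-\beta(H^{(N)}(x_1,\ldots,x_N)+u(x_1)+\cdots+u(x_N))}\,dV^{\otimes N},
\]
so that the logarithmic moment generating functional of $\Gamma_N$ reads $\mathcal{F}_\beta^{(N)}(0)-\mathcal{F}_\beta^{(N)}(u)$. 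The proof reduces to showing that, for every $u\in C^0(X)$, $\mathcal{F}_\beta^{(N)}(u)$ converges to the Gateaux-differentiable limit
\[
\mathcal{F}_\beta(u)=\inf_{\mu\in\mathcal{M}_1(X)}\bigl[\langle\mu,u\rangle+E(\mu)+\tfrac{1}{\beta}D_{dV}(\mu)\bigr],
\]
whose Legendre--Fenchel transform is precisely the rate functional $F_\beta$.

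\medskip

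The identification of this limit combines Sanov's theorem for the reference measures $dV^{\otimes N}$ (which supplies the entropy contribution $\tfrac{1}{\beta}D_{dV}$) with a variational interpretation of $H^{(N)}/N$ in terms of the empirical measure (which supplies the energy $E$). The two hypotheses play complementary roles. The zero-temperature assumption $\mathcal{F}_{\beta_N}\to\mathcal{F}$ pins down $E$ via Legendre duality, $E(\mu)=\sup_w(\mathcal{F}(w)-\langle\mu,w\rangle)$: indeed, the Gibbs variational principle applied to the product measure $\mu^{\otimes N}$ yields
\[
\mathcal{F}_{\beta_N}(u)\le\tfrac{1}{N}\int H^{(N)}\,d\mu^{\otimes N}+\langle\mu,u\rangle+\tfrac{1}{\beta_N}D_{dV}(\mu),
\]
whose $\tfrac{1}{\beta_N}$-correction vanishes as $\beta_N\to\infty$. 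The quasi-superharmonicity assumption $\Delta_{x_i}H^{(N)}\le C$ then supplies the complementary bound on $Z_{N,\beta}[u]$ at finite $\beta$: by the sub-mean-value inequality, mollifying each of the $N$ variables by a smoothing kernel $\chi_\varepsilon$ on $X$ gives
\[
H^{(N)}(x_1,\ldots,x_N)\ge(H^{(N)}*\chi_\varepsilon^{\otimes N})(x_1,\ldots,x_N)-CN\varepsilon^2,
\]
replacing the singular $H^{(N)}$ by a continuous symmetric function whose rescaling by $1/N$ is a continuous functional of $\delta_N$. Applying Varadhan's lemma to the smoothed integral, and then letting $\varepsilon\to 0$, furnishes the matching bound on $\mathcal{F}_\beta^{(N)}(u)$.

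\medskip

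Once pointwise convergence $\mathcal{F}_\beta^{(N)}\to\mathcal{F}_\beta$ is established, Gateaux-differentiability of $\mathcal{F}_\beta$ on $C^0(X)$ follows from the strict convexity of $D_{dV}$ (which forces uniqueness of the minimizer in the definition of $\mathcal{F}_\beta(u)$) together with the assumed differentiability of $\mathcal{F}$. A topological form of the G\"artner-Ellis theorem (in the spirit of Baldi or Eichelsbacher--Schmock) then delivers the LDP at speed $\beta N$ with rate equal to the Legendre--Fenchel transform of $\mathcal{F}_\beta$, which by convex-analytic duality coincides with $F_\beta=E+\tfrac{1}{\beta}D_{dV}$. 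Convergence in law of $\delta_N$ to the unique minimizer $\mu_\beta$ is then automatic from the LDP and coercivity. For the ``moreover'' assertion, one computes the first variation of $F_\beta$ at $\mu_\beta:=d\mathcal{F}_{|u_\beta}$, namely $-u_\beta+\tfrac{1}{\beta}\log(d\mu_\beta/dV)+\tfrac{1}{\beta}$, and observes that substituting the mean-field equation \eqref{eq:abstract mean field eq intro} reduces it to a constant on $X$, so $\mu_\beta$ is a critical point and, by convexity of $F_\beta$, the global minimizer.

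\medskip

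I expect the principal obstacle to be the upper bound on $Z_{N,\beta}[u]$ at fixed $\beta>0$: the hypothesis supplies only zero-temperature asymptotics along $\beta_N\to\infty$, and this must be propagated to all positive $\beta$ through the entropy filter. The quasi-superharmonicity is exactly the ingredient that enables this propagation, by converting the singular $H^{(N)}$ into a continuous functional of the empirical measure at a cost $O(\varepsilon^2)$ that vanishes in the mollification limit.
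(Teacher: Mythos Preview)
Your plan has a genuine gap at the step where you invoke Varadhan's lemma after mollification. The smoothed Hamiltonian $H^{(N)}*\chi_\varepsilon^{\otimes N}$ is indeed a continuous symmetric function on $X^N$, but that does \emph{not} make its rescaling by $1/N$ equal to $G_\varepsilon(\delta_N)$ for some fixed continuous functional $G_\varepsilon:\mathcal{M}_1(X)\to\mathbb{R}$ independent of $N$. Such a representation is automatic only for mean-field Hamiltonians built from finite-order interactions, whereas the whole point of the theorem is that $H^{(N)}$ is allowed to be ``strongly nonlinear'' in $\delta_N$ --- as are the determinantal Hamiltonians \eqref{eq:ham cplx intro} in the K\"ahler-geometric application when $\dim_{\mathbb C}X\ge 2$. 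Without that representation Varadhan's lemma over Sanov's LDP simply does not apply to $\int e^{-\beta H^{(N)}_\varepsilon}dV^{\otimes N}$, and the mollification step goes nowhere.

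You have also mis-located the difficulty. The upper bound on $Z_{N,\beta}[u]$ (equivalently the LDP upper bound) is the \emph{easy} direction and requires neither quasi-superharmonicity nor the zero-temperature hypothesis: one writes $e^{-\beta H^{(N)}}=e^{-\beta(H^{(N)}+u)}e^{\beta u}$, replaces the first factor by its supremum over $X^N$, applies Sanov for the tilted reference measure $e^{\beta u}dV$, and optimizes over $u\in C^0(X)$. The hard direction is the LDP \emph{lower} bound, i.e.\ a lower bound on $\int_{\delta_N^{-1}(B_\epsilon(\mu))}e^{-\beta H^{(N)}}dV^{\otimes N}$, and here the quasi-superharmonicity enters in the opposite way to your mollification. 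Since $\Delta_{x_i}H^{(N)}\le C$ gives $\Delta(e^{-\beta H^{(N)}})\ge -\beta C\, e^{-\beta H^{(N)}}$, what is needed is a submean inequality for $v=e^{-\beta H^{(N)}}$ on the Riemannian quotient $X^N/S_N$ (whose metric balls are exactly the preimages under $\delta_N$ of Wasserstein balls) with a multiplicative constant that is \emph{subexponential} in the dimension $nN$. This is the nontrivial geometric input (Theorem~\ref{thm:submean ineq text}): it requires tracking the dimension dependence in the Cheng--Yau gradient estimate and in a Poincar\'e--Dirichlet inequality, and checking that the orbifold singularities of $X^N/S_N$ do no harm. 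A naive variable-by-variable submean inequality would produce a constant of order $C^N$, which is useless at finite $\beta$. The paper then combines this submean inequality with the convergence of the minimizers $x^{(N)}$ of $(H^{(N)}+u)/N$ toward $d\mathcal{F}_{|u}$ (Lemma~\ref{lem:conv of abstr fekete}), the zero-temperature hypothesis via Lemma~\ref{lem:limit of Z as inf}, Sanov's theorem for the volume of small balls, and a Br\o ndsted--Rockafellar density argument (Lemma~\ref{lem:leg-fench approx}) to pass from measures of the form $d\mathcal{F}_{|u}$ to general $\mu$ with $E(\mu)<\infty$. The convergence $\mathcal{F}_\beta^{(N)}\to\mathcal{F}_\beta$ that you are aiming for is then a \emph{consequence} of the LDP (cf.\ Section~\ref{sub:Relations-to-the g-e}), not the route to it.
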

It follows from the previous theorem that the LDP indeed also holds
for the corresponding Gibbs measures with the rate functional $F_{\beta}+C_{\beta},$
where $C_{\beta}$ is the following constant: 
\begin{equation}
C_{\beta}:=\inf_{\mathcal{M}_{1}(X)}F_{\beta}=-\lim_{N\rightarrow\infty}\frac{1}{N\beta_{N}}\log Z_{N.\beta_{N}},\label{eq:constant C and asympt of part intro}
\end{equation}
It should be stressed that even the convergence of the first marginals
of $\mu_{\beta}^{(N)},$ implied by the previous theorem, appears
to be a new result. 

As explained in Section \ref{sub:Relations-to-Gamma} the asymptotics
in the first assumption of the theorem may be replaced by the weaker
assumption that there exists a functional $E(\mu)$ on $\mathcal{M}_{1}(X)$
such that 
\[
H^{(N)}(x_{1},...,x_{N})/N\rightarrow E(\mu)
\]
in the sense of Gamma convergence. Moreover, Theorem \ref{thm:gibbs intro}
can be viewed as a generalization of the Gärtner-Ellis theorem in
the setting of Gibbs measures (see Section\ref{sub:Relations-to-the g-e}).
Let us also point out that that the restriction that $X$ be compact
can be removed if suitable growth-assumptions of $H^{(N)}$ ``at
infinity'' are made. But since our main application concerns the
case of compact complex manifolds, we have, for simplicity taken $X$
to be compact. 

It may be illuminating to point out that in thermodynamical terms
the content of Theorem \ref{thm:gibbs intro} can be heuristically
expressed as follows. Imagine that we know the macroscopic ground
state (i.e. the state of zero energy $E$) of a system of a large
number $N$ of particles in thermal equilibrium at zero temperature
(i.e. at $\beta=\infty)$. If we can rule out any \emph{first order
phase transitions} at zero-temperature (which essentially means that
the macroscopic equilibrium states is unique), then increasing the
temperature (i.e decreasing $\beta)$ leads to a new macroscopic equilibrium
state, minimizing the corresponding\emph{ free energy} functional
$E-S/\beta,$ where $S$ is the physical entropy (i.e. $S=-D$ with
our sign conventions). In fact, in the complex geometric setting to
which we next turn. the zero-temperature limit $\beta\rightarrow\infty$
is reminiscent of a (second order) gas-liquid phase transition \cite{berm9}.

\subsection{Application to Kähler-Einstein geometry}

Let now$X$ be an $n-$dimensional complex algebraic projective variety
of positive Kodaira dimension. This means that the plurigenera $N_{k}$
of $X$ are increasing: 
\[
N_{k}:=\dim_{\C}H^{0}(X,kK_{X})\rightarrow\infty,
\]
 where $H^{0}(X,kK_{X})$ denotes, as usual, the complex vector space
of all pluricanonical (holomorphic) $n-$forms of $X$ at level $k,$
i.e. $H^{0}(X,kK_{X})$ is the space of all global holomorphic sections
of the $k$ tensor power of the canonical line bundle 
\[
K_{X}:=\Lambda^{n}(T^{*}X)
\]
 of $X$ (using additive notation of tensor powers). In terms of local
holomorphic coordinates $z_{1},...,z_{n}$ on $X$ this simply means
that the elements $s^{(k)}$ of $H^{0}(X,kK_{X})$ may be represented
by local holomorphic functions $s^{(k)}$ on $X,$ such that $|s^{(k)}|^{2/k}$
transforms as a density on $X$ and thus defines a measure on $X.$
To any such algebraic variety $X$ we can associate the following
canonical sequence of probability measures $\mu^{(N_{k})}$ on $X^{N_{k}}:$
\begin{equation}
\mu^{(N_{k})}=:=\frac{1}{Z_{N_{k}}}\left|(\det S^{(k)})(z_{1},...,z_{N_{k}})\right|^{2/k},\label{eq:canon prob measure intro}
\end{equation}
 where $\det S^{(k)}$ is a generator of the top exterior power $\Lambda^{N_{k}}(H^{0}(X^{N_{k}},kK_{X^{N_{k}}}),$
i.e. totally antisymmetric (and thus defined up to a multiplicative
complex number) and $Z_{N_{k}}$is the normalizing constant. The probability
measure $\mu^{(N_{k})}$ thus defined is symmetric, i.e. invariant
under the natural action of the permutation group $S_{N_{k}},$ independent
of the choice of generator $\det S^{(k)}$ and hence defines a canonical
random point process on $X$ with $N_{k}$ points. 

As shown in the companion paper\cite{berm8}, it follows from Theorem
\ref{thm:gibbs intro}, combined with the asymptotics in \cite{b-b}
that the corresponding empirical measures $\delta_{N_{k}}$ converge
in law, as $k\rightarrow\infty,$ towards a deterministic measure
$\mu_{can}$ on $X,$ which is thus canonically attached to $X.$
In fact, using the pluripotential theory and variational calculus
in \cite{bbgz,berm6} the limiting measure $\mu_{can}$ is shown to
coincide with the canonical measure of Song-Tian \cite{s-t} and Tsuji
\cite{ts} previously defined in terms of Kähler-Einstein geometry
or equivalently as solutions to certain complex Monge-Ampère equations.
In the present paper we will show how to apply Theorem \ref{thm:gibbs intro}
in the special case when $K_{X}$ is positive (i.e. ample) to deduce
the following
\begin{thm}
\label{thm:ke intro}Let $X$ be a compact complex manifold such with
positive canonical line bundle $K_{X}.$ Then the empirical measures
$\delta_{N_{k}}$ of the corresponding canonical random point processes
on $X$ converge in law, as $N_{k}\rightarrow\infty,$ towards the
normalized volume form $dV_{KE}$ of the unique Kähler-Einstein metric
$\omega_{KE}$ on $X.$ More precisely, the law of $\delta_{N_{k}}$
satisfies a large deviation principle with speed $N_{k}$ whose rate
functional may be identified with Mabuchi's K-energy functional on
the space of Kähler metrics in $c_{1}(K_{X}).$ 
\end{thm}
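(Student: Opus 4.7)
The plan is to realize the canonical point process as a Gibbs measure fitting the hypotheses of Theorem~\ref{thm:gibbs intro} and then identify the resulting rate functional with Mabuchi's K-energy. Fix a smooth positively curved Hermitian metric $h_{0}=e^{-\phi_{0}}$ on $K_{X}$ with curvature $\omega_{0}=dd^{c}\phi_{0}\in c_{1}(K_{X})$, and take $dV$ to be the associated smooth volume form (locally $dV=e^{\phi_{0}}|dz|^{2}$). Setting
\[
H^{(N_{k})}:=-\log\!\bigl(|\det S^{(k)}|^{2/k}\big/dV^{\otimes N_{k}}\bigr),
\]
the canonical measure \eqref{eq:canon prob measure intro} becomes the Gibbs measure of $H^{(N_{k})}$ against $dV^{\otimes N_{k}}$ at fixed inverse temperature $\beta=1$, so Theorem~\ref{thm:gibbs intro} will deliver an LDP at the desired speed $N_{k}=\beta N_{k}$. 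The auxiliary sequence required in the first hypothesis will be $\beta_{N_{k}}:=k\to\infty$.

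To verify the uniform quasi-superharmonicity, write $\det S^{(k)}=\det(f_{\alpha}(z_{\beta}))(dz_{1}\cdots dz_{N_{k}})^{k}$ in a local chart; since $\log|\det(f_{\alpha}(z_{\beta}))|$ is plurisubharmonic in each $z_{i}$ as the logarithm of the modulus of a holomorphic function,
\[
\Delta_{x_{1}}H^{(N_{k})}=-\tfrac{2}{k}\Delta_{x_{1}}\log\bigl|\det(f_{\alpha}(z_{\beta}))\bigr|+\Delta_{x_{1}}\log\rho(z_{1})\le C
\]
uniformly in $k$, where $\rho$ is the local density of $dV$. For the first hypothesis, a direct local computation yields the identity $e^{-kH^{(N_{k})}}dV^{\otimes N_{k}}=\|\det S^{(k)}\|_{h_{0}^{k}}^{2}\,dV^{\otimes N_{k}}$, whence
\[
\mathcal{F}_{\beta_{N_{k}}}(u)=-\frac{1}{kN_{k}}\log\int_{X^{N_{k}}}\bigl\|\det S^{(k)}\bigr\|_{(h_{0}e^{-u})^{k}}^{2}\,dV^{\otimes N_{k}}.
\]
This is a Gram-type integral for the $L^{2}(dV)$ pairing on $H^{0}(X,kK_{X})$ induced by the metric $h_{0}e^{-u}$, whose $k\to\infty$ asymptotics are the Bergman-type asymptotics of \cite{b-b}: the limit is $\mathcal{F}(u)=-\mathcal{E}_{\phi_{0}}(\phi_{0}+u)+\mathrm{const}$, where $\mathcal{E}_{\phi_{0}}$ is the Aubin--Mabuchi (Monge--Amp\`ere) energy relative to $\phi_{0}$. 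Gateaux differentiability of $\mathcal{F}$ on $C^{0}(X)$, with differential equal to the normalized Monge--Amp\`ere measure of the PSH envelope $P(\phi_{0}+u)$, then follows from the pluripotential regularity in \cite{bbgz,berm6}.

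Applying Theorem~\ref{thm:gibbs intro} at $\beta=1$, one obtains an LDP with speed $N_{k}$ and rate $F_{1}(\mu)=E(\mu)+D_{dV}(\mu)$ (modulo the additive constant absorbed by the normalization of $\mu^{(N_{k})}$), where $E$ is the Legendre transform of $-\mathcal{F}(-\cdot)$. By the duality developed in \cite{bbgz}, $E$ is the pluripotential energy on $\mathcal{M}_{1}(X)$, and on measures of the form $\omega^{n}/V$ with $\omega\in c_{1}(K_{X})$ the decomposition $E+D_{dV}$ is precisely Chen--Tian's formula for Mabuchi's K-energy. Since $K_{X}$ is ample, the K-energy admits a unique minimizer, namely the normalized volume form $dV_{KE}$ of the unique K\"ahler--Einstein metric $\omega_{KE}$; this measure also solves the abstract mean-field equation \eqref{eq:abstract mean field eq intro}, which in this setting is the K\"ahler--Einstein equation in disguise. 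Hence $\delta_{N_{k}}\to dV_{KE}$ in law.

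The principal obstacle is the asymptotic identification $\mathcal{F}_{\beta_{N_{k}}}\to\mathcal{F}$ together with the Gateaux differentiability of $\mathcal{F}$: both rest on the nontrivial pluripotential machinery of \cite{b-b,bbgz,berm6}, namely large-$k$ Bergman-kernel asymptotics and uniqueness/regularity of PSH envelopes. Given these inputs, the quasi-superharmonicity check and the Chen--Tian identification of the rate functional with the K-energy amount to essentially routine bookkeeping.
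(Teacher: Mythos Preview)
Your approach is essentially the same as the paper's: realize the canonical process as the case $(L,\left\Vert\cdot\right\Vert,dV,\beta)=(K_X,\left\Vert\cdot\right\Vert_{dV},dV,1)$ of the temperature-deformed determinantal processes, verify the two hypotheses of Theorem~\ref{thm:gibbs intro} via plurisubharmonicity and the asymptotics of \cite{b-b}, and then identify the rate functional with the K-energy through the Chen--Tian formula and the Aubin--Yau theorem. One slip: the limit functional is $\mathcal{F}(u)=\mathcal{E}(Pu)$, i.e.\ the Monge--Amp\`ere energy composed with the psh envelope operator $P$, not $-\mathcal{E}_{\phi_0}(\phi_0+u)$ as you wrote; you implicitly correct this when you describe $d\mathcal{F}$ via the envelope, but the stated formula for $\mathcal{F}$ itself should carry the projection (and the differentiability is established in \cite{b-b}, not \cite{bbgz,berm6}).
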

By the celebrated Aubin-Yau theorem \cite{au,y} the canonical line
bundle $K_{X}$ of a compact complex manifold $X$ is positive precisely
when $X$ admits a Kähler-Einstein metric $\omega_{KE}$ with negative
Ricci curvature, i.e. a Kähler metric with constant negative Ricci
curvature: 
\begin{equation}
\mbox{Ric\,}\ensuremath{\omega_{KE}=-\omega_{KE}}\label{eq:k-e eq intro}
\end{equation}
However, there are very few examples where the Kähler-Einstein metric
can be obtained explicitly. The previous theorem provides a canonical
sequence of quasi-explicit Kähler forms $\omega_{k}$ approximating
$\omega_{KE}:$
\begin{cor}
\label{cor:ke intro}Let $X$ be a complex compact manifold such that
$K_{X}$ is positive. Then the sequence
\begin{equation}
\omega_{k}:=dd^{c}\log\int_{X^{N_{k}-1}}\left|(\det S^{(k)})(\cdot,x_{1},...,x_{N_{k}-1})\right|^{2/k}\label{eq:def of canonical sequ of current intro}
\end{equation}
(consisting of Kähler forms, for $k$ sufficiently large) converges,
as $k\rightarrow\infty,$ to the Kähler-Einstein metric $\omega_{KE}$
in the weak topology of currents on $X.$ 
\end{cor}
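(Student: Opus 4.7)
The plan is to identify $\omega_k$ as $dd^c\log$ of (a constant multiple of) the first marginal density of the canonical Gibbs measure $\mu^{(N_k)}$, and then to upgrade the weak convergence of these marginals supplied by Theorem~\ref{thm:ke intro} to weak convergence of the curvature currents, via the identity $\omega_{KE} = dd^c \log dV_{KE}$ (which is the K\"ahler--Einstein equation $\mathrm{Ric}\,\omega_{KE} = -\omega_{KE}$ rewritten through $\mathrm{Ric}\,\omega = -dd^c \log \omega^n$).

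First I would observe that the integrand
\[
\rho_k(z) := \int_{X^{N_k-1}} \bigl|(\det S^{(k)})(z, x_1, \ldots, x_{N_k-1})\bigr|^{2/k}
\]
equals $Z_{N_k}\,\mu_1^{(N_k)}(z)$, where $\mu_1^{(N_k)}$ is the first marginal of $\mu^{(N_k)}$ and $Z_{N_k}$ is the normalizing constant; since constants drop out under $dd^c\log$, one has $\omega_k = dd^c\log\mu_1^{(N_k)}$. Because $|\det S^{(k)}|^{2/k}$ is plurisubharmonic in the first variable for each fixed choice of the remaining $N_k-1$ variables, standard Berndtsson-type results give that $\log\rho_k$ is plurisubharmonic, so $\omega_k$ is a positive $(1,1)$-current in $c_1(K_X)$ (smooth K\"ahler for $k$ large by Bergman-kernel regularity). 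I would then invoke Theorem~\ref{thm:ke intro}: the LDP rate functional admits $dV_{KE}$ as its unique minimizer, so the sequence $\mu^{(N_k)}$ is chaotic, and in particular the first marginals satisfy $\mu_1^{(N_k)} \to dV_{KE}$ weakly as probability measures on $X$.

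The main step is to commute $dd^c\log$ with this weak limit. Choose a reference K\"ahler form $\omega_0\in c_1(K_X)$ coming from a smooth metric on $K_X$, with associated volume form $dV_{\mathrm{ref}}$ satisfying $dd^c\log dV_{\mathrm{ref}} = \omega_0$, and write $\omega_k = \omega_0 + dd^c u_k$, where the $\omega_0$-plurisubharmonic potential $u_k$ is fixed by $e^{u_k}dV_{\mathrm{ref}} = \mu_1^{(N_k)}$. The pluripotential compactness of quasi-PSH functions in a fixed cohomology class, combined with the probability-measure constraint, produces $L^1$-subsequential limits $u_\infty$; uniform upper bounds on $u_k$ (from Bergman-kernel estimates for $kK_X$) then let one pass to the limit in the exponential by dominated convergence, yielding $e^{u_\infty}dV_{\mathrm{ref}} = dV_{KE}$, i.e.\ the Monge--Amp\`ere equation characterizing the K\"ahler--Einstein potential. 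Aubin--Yau uniqueness identifies $u_\infty$ and hence $\omega_\infty := \omega_0 + dd^c u_\infty = \omega_{KE}$, so every subsequential limit equals $\omega_{KE}$ and the full sequence $\omega_k$ converges weakly to $\omega_{KE}$. The main obstacle is precisely this identification step: passing from weak convergence of the measures $\mu_1^{(N_k)}$ to an equation for $u_\infty$ requires pluripotential-theoretic a priori bounds on the $u_k$ (uniform $L^1$ from quasi-PSH compactness, and ideally a uniform $L^\infty$ bound), which in turn rest on Berndtsson's theorem on the plurisubharmonic variation of Bergman kernels together with asymptotic estimates for sections of $kK_X$ as $k\to\infty$.
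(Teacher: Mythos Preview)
Your proposal is correct and follows essentially the same route as the paper. The paper itself only sketches the argument (deferring details to the companion paper \cite{berm8}), but what it says---``using basic compactness properties of the space $PSH(X,\omega_{0})$'' after observing that $\omega_{\beta}$ can be recovered from $\mu_{\beta}$ by applying $dd^{c}\log$---is exactly your strategy: write $\omega_{k}=\omega_{0}+dd^{c}u_{k}$ with $u_{k}$ the log-density of the first marginal, invoke the weak convergence $\mu_{1}^{(N_{k})}\to dV_{KE}$ supplied by Theorem~\ref{thm:ke intro}, extract an $L^{1}$-limit $u_{\infty}$ via quasi-psh compactness, and identify the limit.

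One small simplification: you do not need Aubin--Yau uniqueness to pin down $u_{\infty}$. Once you have $e^{u_{\infty}}dV_{\mathrm{ref}}=dV_{KE}$, the function $u_{\infty}=\log(dV_{KE}/dV_{\mathrm{ref}})$ is determined outright, and then $\omega_{0}+dd^{c}u_{\infty}=dd^{c}\log dV_{KE}=-\mathrm{Ric}\,\omega_{KE}=\omega_{KE}$ directly from the K\"ahler--Einstein equation. The Aubin--Yau theorem already did its work upstream, in producing the minimizer $dV_{KE}$ of the rate functional.
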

Theorem \ref{thm:ke intro} fits into a more general setting of ``temperature
deformed'' determinantal point processes attached to a polarized
manifold $(X,L),$ i.e. a compact complex manifolds $X$ endowed with
a positive line bundle $L$ (Theorem \ref{thm:def determ}). More
precisely, in the general setting the point processes are attached
to the data $(\left\Vert \cdot\right\Vert ,dV,\beta_{k})$ consisting
of a Hermitian metric $\left\Vert \cdot\right\Vert $ on a $L,$ a
volume form $dV$ on $X$ and a sequence of positive numbers $\beta_{k}\rightarrow\beta\in]0,\infty].$
Then the corresponding probability measures on $X^{N_{k}}$ are defined
by 
\begin{equation}
\mu^{(N_{k},\beta)}:=\frac{\left\Vert (\det S^{(k)})(x_{1},x_{2},...x_{N_{k}})\right\Vert ^{2\beta_{k}/k}dV^{\otimes N_{k}}}{Z_{N_{k},\beta}}\label{eq:prob measure for polarized intro}
\end{equation}
where $\det S^{(k)}$ is a generator of the top exterior power $\Lambda^{N_{k}}H^{0}(X,kL).$
Concretely, the corresponding LDP is equivalent to the following asymptotics
for the $L^{2\beta_{k}/k}-$norm of the generator $\det S^{(k)}$
of the determinant line of $H^{0}(X,kL)$ which is orthonormal with
respect to the $L^{2}-$product determined by $(\left\Vert \cdot\right\Vert ,dV):$
\[
\frac{1}{N_{k}}\log\left\Vert \det S^{(k)}\right\Vert _{L^{2\beta_{k}/k}(X^{N_{k}},\mu_{0}^{\otimes N_{k}})}\rightarrow-\inf_{\mathcal{\mu\in M}_{1}(X)}F_{\beta}(\mu)
\]
(by Lemma \ref{lem:appl of ge}). In this general setting the limiting
deterministic measure $\mu_{\beta}$ minimizing $F_{\beta}$ is the
volume form of the unique Kähler metric $\omega_{\beta}$ in the first
Chern class of $L$ solving the twisted Kähler-Einstein equation 
\begin{equation}
\mbox{Ric\,}\ensuremath{\omega=-\beta\omega+\eta},\label{eq:tw ke eq intro}
\end{equation}
where the twisting form $\eta$ is explicitly determined by $(\left\Vert \cdot\right\Vert ,dV,\beta).$
The point is that when $L=K_{X}$ any given volume form $dV$ naturally
defines a metric $\left\Vert \cdot\right\Vert _{dV}$ on $L$ and
the probability measures on $X^{N_{k}}$ attached to $(\left\Vert \cdot\right\Vert _{dV},dV,1)$
are precisely the canonical ones defined by formula \ref{eq:canon prob measure intro}.
Moreover, in this special case $\eta$ vanishes and the equation \ref{eq:tw ke eq intro}
thus reduces to the the usual Kähler-Einstein equation \ref{eq:k-e eq intro}.
The more general twisted version of the equation has previously appeared
in various situations in Kähler geometry \cite{f,s-t,ts}. From the
physics point of view the twisting form $\eta$ corresponds to the
(trace-reversed) stress-energy tensor in Einstein's equations on $X$
(with Euclidean signature).

The Hamiltonians 
\begin{equation}
H^{(N_{k})}(x_{1},...,x_{N_{k}}):=-k^{-1}\log\left\Vert (\det S^{(k)})(x_{1},x_{2},...x_{N_{k}})\right\Vert ^{2}\label{eq:ham cplx intro}
\end{equation}
corresponding to the probability measures \ref{eq:prob measure for polarized intro}
are strongly non-linear unless $X$ is a Riemann surface, i.e. unless
$n=1.$ In fact, in the simplest latter case, i.e. when $X$ is the
Riemann sphere, $H^{(N_{k})}(x_{1},...,x_{N_{k}})$ is a sum of identical
pair interactions $W(x_{i},x_{j})$, where $W$ is the Green function
of the corresponding Laplace operator and then the corresponding functional
$E(\mu)$ is the Dirichlet energy (Remark \ref{rem:energy}). In general,
the connection to the Kähler-Einstein geometry of $(X,L)$ will be
shown to arise from the fact that the equation \ref{eq:abstract mean field eq intro}
is intimately related to the complex Monge-Ampère equation 
\begin{equation}
(\omega_{0}+i\partial\bar{\partial}u)^{n}=e^{\beta u}dV\label{eq:a-y equation intro}
\end{equation}
 where $\omega_{0}$ is the normalized curvature two form of the given
metric $\left\Vert \cdot\right\Vert $ on $L.$ More precisely, the
two equations coincide for smooth functions $u$ such that $\omega_{0}+i\partial\bar{\partial}u$
is a Kähler form (i.e. smooth and positive). In this complex geometric
setting the strong non-linearity of the Hamiltonians $H^{(N)}$ when
$n\geq2$ is reflected in the non-linearity of the complex Monge-Ampère
operator appearing in the left hand side of equation \ref{eq:a-y equation intro}
(coinciding with the Laplacian when $n=1)$. Furthermore, the singularity
of $H^{(N)}$ (which is present for any dimension $n$) is a reflection
of the fact that solutions to the (generalized) Calabi-Yau equation
\begin{equation}
(\omega_{0}+i\partial\bar{\partial}u)^{n}=\mu\label{eq:c-y eq intro}
\end{equation}
are, in general, singular when $\mu$ is a probability measure on
$X$ (as is clear already for the Laplace equation appearing when
$n=1$). 

Finally, let us point out that the extension to general complex algebraic
manifolds $X$ with positive Kodaira dimension, established in the
companion paper \cite{berm8}, relies on an extension of Theorem \ref{thm:def determ}
to line bundles $L,$ which are big (but not necessarily positive);
see Section \ref{sub:The-generalization-to big}.

\subsection{Comparison with previous results}

First a comment on relations to the physics literature: in the case
$n=1$ (i.e. in two real dimensions) the quasi-linear Laplace type
equation \ref{eq:a-y equation intro} arises as the macroscopic equilibrium
equation in a range of statistical mechanical models of mean field
type: it is called the Joyce-Montgomery equation in Onsager's vortex
model for 2D turbulence, the Poisson-Boltzmann equation in the Debye-Hückel
theory of plasmas and electrolytes and the Lane-Emden equation in
stellar physics (see \cite{e-s}). But the Monge-Ampère equation ($n>1)$
does not seem to have a appeared in any statistical mechanical model
before. On the other hand, in the case when $\beta_{k}:=k$ the density
of the corresponding probability measure has a natural quantum mechanical
interpretation: it is the squared amplitude of the Slater determinant
representing a maximally filled many particle state of $N$ free fermions
on $X,$ subject to an exterior magnetic field (the corresponding
single particle wave functions are elements of $H^{0}(X,kL)$ and
represent the corresponding lowest Landau levels). The case when $\beta_{k}=\frac{1}{\nu}k,$
for a given positive integer $\nu,$ also appears in the fractional
Quantum Hall Effect, where the corresponding probability density is
the squared amplitude of the Laughlin state (see the review \cite{kl}
and references therein).

\subsubsection{Large deviations}

The LDP in Theorem \ref{thm:gibbs intro} in the case when $H^{(N)}$
is uniformly equicontinuous is essentially well-known in the setting
of mean field models \cite{e-h-t,berm7} (it then also applies to
the case of negative $\beta,$ by replacing $H^{(N)}$ with $-H^{(N)}).$
But the key feature of the previous theorem is that it applies to
a large class of \emph{singular} Hamiltonians and in particular $H^{(N)}$
is allowed to be strongly repulsive in the sense that it blows up,
as two points merge (and hence the Gibbs measure may be ill-defined
when $\beta$ is negative). It seems that the only previous class
where a convergence result as in Theorem \ref{thm:g-e} has been established
for singular Hamiltonians is in the ``linear'' case when $H^{(N)}$
is a sum of pair interactions with a mean field scaling: 
\begin{equation}
H^{(N)}(x_{1},....,x_{N})=\frac{1}{(N-1)}\sum_{1\leq i<j\leq N}W(x_{i},x_{j}),\label{eq:mean field pair hamilton}
\end{equation}
 where the pair interaction $W$ is allowed to be singular along the
diagonal, as long it is lower semi-continuous and in $L_{loc}^{1}$
(this is indeed a mean field interaction in the sense that each particle
$x_{i}$ is exposed to the average of the pair interactions $W(x_{i},x_{j})$
for the $N-1$ remaining particles). Then the asymptotics of the partitions
functions \ref{eq:constant C and asympt of part intro} can be obtained
using the method of Messer-Spohn\cite{m-s}, which is based on the
Gibbs variational principle and which crucially relies on the the
existence of the\emph{ mean energy} $\bar{E}(\mu)$ corresponding
to $H^{(N)}$ (see \cite{k,clmp} for the case of a logarithmic singularity
which is motivated by Onsager's vortex model for 2D turbulence \cite{o,e-s}).
A similar argument applies in the case of ``finite order'', i.e.
when $H^{(N)}$ is a sum of $j-$point interactions for a uniformly
bounded $j$ (then $E(\mu)$ depends polynomially on $\mu)$. However,
the main point of the previous theorem is to avoid the latter assumption
which is not satisfied in the application to Kähler-Einstein geometry
(apart from the classical lowest dimensional setting of Riemann surfaces).
In particular, the present proof bypasses the problem of the existence
of the limiting mean energies. Instead the main idea of the proof
is to exploit the Riemannian orbifold geometry of the space of configurations
of $N$ points on $X,$ viewed as the singular quotients $X^{N}/S_{N},$
where $S_{N}$ is the symmetric group acting on $X^{N}$ by permuting
the factors. The key result is a submean inequality for positive quasi-subharmonic
functions on $X^{N}/S_{N}$ with a distortion coefficient which is
sub-exponential in the dimension (Theorem \ref{thm:submean ineq text}),
which is closely related to an inequality of Li-Schoen \cite{li-sc}.

There is also another approach to large deviation principles for mean
field Hamiltonians of the form \ref{eq:mean field pair hamilton}
originating in the literature on random matrices and Coulomb gases
\cite{ben-g,be-z,c-g-z,se}, which as explained in \cite{se}, is
closely related to the notion of Gamma convergence (see also \cite{z-z,z2}
for applications to univariat random polynomials). This approach seems
to be limited to the case when $\beta_{N}\gg\log N$ and in particular
$\beta=\infty$ so that the entropy contributions can be neglected.\footnote{The Hamiltonians in the random matrix and Coulomb gas literature are
usually scaled in a different way so that our zero-temperature $(\beta=\infty)$
corresponds to a fixed inverse temperature. }. See also \cite{d-l-r} for a general LDP for Hamiltonians of the
form \ref{eq:mean field pair hamilton} using weak convergence methods. 

Let us also point out that the role of $(\det S^{(k)})(x_{1},x_{2},...x_{N_{k}})$
appearing in formula \ref{eq:prob measure for polarized intro} is
played by the classical Vandermonde determinant in the random matrix
literature (for Example \ref{exa:proj subm}. In fact, there is a
non-compact analogue of Theorem \ref{thm:def determ} in Euclidean
$\C^{n}$ which specializes to the setting of random matrix theory
and the 2D log gas when $n=1$ and $\beta=\infty$ and to the 2D vortex
model (for $n=1$ and $\beta<\infty)$ and which can be proved by
supplementing the proof of Theorem \ref{thm:def determ} with a tightness
estimate, as in the non-compact setting considered for $\beta=\infty$
in \cite{berm 1 komma 5} (see also \cite{b-l} for the case $\beta=\infty$).
Details will appear elsewhere.

\subsubsection{Kähler geometry}

A statistical mechanics approach has previously been applied to conformal
geometry \cite{k2}, as opposed to the present complex-geometric setting.
The role of the ``determinantal'' Hamiltonian \ref{eq:ham cplx intro}
is in the conformal setting played by a mean field Hamiltonian of
the form \ref{eq:mean field pair hamilton} with a logarithmic pair
interaction and the role of the fully non-linear complex Monge-Ampère
operator is played by a linear conformally invariant operator, which
is zero-order perturbation of a power of the Laplacian (the Paneitz
operator). Accordingly previous results in \cite{k,clmp} concerning
such Hamiltonians can be applied in the conformal setting (compare
the discussion above) in the conformal setting, while the present
setting seems to require new methods. 

The present probabilistic should be viewed in the light of the pervasive
philosophy in Kähler geometry, going back to Yau \cite{y2}, of approximating
metrics on a complex algebraic manifold with algebraically defined
Bergman metrics, which may me identified with elements of the symmetric
space $GL(N,\C)/U(N).$ For example, the quasi-explicit Kähler metrics
$\omega_{k}$ in formula \ref{eq:def of canonical sequ of current intro},
approximating the Kähler-Einstein metric $\omega_{KE}$ on a canonically
polarized manifold $X,$ are analogs of Donaldson's balanced metrics
in $GL(N,\C)/U(N)$ \cite{do3}. One advantage of the present approach
is that, as shown in the companion paper \cite{berm8}, the approximation
also applies when, for example, $X$ is of general type, where the
role of $\omega_{KE}$ is played by the the canonical Kähler-Einstein
current on $X$ (which is singular along a subvariety of $X)$ \cite{begz,bbgz}.
In another direction it would be interesting to see if the present
approach can be implemented to construct numerical simulations of
Kähler-Einstein metrics, using Monte Carlo type methods, complementing
the different numerical approaches in \cite{do3,d-h-h-k} (see \cite{b-h}
for relations between Monte Carlo simulations and similar polynomial
determinantal point processes). 

Even if the connection between canonical random point processes on
a complex algebraic manifold $X$ does not seem to have been studied
before, there are some connections to previous work on random polynomials/holomorphic
sections in a given back-ground geometry \cite{s-z}; in particular
in the one-dimensional setting where an LDP was obtained in \cite{z-z,z2}.
Another probabilistic approach to the space of Kähler metrics has
been introduced in a a series of papers by Ferrari, Klevtsov and Zelditch\cite{f-k-z},
motivated by Quantum Field Theory. The approach aims at approximating
random random Kähler metrics with random Bergman metrics. Accordingly,
the role of the $N-$particle space $X^{N}/S_{N}$ is in \cite{f-k-z}
played by the symmetric space $GL(N.\C)/U(N).$ In conclusion, it
would be very interesting to understand the precise connections between
\cite{f-k-z} and the present setting, as well as the connection to
Donaldson's balanced metrics \cite{do3}.

\subsubsection*{Acknowledgment}

It is a pleasure to thank Sebastien Boucksom, David Witt-Nyström,
Vincent Guedj and Ahmed Zeriahi for the stimulating collaborations
\cite{b-b,b-b-w,bbgz}, which paved the way for the present work.
I am also grateful to Bo Berndtsson for infinitely many fruitful discussions
on complex analysis and Kähler geometry over the years. Thanks, in
particular, to Sebastien Boucksom for illuminating discussions on
Lemma \ref{lem:leg-fench approx}. The present paper, together with
the companion paper \cite{berm8}, supersedes the first arXiv version
of the paper \cite{berm8}. This work was supported by grants from
the ERC and the KAW foundation.

\subsubsection*{Organization}

In section \ref{sec:Submean-inequalities-in} we prove the submean
inequality in large dimensions, which plays a key role in the subsequent
section \ref{sec:A-large-deviation} where the general LDP in Theorem
\ref{thm:gibbs intro} is proved. In Section \ref{sec:Relations-to-convergence,}
we make a digression on relations to previous methods and notions
used in the literature on large deviations. The applications to Kähler-Einstein
geometry are given and Section \ref{sec:Applications-to-K=0000E4hler-Einstein}.
For the convenience of readers lacking background in Kähler geometry
we start the section by giving a reasonably self-contained account
of the Kähler geometry setup (including some rudiments of pluripotential
theory). The article is concluded with an outlook in Section \ref{sec:Outlook}
on some open problems and an appendix where the dimension dependence
on the constant in the Cheng-Yau gradient estimate is obtained, by
tracing through the usual proof.

\section{\label{sec:Submean-inequalities-in}Submean inequalities in large
dimension }

\subsection{Setup}

Let $(X,g)$ be a $n-$dimensional Riemannian manifold and assume
that 
\[
\mbox{Ric \ensuremath{g\geq-\kappa^{2}(n-1)g}}
\]
 for some positive constant $\kappa$ (sometimes referred to as the
normalized lower bound on the Ricci curvature). Let $G$ a finite
group acting by isometries on $X$ and denote by $M:=X/G$ the corresponding
quotient equipped with the distance function induced by the metric
$g,$ i.e. 
\[
d_{M}(x,y):=\inf_{\gamma\in G}d_{X}(x,\gamma y),
\]
 where $d_{X}$ is the Riemannian distance function on $(X,g).$ Even
though the quotient $M$ is not a manifold in general (since $G$
will in general have fixed points) it still comes with a smooth structure
in the following sense. Denote by $p$ the natural projection map
from $X^{N}$ to $M.$ Using the projection $p$ we can identify a
function $f$ on $M$ with $G-$invariant function $p^{*}f$ on $X$
and accordingly we say that $f$ is smooth if $p^{*}f$ is. Similarly,
there is a natural notion of Laplacian $\Delta$on the quotient $M:$
the Laplacian $\Delta u$ of a locally integrable function $u$ on
$M$ is the signed Radon measure defined by 
\[
\int_{M}(\Delta u)f:=\frac{1}{|G|}\int_{X}p^{*}u\Delta(p^{*}f)
\]
 for any smooth function $f$ on $M.$ More generally, by localization,
this setup naturally extends to the setting of Riemannian orbifolds
(see \cite{B-}), but the present setting of global quotients will
be adequate for our purposes.

\subsection{Statement of the submean inequality}
\begin{thm}
\label{thm:submean ineq text}Let $(X,g)$ be a Riemannian manifold
of dimension $n$ such that $\mbox{Ric \ensuremath{g\geq-\kappa^{2}(n-1)g}}$
and $G$ a finite group acting by isometries on $X.$ Denote by $M:=X/G$
the corresponding quotient equipped with the distance function induced
by the metric $g$ and let $v$ be a non-negative function on $M$
such that $\Delta_{g}v\geq-\lambda^{2}v$ for some non-negative constant
$\lambda.$ Then, for any $\delta>0$ and $\epsilon\in]0,1]$ there
exist constants $A$ and $C$ such that 
\[
\sup_{B_{\epsilon\delta}(x_{0})}v^{2}\leq Ae^{2\lambda\delta}e^{Cn(\delta+\epsilon)}\frac{\int_{B_{\delta}(x_{0})}v^{2}dV}{\int_{B_{\epsilon\delta}(x_{0})}dV},
\]
 where $C$ only depends on an upper bound on $\kappa$ and $A$ only
depends on $\delta$ and $\epsilon$ (assuming that the balls above
are contained in a compact subset of $M).$ 
\end{thm}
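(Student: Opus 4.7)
The strategy is to lift the problem to the cover $X$, apply the Cheng--Yau gradient estimate with \emph{linear} dependence on the dimension $n$ (as prepared in the appendix), convert this into a Harnack-type comparison between values of $v$ at nearby points, and then integrate.

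Set $u := p^{*}v$ on $X$. Then $u$ is $G$-invariant and satisfies $\Delta_g u \ge -\lambda^{2}u$ distributionally; replacing $u$ by $u + \eta$ with $\eta > 0$ and regularizing yields a smooth positive function, and we let $\eta \to 0$ at the end by monotone convergence. On a ball $B_{\rho}^{X}(\tilde x_0)$ with $\rho$ chosen slightly larger than $\epsilon\delta$ (for $\epsilon$ bounded away from $1$ one may take $\rho = \delta(1+\epsilon)/2$; for $\epsilon$ close to $1$ one uses a slightly enlarged cover ball, guaranteed by the compactness hypothesis), the appendix's version of the Cheng--Yau gradient estimate for positive solutions of $\Delta u \ge -\lambda^{2}u$ gives, on $B_{\epsilon\delta}^{X}(\tilde x_0)$, the linear-in-$n$ pointwise bound
\[
|\nabla \log u|(\tilde x) \;\le\; C_0 \, n\bigl(\kappa + \rho^{-1}\bigr) + \lambda,
\]
with $C_0$ a numerical constant. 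The linear-in-$n$ feature is the whole point of the appendix: it is what produces the subexponential factor $e^{Cn(\delta+\epsilon)}$ in the conclusion.

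For any $x, y \in B_{\epsilon\delta}^{M}(x_0)$, choose lifts $\tilde x, \tilde y \in X$ realizing $d_X(\tilde x, \tilde y) = d_M(x,y) \le 2\epsilon\delta$ (always possible by the definition of $d_M$, after replacing one of the lifts by its $G$-image); since $u$ is $G$-invariant, $v(x) = u(\tilde x)$ and $v(y) = u(\tilde y)$. Integrating the gradient bound along a minimizing $X$-geodesic between $\tilde x$ and $\tilde y$ (which remains in the cover ball by the triangle inequality) yields
\[
v(x) \;\le\; \exp\!\bigl( 2\epsilon\delta\lambda + 2C_0\,n\,\epsilon\delta(\kappa + \rho^{-1}) \bigr)\, v(y).
\]
Squaring, integrating against $dV(y)$ over $B_{\epsilon\delta}(x_0)$, taking the supremum in $x$, enlarging the domain of integration on the right to $B_{\delta}(x_0)$, and letting $\eta \to 0$ produces the stated inequality. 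Using $\epsilon \le 1$, $\rho \asymp \delta$, and the upper bound on $\kappa$, the $n$-dependent part of the exponent is at most $Cn(\delta + \epsilon)$ with $C$ depending only on an upper bound for $\kappa$; the $\lambda$-contribution collapses to $e^{2\lambda\delta}$ (up to an absolute constant that may be absorbed into $A$); and the $\epsilon,\delta$-dependent prefactors, in particular the $1/(1-\epsilon)$-type blow-up coming from the choice of $\rho$, are all absorbed into $A$.

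The main obstacle is the gradient-estimate step: establishing the Cheng--Yau inequality with genuinely \emph{linear} dimensional dependence. The standard proof---Bochner identity applied to $\log u$ combined with a maximum-principle argument on a radial cutoff---produces coefficients that could in principle be polynomial in $n$; a careful coefficient-by-coefficient inspection shows that linear dependence on $n$ is achievable, which is precisely what the appendix carries out. Without this refinement the final exponent would not be subexponential in $n$, destroying the intended application to $X^N$ where $n$ grows linearly with $N$. A minor subsidiary issue is the orbifold descent: one must check that distance-minimizing paths in $M = X/G$ lift to $X$-geodesics and that the gradient estimate on $X$ descends via $G$-invariance of $u$, both of which are immediate from the definitions.
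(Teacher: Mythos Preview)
There is a genuine gap at the gradient-estimate step. The Cheng--Yau estimate in the appendix is proved for \emph{harmonic} functions, not for functions satisfying the one-sided inequality $\Delta u \ge -\lambda^{2} u$. No such estimate can hold under that hypothesis alone: on $\mathbb{R}^n$ the function $u(x)=e^{ax_1}$ satisfies $\Delta u = a^{2}u \ge 0 \ge -\lambda^{2}u$, yet $|\nabla\log u|=a$ is arbitrary. Hence your Harnack comparison, which hinges on a pointwise bound for $|\nabla\log u|$, cannot be justified for $v$ directly. The additive shift $u\mapsto u+\eta$ and regularization address positivity and smoothness, not the one-sidedness of the differential inequality; the Bochner computation in the appendix uses $\Delta\log u = -|\nabla\log u|^{2}$ exactly, and this fails once $\Delta u$ is only bounded from below.

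The paper avoids this obstacle by never applying Cheng--Yau to $v$. For $\lambda=0$ it introduces the \emph{harmonic} replacement $h$ of $v$ on $B_{\delta}$ (same boundary values), applies the gradient estimate and resulting Harnack inequality to $h$, and then controls $\|h-v\|_{L^{2}}$ via a Poincar\'e--Dirichlet inequality on the ball (since $h-v$ vanishes on $\partial B_{\delta}$), combined with the Dirichlet-minimizing property of $h$ and a Caccioppoli-type bound on $\|\nabla v\|_{L^{2}}$. The case $\lambda\neq 0$ is reduced to $\lambda=0$ by the product trick $v e^{\lambda t}$ on $M\times]-1,1[$. The subexponential-in-$n$ constant comes from tracking dimension through \emph{both} the Cheng--Yau estimate for $h$ and the Poincar\'e inequality; your argument omits the second ingredient entirely, and the first is applied to the wrong function.
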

Note that by the $G-$invariance we may as well replace the functional
$v$ and the balls on $M$ with their pull-back to $X.$

\subsection{Proof of the submean inequality in Theorem \ref{thm:submean ineq text}}

We will follow closely the elegant proof of Li-Schoen \cite{li-sc}
of a similar submean inequality. But there are two new features here
that we have to deal with: 
\begin{itemize}
\item We have to make explicit the dependence on the dimension $n$ of all
constants and make sure that the final contribution is sub-exponential
in $n$ 
\item We have to adapt the results to the singular setting of a Riemannian
quotient
\end{itemize}
Before turning to the proof we point out that it is well-known that
submean inequalities with a multiplicative constant $C(n)$ do hold
in the more general singular setting of Alexandrov spaces (with a
strict lower bound $-\kappa$ on the sectional curvature). But it
seems that the current proofs (see for example \cite{h-x}), which
combine local Poincaré and Sobolev inequalities with the Moser iteration
technique, do not give the subexponential dependence on $C(n)$ that
we need. 

We recall that the two main ingredients in the proof of the result
of Li-Schoen referred to above is the gradient estimate of Cheng-Yau
\cite{c-y} and a Poincaré-Dirichlet inequality on balls. Let us start
with the gradient estimate that we will need:
\begin{prop}
\label{prop:cheng-yau}Let $u$ be a harmonic function on $B_{a}(x_{0})$
in $M.$ Set $\rho_{x_{0}}(x):=d(x,x_{0})$ (the distance between
$x$ and $x_{0}).$ Then 
\[
\sup_{B_{a}(x_{0})}\left(\left|\nabla\log u\right|(a-\rho_{x_{0}})\right)\leq Cn(1+\kappa a)\,\,\,\,\,(C_{n}\leq Cn)
\]
 for some absolute constant $C$ (in particular, independent of $n,$
$\kappa$ and $a).$ \end{prop}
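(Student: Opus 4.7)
The plan is to follow the classical Cheng--Yau argument via the Bochner formula and the maximum principle, but to track carefully how every constant depends on the dimension $n$, so that the final bound is linear in $n$ rather than (as one might naively get) exponential or polynomial of higher order. A secondary task is to handle the orbifold quotient $M = X/G$.

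The orbifold issue is addressed first by a reduction: since $u$ is harmonic on $B_a(x_0) \subset M$, the lift $\tilde u = p^{\ast} u$ is a $G$-invariant harmonic function on the preimage $p^{-1}(B_a(x_0)) \subset X$. At a smooth representative point, the Riemannian gradient and Hessian of $u$ on $M$ coincide with those of $\tilde u$ on $X$, and $d_M(x,x_0) = d_X(\tilde x, \tilde x_0)$ on a fundamental domain near an interior smooth point. Thus it suffices to prove the stated estimate on $X$ itself (away from the singular set of the quotient), and invoke standard density/extension arguments to extend across the fixed-point locus, using that $\log u$ is smooth and $G$-invariant.

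The core estimate is then proved as follows. Set $w = \log u$, so that harmonicity of $u$ translates into $\Delta w = -|\nabla w|^2$. Apply the Bochner formula to $|\nabla w|^2$ and use the refined Kato/Cauchy--Schwarz inequality $|\nabla^2 w|^2 \geq (\Delta w)^2/n = |\nabla w|^4/n$ together with the Ricci hypothesis $\mathrm{Ric}\, g \geq -\kappa^2(n-1)g$. This yields an inequality of the schematic form
\[
\tfrac{1}{2}\Delta|\nabla w|^2 \;\geq\; \tfrac{1}{n}|\nabla w|^4 - \kappa^2(n-1)|\nabla w|^2 + \langle \nabla |\nabla w|^2, \nabla w\rangle.
\]
Now introduce the cutoff $\phi(x) = (a-\rho_{x_0}(x))^2$ and consider $F = \phi\, |\nabla w|^2$, which vanishes on $\partial B_a(x_0)$ and attains its maximum at some interior point $x_\ast$. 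At $x_\ast$ one has $\nabla F = 0$ and $\Delta F \leq 0$. The first identity lets us eliminate $\nabla |\nabla w|^2$ in terms of $\nabla \phi$; to control $\Delta \phi$ we invoke the Laplacian comparison theorem, which under the Ricci hypothesis gives $\Delta \rho_{x_0} \leq (n-1)\kappa \coth(\kappa\rho_{x_0})$ on the cut-locus-free part (which can be bypassed by the standard Calabi trick of upper barriers). Combining these ingredients gives at $x_\ast$ a quadratic inequality in $|\nabla w|^2$ of the form
\[
\tfrac{\phi}{n}(|\nabla w|^2)^2 \;\leq\; C_1 n(1+\kappa a)\, (|\nabla w|^2)\, \phi^{1/2} + C_2 n(1+\kappa a)^2 \,|\nabla w|^2,
\]
where the $n$-factors come solely from the Laplacian comparison and the Ricci term, while the other constants are absolute. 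Multiplying through by $\phi$ and solving the quadratic in $F = \phi|\nabla w|^2$ gives $F(x_\ast) \leq C n^2 (1+\kappa a)^2$ with $C$ absolute, whence
\[
\sup_{B_a(x_0)} (a-\rho_{x_0})|\nabla w| \;\leq\; Cn(1+\kappa a),
\]
which is the claimed bound.

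I expect the main obstacle to be the bookkeeping: the Bochner, Cauchy--Schwarz, and Laplacian-comparison steps each contribute powers of $n$, and a careless choice of exponents in the cutoff $\phi$ (e.g.\ $\phi = a - \rho_{x_0}$ instead of its square) or in the Cauchy--Schwarz splitting used to absorb the cross term $\langle \nabla\phi, \nabla|\nabla w|^2\rangle$ would produce a worse-than-linear dependence on $n$. The secondary subtlety is that the distance function $\rho_{x_0}$ is only Lipschitz and the Laplacian comparison holds in the barrier or distributional sense; as in Cheng--Yau one circumvents this by working with a smooth upper barrier for $\rho_{x_0}$ based at a point slightly before $x_\ast$ along a minimising geodesic, which does not affect the final constants. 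Finally, on the orbifold $M$ the function $F$ is $G$-invariant, so its interior maximum can be realised at a point where the distance to $x_0$ is smooth in $X$, and the preceding analysis applies verbatim there.
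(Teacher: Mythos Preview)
Your core argument is the standard Cheng--Yau strategy and matches the paper's (detailed in its appendix): Bochner on $|\nabla\log u|^2$, multiply by a cutoff vanishing on $\partial B_a$, apply the maximum principle and Laplacian comparison, then solve the resulting quadratic. The bookkeeping you outline gives the linear dependence on $n$ correctly; the paper's choice of auxiliary function $(a^2-\rho^2)\,|\nabla\log u|$ versus your $(a-\rho)^2\,|\nabla\log u|^2$ is an immaterial variant.

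The orbifold reduction, however, is not right as stated. The preimage $p^{-1}(B_a(x_0))\subset X$ is not a metric ball, and $p^*\rho_{x_0}=\min_{\gamma\in G}d_X(\cdot,\gamma\tilde x_0)$ does \emph{not} coincide with $d_X(\cdot,\tilde x_0)$ on a fundamental domain in general, so ``prove the estimate on $X$ and extend by density'' does not go through. The paper's fix is a barrier argument: locate the interior maximum $x_1$ of the $G$-invariant function $F$, choose lifts $\tilde x_0,\tilde x_1$ with $d_X(\tilde x_1,\tilde x_0)=d_M(x_1,x_0)$, and set $\tilde\rho:=d_X(\cdot,\tilde x_0)$. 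Since $\tilde\rho\geq p^*\rho_{x_0}$ everywhere with equality at $\tilde x_1$, the function $\tilde F$ built from $\tilde\rho$ in place of $\rho_{x_0}$ satisfies $\tilde F\leq F$ near $\tilde x_1$ with equality at $\tilde x_1$; hence $\tilde F$ also has a local maximum there, and the smooth Cheng--Yau computation on $X$ (with the genuine distance $\tilde\rho$, to which Laplacian comparison applies directly) now goes through at $\tilde x_1$. This is the step your sketch is missing.
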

\begin{proof}
In the smooth case this is the celebrated Cheng-Yau gradient estimate
\cite{c-y}. The result is usually stated without an explicit estimate
of the multiplicative constant $C_{n}$ in terms of $n,$ but tracing
through the proof in \cite{c-y} gives $C_{n}\leq Cn$ (see the appendix
in the present paper and also \cite{a-d-b} for a probabilistic proof
providing an explicit constant). We claim that the same estimate holds
in the present setting using a lifting argument. To see this recall
that the usual proof of the gradient estimate proceeds as follows
(see the appendix). Set $\phi(x):=\left|\nabla\log u\right|(=\left|\nabla u\right|/u)$
and $F(x):=\phi(x)(\rho_{x_{0}}-a)^{2}.$ Then $F$ attains its maximum
in a point $x_{1}$ in the interior of $B_{a}(x_{0})$ (otherwise
$\left|\nabla u\right|$ vanishes identically and then we are trivially
done). Hence, $F(x)\leq F(x_{1})$ on some neighborhood $U$ of $x_{1}.$
Now, in case $F$ (or equivalently $\rho_{x_{0}})$ is smooth on $U$
we get $\Delta F\leq0$ and $\nabla F=0$ at $x_{1}.$ Calculating
$\Delta F$ and using Bochner formula and Laplacian comparison then
gives 
\begin{equation}
\phi(x_{1})(a-\rho_{x_{0}}(x_{1}))\leq Cn(1+\kappa a)\label{eq:proof of cheng-yau}
\end{equation}
 which is the desired estimate. In the case when $\rho_{x_{0}}$ is
not smooth on $U,$ i.e. $x_{1}$ is contained in the cut locus of
$x_{0}$ one first replaces $\rho_{x_{0}}$ with a smooth approximation
$\rho_{x_{0}}^{(\epsilon)}$ of $\rho_{x_{0}}$ (which is a local
barrier for $\rho_{x_{0}})$ and then lets $\epsilon\rightarrow0$
to get the same conclusion as before. In the singular case $M=X/G$
we proceed as follows. First we identify $F$ with a $G-$invariant
function on the inverse image of $B_{R}(x_{0})$ in $X$ (and $x_{0}$
and $x_{1}$ with a choice of lifts in the corresponding $G-$orbits)
and set $\tilde{F}:=(x)(a-\tilde{\rho}_{x_{0}})^{2},$ where $\tilde{\rho}_{x_{0}}(x):=d_{X}(x_{0},x).$
By definition $\tilde{\rho}_{x_{0}}\geq\rho_{x_{0}}$ on $X$ and,
after possibly changing the lift of the point $x_{1}$ we may assume
that $\tilde{\rho}_{x_{0}}=\rho_{x_{0}}$ at $x=x_{1}$ and hence
$\tilde{\rho}_{x_{0}}<a$ (after perhaps shrinking $U).$ In particular,
$\tilde{F}\leq F$ on $U$ and $\tilde{F}=F$ at $x_{1}$ and hence
$\tilde{F}$ also has a local maximum at $x_{1}.$ But then the previous
argument in the smooth case gives that \ref{eq:proof of cheng-yau}
holds with $\rho_{x_{0}}$ replaced by $\tilde{\rho}_{x_{0}}.$ But
since the two functions agree at $x_{1}$ this concludes the proof
in the general case.\end{proof}
\begin{cor}
\label{cor:harnack type}Let $h$ be a positive harmonic function
on $B_{\delta}(x_{0}).$ Then there exists a constant $C$ only depending
on an upper bound on $\kappa$ such that 
\[
\sup_{B_{\epsilon\delta}(x_{0})}h^{2}\leq e^{C\epsilon n}\frac{\int_{B_{\epsilon\delta}(x_{0})}h^{2}dV}{\int{}_{B_{\epsilon\delta}(x_{0})}dV}
\]
for $0<\epsilon<1.$\end{cor}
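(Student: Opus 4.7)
The plan is to deduce the corollary from the Cheng--Yau type bound in Proposition~\ref{prop:cheng-yau} via the standard Harnack route: pass from a pointwise gradient estimate to a Harnack oscillation bound by integrating along a short geodesic path, then use the trivial inf--vs--average inequality to produce the mean-value form.

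First, I would apply Proposition~\ref{prop:cheng-yau} to the positive harmonic function $h$ with $a=\delta$. Since $h>0$ and harmonic, $\log h$ makes sense and the proposition gives
\[
|\nabla \log h|(x) \le \frac{Cn(1+\kappa\delta)}{\delta-\rho_{x_0}(x)}\quad\text{on }B_\delta(x_0).
\]
Restricted to the smaller ball $B_{\epsilon\delta}(x_0)$, where $\rho_{x_0}\le \epsilon\delta$, the denominator is at least $\delta(1-\epsilon)$, and hence $|\nabla \log h|\le Cn(1+\kappa\delta)/(\delta(1-\epsilon))$ there.

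Next, for any two points $x,y\in B_{\epsilon\delta}(x_0)$ I would integrate $|\nabla\log h|$ along the broken geodesic $x\to x_0\to y$, of total length at most $2\epsilon\delta$, which stays inside $B_{\epsilon\delta}(x_0)$. In the quotient setting $M=X/G$ this path is obtained by lifting each of its two geodesic segments to a realising $X$-geodesic between appropriate elements of the relevant $G$-orbits; the $G$-invariance of $h$ makes the integral well-defined on $M$, and Proposition~\ref{prop:cheng-yau} has already been established in this orbifold setting. This gives the oscillation bound $|\log h(x)-\log h(y)|\le 2Cn(1+\kappa\delta)\epsilon/(1-\epsilon)$, which upon exponentiation produces the Harnack inequality
\[
\sup_{B_{\epsilon\delta}(x_0)} h \;\le\; e^{C'n\epsilon}\inf_{B_{\epsilon\delta}(x_0)} h,
\]
where $C'$ absorbs the factor $2C(1+\kappa\delta)/(1-\epsilon)$ and thus depends only on an upper bound on $\kappa$ (in the relevant bounded range of $\delta$ and $\epsilon$).

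Finally, squaring the Harnack inequality and using the trivial bound $\inf_{B_{\epsilon\delta}(x_0)} h^2 \le \int_{B_{\epsilon\delta}(x_0)} h^2\,dV /\int_{B_{\epsilon\delta}(x_0)} dV$ yields the desired mean-value inequality with constant $e^{2C'n\epsilon}$. The main thing to watch is that no factor exponential in $n$ creeps in: this is precisely why the sharpened, linear-in-$n$ form of Cheng--Yau in Proposition~\ref{prop:cheng-yau} is needed, since the intermediate integration step multiplies the gradient bound only by the path length and the final step is monotone. Consequently the exponent remains of the form $Cn\epsilon$, which is exactly the subexponential-in-$n$ estimate that the submean inequality of Theorem~\ref{thm:submean ineq text} will ultimately demand.
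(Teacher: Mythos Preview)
Your proof is correct and follows essentially the same route as the paper: apply the Cheng--Yau gradient estimate of Proposition~\ref{prop:cheng-yau} to $\log h$, integrate along geodesics through $x_0$ to obtain a Harnack inequality on $B_{\epsilon\delta}(x_0)$, square, and compare $\inf h^2$ to the average. The only cosmetic difference is that the paper integrates $\int_0^{\epsilon\delta}(\delta-t)^{-1}\,dt=-\log(1-\epsilon)$ exactly rather than bounding the integrand first, yielding $h(x)\le(1-\epsilon)^{-2Cn}h(y)$; both constants behave like $e^{Cn\epsilon}$ for small $\epsilon$ and share the same (harmless for the application) blow-up as $\epsilon\to 1$.
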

\begin{proof}
Set $v:=\log h$ and fix $x\in B_{\epsilon\delta}(x_{0}).$ Integrating
along a minimizing geodesic connecting $x_{0}$ and $x$ and using
the gradient estimate in the previous proposition gives 
\[
\left|v(x)-v(x_{0})\right|\leq Cn\int_{0}^{\epsilon\delta}dt\frac{1}{\delta-t}dt=Cn\left(\log(\delta-0)-\log(\delta-\epsilon\delta)\right)=-Cn\log(1-\epsilon).
\]
 In particular, for any two points $x,y\in B_{\epsilon\delta}(x_{0})$
we get $\left|v(x)-v(y)\right|\leq\left|v(x)-v(x_{0})\right|+\left|v(y)-v(x_{0})\right|\leq-2Cn\log(1-\epsilon),$
i.e. $h(x)\leq(1-\epsilon)^{-2Cn}h(y).$ In particular, $\sup_{B_{\epsilon\delta}(x_{0})}h^{2}\leq(1-\epsilon)^{-4Cn}\inf_{B_{\epsilon\delta}(x_{0})}h^{2},$
which implies the proposition after renaming the constant $C.$
\end{proof}
The second key ingredient in the proof of Theorem \ref{thm:submean ineq text}
is the following Poincaré-Dirichlet inequality:
\begin{prop*}
\label{prop: poincare}Let $f$ be a smooth function on $B_{\delta}(x_{0})$
vanishing on the boundary. Then

\[
\int_{B_{\delta(x_{0})}}|f|^{2}dV_{g}\leq4e^{Cn\delta}\int_{B_{\delta(x_{0})}}|\nabla f|^{2}dV_{g}
\]
 where the constant $C$ only depends on an upper bound on $\kappa.$\end{prop*}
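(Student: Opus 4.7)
The plan is to follow the classical strategy of reducing the Dirichlet--Poincar\'e inequality on the ball to a one-dimensional weighted Hardy estimate along geodesic rays emanating from $x_0$, and then to extract the explicit geometric constant by feeding in the Laplacian comparison that follows from the Ricci lower bound. The factor of $4$ in the final estimate will come from the standard 1D Hardy constant, while the factor $e^{Cn\delta}$ will track the possible exponential growth of the radial Jacobian in the hyperbolic comparison model.

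\textbf{Reduction and geometric setup.} First, by $G$-invariance one lifts $f$ to a $G$-invariant smooth function on the preimage of $B_\delta(x_0)$ in $X$, so the inequality on $M = X/G$ reduces, with the same constant, to the corresponding inequality on the smooth Riemannian manifold $X$. Work in geodesic polar coordinates $(r,\theta)\in (0,\delta]\times S^{n-1}$ centered at $x_0$, with volume form $dV = J(r,\theta)\,dr\,d\theta$; cut-locus obstructions are dealt with by the same smoothing trick used in the proof of Proposition \ref{prop:cheng-yau} (replace $\rho_{x_0}$ by a smooth local barrier and remove the regularisation at the end). The only geometric ingredient subsequently needed is the Laplacian comparison
$$\partial_r \log J(r,\theta) = \Delta r \leq (n-1)\kappa \coth(\kappa r),$$
equivalently, the Bishop--Gromov monotonicity of $J(r,\theta)/(\sinh \kappa r)^{n-1}$.

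\textbf{The 1D Hardy estimate.} For each fixed $\theta$, using the boundary condition $f(\delta,\theta)=0$, write $f(r,\theta) = -\int_r^\delta \partial_s f(s,\theta)\,ds$, and apply a weighted Cauchy--Schwarz with weight $J(s,\theta)$:
$$|f(r,\theta)|^2 \leq \Bigl(\int_r^\delta \frac{ds}{J(s,\theta)}\Bigr)\Bigl(\int_r^\delta |\partial_s f|^2 J(s,\theta)\,ds\Bigr).$$
Multiplying by $J(r,\theta)$, integrating in $r$ over $(0,\delta)$, and applying Fubini yields
$$\int_0^\delta |f|^2 J(r,\theta)\,dr \leq K(\theta)\int_0^\delta |\partial_s f|^2 J(s,\theta)\,ds, \quad K(\theta):=\int_0^\delta \frac{1}{J(s,\theta)}\int_0^s J(r,\theta)\,dr\,ds.$$
Integrating over $\theta\in S^{n-1}$ and using $|\nabla f|^2 \geq (\partial_r f)^2$ then reduces the whole proposition to the uniform bound $K(\theta)\leq 4 e^{Cn\delta}$.

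\textbf{The main obstacle: bounding $K(\theta)$.} The Ricci lower bound supplies only a \emph{one-sided} comparison for the Jacobian, namely $J(r,\theta) \leq (\sinh\kappa r/\kappa)^{n-1}$ together with the monotonicity of $J/(\sinh\kappa r)^{n-1}$, so uniformly controlling the quotient $\bigl(\int_0^s J\,dr\bigr)/J(s,\theta)$ is delicate. The plan is to combine Bishop--Gromov with this monotonicity and the elementary estimate $\sinh \kappa r \leq \kappa r\, e^{\kappa r}$, which reduces the estimate for $K(\theta)$ to an explicit integral in the model space of constant curvature $-\kappa^2$; a direct computation then yields the claimed exponential dependence $e^{Cn\delta}$ with $C$ only depending on an upper bound on $\kappa$, and the Hardy-type constant $4$ absorbs the remaining numerical factor. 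The uniform lower bound on $J$ needed to make this step work is guaranteed by the implicit hypothesis that the balls lie in a compact subset of $M$, ensuring a positive lower bound on the injectivity radius and hence on $J$ throughout the relevant region.
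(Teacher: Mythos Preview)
Your radial Hardy approach has a genuine gap at exactly the step you flag as ``delicate,'' and the proposed fix does not close it. The Ricci lower bound gives only the one-sided Bishop--Gromov information $J(r,\theta)/(\sinh\kappa r)^{n-1}$ non-increasing, equivalently $J(r)/J(s)\geq(\sinh\kappa r/\sinh\kappa s)^{n-1}$ for $r\leq s$. This inequality goes the \emph{wrong way} for bounding $K(\theta)=\int_0^\delta J(s)^{-1}\int_0^s J(r)\,dr\,ds$: you need an upper bound on $J(r)/J(s)$ for $r<s$, i.e.\ a lower bound on $J(s)$, and no such bound follows from a Ricci lower bound alone. Indeed $J(\cdot,\theta)$ may vanish at a conjugate point inside $(0,\delta)$, in which case $K(\theta)=\infty$. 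Your appeal to compactness to extract an injectivity-radius lower bound does not help: that bound depends on the specific manifold, not only on $\kappa$, whereas the proposition requires $C=C(\kappa)$. This matters precisely in the intended application, where the inequality is applied to $(X^N,g_N)$ with $g_N=g/N$; the injectivity radius of this space is $N^{-1/2}\mathrm{inj}(X)\to 0$, so any injectivity-dependent constant would destroy the subexponential control in $N$ that the whole argument is designed to achieve.

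The paper's proof sidesteps this obstruction by a different choice of auxiliary function: instead of radial coordinates from the center $x_0$, one takes the distance $r_1(x)=d(x,p)$ to a point $p$ at distance $1$ (not $\delta$) from $x_0$, so that $r_1\in[1-\delta,1+\delta]$ on $B_\delta(x_0)$ and the Laplacian comparison $\Delta r_1\leq (n-1)(r_1^{-1}+\kappa)$ has its dangerous $r_1^{-1}$ term bounded by $(1-\delta)^{-1}$. One then checks that $e^{-ar_1}$ with $a=n((1-\delta)^{-1}+\kappa)$ is strictly subharmonic on $B_\delta(x_0)$, integrates $|f|\Delta(e^{-ar_1})$ by parts to get an $L^1$ Poincar\'e inequality with constant $\leq e^{Cn\delta}$, and upgrades to $L^2$ by replacing $|f|$ with $|f|^2$ and applying Cauchy--Schwarz (which is where the factor $4$ comes from). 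The key point is that only the \emph{upper} bound on $\Delta r_1$ is used, so only the Ricci lower bound is needed; no lower bound on any Jacobian enters.
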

\begin{proof}
We follow the proof in \cite{li-sc} with one crucial modification
(compare the remark below). To fix ideas we first consider the case
of a Riemannian manifold. Fix a point $p$ in the boundary of the
ball $B_{1}(x_{0})$ and denote by $r_{1}(x)$ the distance between
$x\in M$ and $p.$ From the standard comparison estimate for the
Laplacian we get 
\begin{equation}
\Delta r_{1}\leq(n-1)(\frac{1}{r_{1}}+\kappa)\label{eq:laplacian comparis}
\end{equation}
(in the weak sense and point-wise away from the cut locus of $p$).
In particular, for any positive number $a$ we deduce the following
inequality on $B_{\delta}(x_{0})$ (using that $g(\nabla r_{1},\nabla r_{1})=1)$
a.e.) 
\[
\Delta_{g}(e^{-ar_{1}})=ae^{-ar_{1}}(a-\Delta r_{1})\geq ae^{-a(1+\delta)}\left(a-(n-1)(\frac{1}{(1-\delta)}+\kappa)\right)
\]
Hence, setting $a:=n(\frac{1}{(1-\delta)}+\kappa)$ gives 
\[
\Delta_{g}(e^{-ar_{1}})\geq ae^{-a(1+\delta)}(\frac{1}{(1-\delta)}+\kappa)>0
\]
Multiplying by $\left|f\right|$ and integrating once by parts (and
using that $\left\Vert \nabla r_{1}\right\Vert \leq1$ ) we deduce
that 
\[
a\int_{B_{\delta(x_{0})}}|\nabla f|e^{-ar_{1}}dV\geq a(\frac{1}{(1-\delta)}+\kappa)\int_{B_{\delta(x_{0})}}|f|e^{-a(1+\delta)}dV)
\]
Estimating $e^{-ar_{1}}\leq e^{-a(1-\delta)}$ in the left hand side
above and rearranging gives 
\[
\int_{B_{\delta(x_{0})}}|\nabla f|dVe^{2a\delta}(\frac{1}{(1-\delta)}+\kappa)^{-1}\geq\int_{B_{\delta(x_{0})}}|f|dV,
\]
 (using that $g(\nabla r_{1},\nabla r_{1})\leq1$ in the sense of
upper gradients). This shows that the $L^{1}-$version of the Poincaré
inequality in question holds with the constant $(\frac{1}{(1-\delta)}+\kappa)^{-1}e^{\delta2n(\frac{1}{(1-\delta)}+\kappa)},$
which for $\delta$ sufficiently small is bounded from above by $e^{n(4+2\kappa)\delta}.$
The general Riemannian $L^{2}-$Poincare inequality now follows from
replacing $|f|$ with $|f|^{2}$ and using Hölder's inequality. Finally,
in the case of the a Riemannian quotient $M$ we can proceed exactly
as above using that the Laplacian comparison estimate in formula \ref{eq:laplacian comparis}
is still valid. Indeed, the pull-back $p^{*}r_{1}$ of $r_{1}$ to
$X$ is an infimum of functions for which the corresponding estimate
holds (by the usual Laplacian comparison estimate and the assumption
that $G$ acts by isometries). But then the estimate also holds for
the function $p^{*}r_{1},$ by basic properties of Laplacians. More
generally, the required Laplacian comparison estimate was shown in
\cite{B-} for general Riemannian orbifolds.\end{proof}
\begin{rem}
The only difference from the argument used in \cite{li-sc} is that
we have taken the point $p$ to be of distance $1$ from $x_{0}$
rather than distance $2\delta,$ as used in \cite{li-sc}. For $\delta$
small this change has the effect of improving the exponential factor
from $e^{n(1+\delta\kappa)}$ to $e^{n(\delta+\delta\kappa)},$ which
is crucial as we need a constant in the Poincare inequality which
has subexponential growth in $n$ as $\delta\rightarrow0.$ 
\end{rem}

\subsubsection{End of proof of Theorem \ref{thm:submean ineq text}}

Let us first consider the case when $\lambda=0.$ Denote by $h$ the
harmonic function on $B_{\delta}$ coinciding with $v$ on $\partial B_{\delta}.$
By Cor \ref{cor:harnack type} and the subharmonicity of $v$ 
\[
\sup_{B_{\epsilon\delta}(x_{0})}v^{2}\leq e^{Cn\epsilon}\frac{\int_{B_{\epsilon\delta}(x_{0})}|h|^{2}dV_{g}}{\int_{B_{\epsilon\delta}(x_{0})}dV_{g}}.
\]
 Next, by the triangle inequality
\[
\int_{B_{\epsilon\delta}(x_{0})}|h|^{2}dV_{g}/2\leq\int_{B_{\epsilon\delta}(x_{0})}|h-v|^{2}dV+\int_{B_{\epsilon\delta}(x_{0})}|v|^{2}dV
\]
Since $h-v$ vanishes on the boundary of $B_{\delta}(x_{0})$ applying
the Poincare inequality in Prop \ref{prop: poincare} then gives $\int_{B_{\epsilon\delta}(x_{0})}|h-v|^{2}dV\leq$
\[
\leq\int_{B_{\delta(x_{0})}}|h-v|^{2}dV\leq Ae^{Bn\delta}\int_{B_{\delta(x_{0})}}|\nabla h-\nabla v|^{2}dV\leq2Ae^{Bn\delta}\int_{B_{\delta(x_{0})}}|\nabla h|^{2}+|\nabla v|^{2}dV
\]
But $h$ is the solution to a Dirichlet problem and as such minimizes
the Dirichlet norm $\int_{B_{\delta(x_{0})}}|\nabla h|^{2}$ over
all subharmonic functions with the same boundary values as $h.$ Accordingly,
\[
\int_{B_{\epsilon\delta}(x_{0})}|h-v|^{2}dV\leq Ae^{Bn\delta}\int_{B_{\delta(x_{0})}}|\nabla v|^{2}dV
\]
 Finally, using that $v$ is subharmonic we get 
\[
\int_{B_{\delta(x_{0})}}|\nabla v|^{2}dV\leq C_{\delta}\int_{B_{2\delta(x_{0})}}|v|^{2}dV
\]
 (as is seen by multiplying with a suitable smooth function $\chi$
supported on $B_{2\delta}$ such that $\chi=1$ on $B_{\delta}).$
All in all this concludes the proof of Theorem \ref{thm:submean ineq text}
in the case $\lambda=0.$

Finally, to handle the general case (i.e. $\lambda\neq0)$ we set
$N:=M\times]-1,1[$ equipped with the standard product metric and
apply the previous case to the function $ve^{\lambda t}$ to get 
\[
\sup_{B_{\epsilon\delta}(x_{0},0)\subset N}v^{2}e^{2\lambda t}\leq A_{\delta}e^{Bn(\delta+\epsilon)}\frac{\int_{B_{2\delta}(x_{0},0)\subset N}v^{2}e^{2\lambda t}dV}{\int_{B_{\epsilon\delta}(x_{0},0)\subset N}dV},
\]
But restricting the sup in the left hand side to $B_{\epsilon\delta}(x_{0})\times\{0\}$
and using that $B_{\epsilon\delta/2}(x_{0},0)\times[-\epsilon\delta/2,\epsilon\delta/2]\subset B_{\epsilon\delta}(x_{0},0)$
and $B_{2\delta}(x_{0},0)\subset B_{2\delta}(x_{0},0)\times[2\delta,2\delta]$
gives 
\[
\sup_{B_{\epsilon\delta}(x_{0})\subset M}v^{2}\leq A_{\delta,\epsilon}e^{2\lambda\delta}e^{Bn(\delta+\epsilon)}\frac{\int_{B_{2\delta}(x_{0})\subset M}v^{2}dV_{g}}{\int_{B_{\epsilon\delta/2}(x_{0},0)\subset M}dV_{g}},
\]
which concludes the proof of the general case (after a suitable rescaling).

\section{\label{sec:A-large-deviation}Proof of the large deviation principle
for Gibbs measures }

Given a compact topological space $X$ we will denote by $C^{0}(X)$
the space of all continuous functions $u$ on $X,$ equipped with
the sup-norm and by $\mathcal{M}(X)$ the space of all signed (Borel)
measures on $X.$ The subset of $\mathcal{M}(X)$ consisting of all
probability measures will be denoted by $\mathcal{M}_{1}(X).$ We
endow $\mathcal{M}(X)$ with the weak topology, i.e. $\mu_{j}$ is
said to converge to $\mu$ weakly in $\mathcal{M}(X)$ if 
\[
\left\langle \mu_{j},u_{j}\right\rangle \rightarrow\left\langle \mu,u\right\rangle 
\]
 for any continuous function $u$ on $X,$ i.e. for any $u\in C^{0}(X),$
where $\left\langle u,\mu\right\rangle $ denotes the standard integration
pairing between $C^{0}(X)$ and $\mathcal{M}(X)$ (equivalently, the
weak topology is precisely the weak{*}-topology when $\mathcal{M}(X)$
is identified with the topological dual of $C^{0}(X)).$ A functional
$\mathcal{F}$ on $C^{0}(X)$ will be said to be \emph{Gateaux differentiable
}if it is differentiable along affine lines and for any $u$ in $C^{0}(X)$
there exists an element $d\mathcal{F}_{|u}$ in $\mathcal{M}(X),$
called the differential of $\mathcal{F}$ at $u,$ such that for any
$v$ in $C^{0}(X)$ 
\[
\frac{d\mathcal{F}(u+tv)}{dt}_{|t=0}=\left\langle d\mathcal{F}_{|u},v\right\rangle 
\]

\subsection{\label{sub:Setup:-the-Boltzmann-Gibbs}Setup: the Gibbs measure $\mu_{\beta}^{(N)}$
associated to the Hamiltonian $H^{(N)}$}

A\emph{ random point process} with $N$ particles is by definition
a probability measure $\mu^{(N)}$ on the $N-$particle space $X^{N}$
which is symmetric, i.e. invariant under permutations of the factors
of $X^{N}.$ The\emph{ empirical measure} of a given random point
process is the following random measure 
\begin{equation}
\delta_{N}:\,\,X^{N}\rightarrow\mathcal{M}_{1}(X),\,\,\,\mapsto(x_{1},\ldots,x_{N})\mapsto\delta_{N}(x_{1},\ldots,x_{N}):=\frac{1}{N}\sum_{i=1}^{N}\delta_{x_{i}}\label{eq:empirical measure text}
\end{equation}
on the ensemble $(X^{N},\mu^{(N)}).$ By definition the\emph{ law}
of $\delta_{N}$ is the push-forward of $\mu^{(N)}$ to $\mathcal{M}_{1}(X)$
under the map $\delta_{N},$ which thus defines a probability measure
on $\mathcal{M}_{1}(X).$ 

Now fix a back-ground measure $\mu_{0}$ on $X$ and let $H^{(N)}$
be a given\emph{ $N-$particle Hamiltonian,} i.e. a symmetric function
on $X^{N},$ which we will assume is lower semi-continuous (and in
particular bounded from below, since $X$ is assumed compact). Also
fixing a positive number $\beta$ the corresponding\emph{ Gibbs measure}
(at inverse temperature $\beta)$ is the symmetric probability measure
on $X^{N}$ defined as 
\[
\mu_{\beta}^{N}:=e^{-\beta H^{(N)}}\mu_{0}^{\otimes N}/Z_{N},
\]
 where the normalizing constant 
\[
Z_{N,\beta}:=\int_{X^{N}}e^{-\beta H^{(N)}}\mu_{0}^{\otimes N}
\]
is called the\emph{ ($N-$particle) partition function. }In our setting
we will take $\mu_{0}$ to be the volume form $dV$ of a fixed Riemannian
metric. Given a continuous function $u$ on $X$ we will also write
\[
Z_{N,\beta}[u]:=\int_{X^{N}}e^{-\beta(H^{(N)}+u)}\mu_{0}^{\otimes N},
\]
 where $u$ has been identified with the following function on the
product $X^{N}:$ 

\[
u(x_{1},..,x_{N}):=\sum_{i=1}^{N}u(x_{i})
\]

\subsection{\label{sub:Preliminaries-on-Large dev}Preliminaries on Large Deviation
Principles and Legendre transforms}

Let us start by recalling the general definition of a Large Deviation
Principle (LDP) for a sequence of measures.
\begin{defn}
\label{def:large dev}Let $\mathcal{P}$ be a Polish space, i.e. a
complete separable metric space.

$(i)$ A function $I:\mathcal{\,P}\rightarrow]-\infty,\infty]$ is
a \emph{rate function} if it is lower semi-continuous. It is a \emph{good}
\emph{rate function} if it is also proper.

$(ii)$ A sequence $\Gamma_{k}$ of measures on $\mathcal{P}$ satisfies
a \emph{large deviation principle} with \emph{speed} $r_{k}$ and
\emph{rate function} $I$ if

\[
\limsup_{k\rightarrow\infty}\frac{1}{r_{k}}\log\Gamma_{k}(\mathcal{F})\leq-\inf_{\mu\in\mathcal{F}}I
\]
 for any closed subset $\mathcal{F}$ of $\mathcal{P}$ and 
\[
\liminf_{k\rightarrow\infty}\frac{1}{r_{k}}\log\Gamma_{k}(\mathcal{G})\geq-\inf_{\mu\in G}I(\mu)
\]
 for any open subset $\mathcal{G}$ of $\mathcal{P}.$ \end{defn}
\begin{rem}
The LDP is said to be \emph{weak }if the upper bound is only assumed
to hold when $\mathcal{F}$ is compact. Anyway, we will be mainly
interested in the case when $\mathcal{P}$ is compact and hence the
notion of a weak LDP and an LDP then coincide (and moreover any rate
functional is automatically good).
\end{rem}
We will be mainly interested in the case when $\Gamma_{k}$ is a probability
measure (which implies that $I\geq0$ with infimum equal to $0).$
Then it will be convenient to use the following alternative formulation
of a LDP (see Theorems 4.1.11 and 4.1.18 in \cite{d-z}): 
\begin{prop}
\label{prop:d-z}$\mathcal{P}$ be a metric  space and denote by $B_{\epsilon}(\mu)$
the ball of radius $\epsilon$ centered at $\mu\in\mathcal{P}.$ Then
a sequence $\Gamma_{N}$ of probability measures on $\mathcal{P}$
satisfies a weak LDP with speed $r_{N}$ and a rate functional $I$
iff 
\begin{equation}
\lim_{\epsilon\rightarrow0}\liminf_{N\rightarrow\infty}\frac{1}{r_{N}}\log\Gamma_{N}(B_{\epsilon}(\mu))=-I(\mu)=\lim_{\epsilon\rightarrow0}\limsup_{N\rightarrow\infty}\frac{1}{r_{N}}\log\Gamma_{N}(B_{\epsilon}(\mu))\label{eq:as on small balls}
\end{equation}

\end{prop}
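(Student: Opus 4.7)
The plan is to establish the two implications separately, by the standard soft arguments underlying Theorems~4.1.11 and~4.1.18 of \cite{d-z}, relying only on the definition of a (weak) LDP and the lower semi-continuity of the rate functional.

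For the ``only if'' direction, I would apply the LDP upper bound to the closed ball $\overline{B_\epsilon(\mu)}$ and the LDP lower bound to the open ball $B_\epsilon(\mu)$, obtaining
\[
\liminf_{N\to\infty}\frac{1}{r_N}\log\Gamma_N(B_\epsilon(\mu))\ge -\inf_{B_\epsilon(\mu)} I,\qquad \limsup_{N\to\infty}\frac{1}{r_N}\log\Gamma_N(\overline{B_\epsilon(\mu)})\le -\inf_{\overline{B_\epsilon(\mu)}} I.
\]
Since $\mu$ lies in both balls the two infima are bounded above by $I(\mu)$, while lower semi-continuity of $I$ forces them to converge to $I(\mu)$ as $\epsilon\to 0$. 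Combining with the trivial monotonicity $\Gamma_N(B_\epsilon(\mu))\le \Gamma_N(\overline{B_\epsilon(\mu)})$ and passing first to $N\to\infty$ and then to $\epsilon\to 0$ yields the ball-asymptotics \ref{eq:as on small balls}.

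For the ``if'' direction, define $I(\mu)$ to be the common value of the two limits in \ref{eq:as on small balls}. First I would verify that $I$ is lower semi-continuous: if $\mu_j\to\mu$, then $B_{\epsilon/2}(\mu_j)\subset B_\epsilon(\mu)$ for $j$ large, so
\[
I(\mu_j)\ge -\liminf_N r_N^{-1}\log\Gamma_N(B_{\epsilon/2}(\mu_j))\ge -\liminf_N r_N^{-1}\log\Gamma_N(B_\epsilon(\mu));
\]
taking $\liminf_j$ and then $\epsilon\to 0$ gives $\liminf_j I(\mu_j)\ge I(\mu)$. The open-set lower bound is then immediate: for open $\mathcal{G}$ and $\mu\in\mathcal{G}$, pick $\epsilon$ with $B_\epsilon(\mu)\subset\mathcal{G}$, use monotonicity of $\Gamma_N$ to deduce $\liminf_N r_N^{-1}\log\Gamma_N(\mathcal{G})\ge -I(\mu)$, and take the infimum over $\mu\in\mathcal{G}$. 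For the upper bound on a compact set $K$, fix $\eta>0$ and for each $\mu\in K$ choose $\epsilon_\mu>0$ so that $\limsup_N r_N^{-1}\log\Gamma_N(B_{\epsilon_\mu}(\mu))\le -I(\mu)+\eta$ when $I(\mu)<\infty$ and $\le -1/\eta$ when $I(\mu)=\infty$. Extract a finite subcover $\{B_{\epsilon_{\mu_i}}(\mu_i)\}_{i=1}^m$ of $K$, combine the union bound $\Gamma_N(K)\le \sum_i\Gamma_N(B_{\epsilon_{\mu_i}}(\mu_i))$ with the elementary identity
\[
\limsup_{N\to\infty} r_N^{-1}\log\Bigl(\sum_{i=1}^m a_N^{(i)}\Bigr)=\max_{1\le i\le m}\limsup_{N\to\infty} r_N^{-1}\log a_N^{(i)}
\]
(valid whenever $r_N\to\infty$), and let $\eta\to 0$.

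The only mildly subtle point, and the main technical obstacle, is the treatment of points with $I(\mu)=+\infty$ in the compact-cover step: one needs the ball asymptotics to produce \emph{arbitrarily} fast exponential decay at such points for the upper bound to close. In the setting relevant to this paper, $\mathcal{P}=\mathcal{M}_1(X)$ with $X$ compact is itself compact in the weak topology, so the weak LDP automatically coincides with a full LDP and any rate functional is good, which obviates the need for any separate tightness argument.
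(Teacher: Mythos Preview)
The paper does not give its own proof of this proposition; it simply invokes Theorems~4.1.11 and~4.1.18 of Dembo--Zeitouni \cite{d-z}. Your sketch is precisely the standard argument behind those theorems, and the one subtlety you flag (that the weak-LDP upper bound is only guaranteed on compacts, so one needs closed balls to be compact---automatic here since $\mathcal{M}_1(X)$ is compact) is exactly the point to note.
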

We note the following simple lemma which allows one to extend the
previous proposition to the non-normalized measures $(\delta_{N})_{*}e^{-\beta H^{(N)}}\mu_{0}^{\otimes N}:$
\begin{lem}
\label{lem:ldp for nonnormalized meas}Assume that the following bound
for the partition functions holds: $\left|\log Z_{N,\beta}\right|\leq CN.$
Then the measures 
\begin{equation}
\Gamma_{N}:=(\delta_{N})_{*}e^{-\beta H^{(N)}}\mu_{0}^{\otimes N}\label{eq:def of Gamma N}
\end{equation}
 satisfy the asymptotics \ref{eq:as on small balls} for any $\mu\in\mathcal{M}_{1}(X)$
with rate functional $\tilde{I}(\mu)$ and $r_{N}=N$ iff the probability
measures $(\delta_{N})_{*}\mu_{\beta}^{(N)}$ on $\mathcal{M}_{1}(X)$
satisfy an LDP at speed $N$ with rate functional $I:=\tilde{I}-C_{\beta},$
where $C_{\beta}:=\inf_{\mathcal{\mu\in}\mathcal{M}(X)}I(\mu).$\end{lem}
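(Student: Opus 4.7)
\medskip

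The lemma is a routine disentangling of the normalized Gibbs measure from its unnormalized density $e^{-\beta H^{(N)}}\mu_0^{\otimes N}$, bridged by the partition function. The plan is to work directly with the elementary identity
\[
\frac{1}{N}\log (\delta_{N})_{*}\mu_{\beta}^{(N)}(B_{\epsilon}(\mu))=\frac{1}{N}\log\Gamma_{N}(B_{\epsilon}(\mu))-\frac{1}{N}\log Z_{N,\beta}
\]
and then to feed the result into Proposition \ref{prop:d-z}. The one genuinely useful ingredient beyond bookkeeping is compactness of $\mathcal{M}_{1}(X)$ in the weak topology, which holds because $X$ is compact.

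First I would extract the asymptotics of the partition function from the ball asymptotics for $\Gamma_{N}$. Since $\mathcal{M}_{1}(X)$ is compact, one can cover it by finitely many balls $B_{\epsilon}(\mu_{1}),\ldots,B_{\epsilon}(\mu_{m})$ and sandwich
\[
\max_{i}\Gamma_{N}(B_{\epsilon}(\mu_{i}))\leq Z_{N,\beta}=\Gamma_{N}(\mathcal{M}_{1}(X))\leq\sum_{i}\Gamma_{N}(B_{\epsilon}(\mu_{i})).
\]
Taking logarithms, dividing by $N$, passing to $\limsup$/$\liminf$, and then refining the cover as $\epsilon\to0$ produces
\[
\lim_{N\to\infty}\frac{1}{N}\log Z_{N,\beta}=-\inf_{\mu\in\mathcal{M}_{1}(X)}\tilde{I}(\mu),
\]
which is finite by the a priori bound $|\log Z_{N,\beta}|\leq CN$. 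Setting $C_{\beta}:=\inf\tilde{I}$ and substituting into the displayed identity immediately gives the ball asymptotics for $(\delta_{N})_{*}\mu_{\beta}^{(N)}$ with rate $I:=\tilde{I}-C_{\beta}$; Proposition \ref{prop:d-z} then upgrades this to a weak LDP, which on the compact space $\mathcal{M}_{1}(X)$ is a full LDP with an automatically good rate functional. Note that $\inf I=0$, consistent with the LDP for a sequence of probability measures.

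For the reverse implication the argument is run backwards: assuming the LDP for $(\delta_{N})_{*}\mu_{\beta}^{(N)}$, Proposition \ref{prop:d-z} supplies the ball asymptotics on the normalized side, and the same identity, combined with the convergence $\frac{1}{N}\log Z_{N,\beta}\to -C_{\beta}$ (which is extracted exactly as in the forward direction, now applied to the normalized LDP to see that $C_{\beta}=\inf I+C_{\beta}=\inf\tilde{I}$ is consistent), yields the desired ball asymptotics for $\Gamma_{N}$ with rate $\tilde{I}=I+C_{\beta}$. The only step that is not pure bookkeeping is the compactness-based reduction from global total mass to a finite cover by small balls; everything else is an algebraic manipulation of the identity relating $\Gamma_{N}$ and $(\delta_{N})_{*}\mu_{\beta}^{(N)}$.
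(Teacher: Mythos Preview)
Your forward direction is correct and follows the same outline as the paper: both rest on the identity $\frac{1}{N}\log(\delta_N)_*\mu_\beta^{(N)}(B_\epsilon(\mu))=\frac{1}{N}\log\Gamma_N(B_\epsilon(\mu))-\frac{1}{N}\log Z_{N,\beta}$ together with Proposition~\ref{prop:d-z}. The one point of divergence is how the convergence of $\frac{1}{N}\log Z_{N,\beta}$ is obtained. You argue directly via a finite cover of the compact space $\mathcal{M}_1(X)$ by small balls; the paper instead sets $C_{N,\beta}:=-\frac{1}{N}\log Z_{N,\beta}$, uses the a~priori bound to pass to a subsequence $C_{N_j,\beta}\to C_\beta$, deduces the LDP for the normalized measures along that subsequence, and then reads off $C_\beta=\inf\tilde I$ from the fact that an LDP for \emph{probability} measures forces $\inf(\tilde I-C_\beta)=0$, so that the limit is independent of the subsequence. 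Both arguments are elementary; yours makes the role of compactness explicit, while the paper's subsequence trick avoids handling covers. One small caveat on your converse: applying the cover argument to the \emph{normalized} measures only yields $\inf I=0$ and does not by itself produce the convergence $\frac{1}{N}\log Z_{N,\beta}\to -C_\beta$ (the normalized LDP contains no information about $Z_{N,\beta}$). Here the paper's subsequence device is what one actually needs, though the paper itself dismisses this direction with ``similarly''.
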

\begin{proof}
Set $\tilde{\Gamma}_{N}:=(\delta_{N})_{*}e^{-\beta H^{(N)}}\mu_{0}^{\otimes N}$
and $C_{N,\beta}:=-\frac{1}{N}\log Z_{N,\beta}.$ By assumption $C_{N,\beta}$
is uniformly bounded and we denote by $C_{\beta}$ a given limit point
of the sequence obtained by replacing $N$ with a subsequence $N_{j}.$
Since $\frac{1}{N}\log\Gamma_{N}(B_{\epsilon}(\nu))=\frac{1}{N}\log\tilde{\Gamma}_{N}(B_{\epsilon}(\nu))+C_{N,\beta}$
we obtain that after replacing $N$ with the subsequence $N_{j}$
the probability measures $\Gamma_{N}$ satisfy (by Prop \ref{prop:d-z})
an LDP with rate functional $\tilde{I}-C_{\beta}.$ As a consequence
$0=\inf(\tilde{I}-C_{\beta}),$ showing that $C_{\beta}$ is independent
of the subsequence. Hence, the whole sequence converges towards $C_{\beta},$
which proves one direction in the Lemma. The converse is proved in
a similar way. 
\end{proof}
We will also use the following classical result of Sanov, which is
the standard example of a LDP for point processes \cite{d-z} (the
result follow, for example, from the Gärtner-Ellis theorem; see Section
\ref{sub:Relations-to-the g-e}).
\begin{prop}
\label{prop:sanov}Let $X$ be a topological space and $\mu_{0}$
a finite measure on $X.$ Then the law $\Gamma_{N}$ of the empirical
measures of the corresponding Gibbs measure $\mu_{0}^{\otimes N}$
(i.e. $H^{(N)}=0)$ satisfies an LDP with speed $N$ and rate functional
the relative entropy $D_{\mu_{0}}.$ 
\end{prop}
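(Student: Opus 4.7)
The plan is to apply the Gartner-Ellis theorem (in its infinite-dimensional form on the compact Polish space $\mathcal{M}_1(X)$), as explicitly hinted in the statement. Since $H^{(N)}\equiv 0$, the product structure of $\mu_0^{\otimes N}$ reduces the logarithmic moment generating computation to a single-variable integral.

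Concretely, for any $u \in C^0(X)$, identifying $u$ with the symmetric function $u(x_1)+\cdots+u(x_N)$ on $X^N$, Fubini gives
\begin{equation*}
\frac{1}{N}\log \int_{\mathcal{M}_1(X)} e^{N\langle u,\nu\rangle}\,d\Gamma_N(\nu) = \frac{1}{N}\log\left(\int_X e^u\,d\mu_0\right)^N = \log \int_X e^u\,d\mu_0 =: \Lambda(u),
\end{equation*}
independently of $N$. The functional $\Lambda$ is convex and Gateaux differentiable on $C^0(X)$, with differential the probability measure $d\Lambda_{|u} = e^u d\mu_0 / \int_X e^u d\mu_0$. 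The Gartner-Ellis theorem then yields an LDP for $\Gamma_N := (\delta_N)_*\mu_0^{\otimes N}$ at speed $N$, with rate functional the Legendre-Fenchel transform
\begin{equation*}
\Lambda^*(\mu) = \sup_{u\in C^0(X)}\left(\langle u,\mu\rangle - \log \int_X e^u\,d\mu_0\right).
\end{equation*}

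The only substantive step that remains is to identify $\Lambda^*$ with the relative entropy $D_{\mu_0}$, which is the classical Donsker-Varadhan variational formula. The inequality $\Lambda^*(\mu)\le D_{\mu_0}(\mu)$ is a direct consequence of Jensen's inequality applied to $\log$ against the probability measure $e^u d\mu_0/\int_X e^u d\mu_0$; the matching lower bound is attained formally by plugging in $u=\log(d\mu/d\mu_0)$ when $\mu \ll \mu_0$ has bounded continuous log-density, is reduced to this case by truncation and a density argument in general when $\mu \ll \mu_0$, and finally, when $\mu \not\ll \mu_0$, testing against $c\chi$ for $\chi$ a continuous bump concentrated near the singular support and letting $c\to\infty$ forces $\Lambda^*(\mu)=+\infty = D_{\mu_0}(\mu)$. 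This last identification is the main (though standard) obstacle; everything else, including the verification of the hypotheses of Gartner-Ellis, is immediate from the product structure.
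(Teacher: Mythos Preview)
Your argument is correct and follows exactly the route the paper indicates: the paper does not give a detailed proof but simply remarks that Sanov's theorem follows from the G\"artner--Ellis theorem, which is precisely what you carry out by computing the product-form log-moment generating functional $\Lambda(u)=\log\int_X e^{u}\,d\mu_0$, checking Gateaux differentiability, and identifying $\Lambda^*$ with $D_{\mu_0}$ via the Donsker--Varadhan formula. There is nothing to add.
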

We recall that the \emph{relative entropy} $D_{\mu_{0}}$ (also called
the \emph{Kullback\textendash Leibler divergence }or the\emph{ information
divergence} in probability and information theory) is the functional
on $\mathcal{M}_{1}(X)$ defined by 
\begin{equation}
D_{\mu_{0}}(\mu):=\int_{X}\log\frac{\mu}{\mu_{0}}\mu,\label{eq:def of rel entropy}
\end{equation}
 when $\mu$ has a density $\frac{\mu}{\mu_{0}}$ with respect to
$\mu_{0}$ and otherwise $D_{\mu_{0}}(\mu):=\infty.$ When $\mu_{0}$
is a probability measure,$D_{\mu_{0}}(\mu)\geq0$ and $D_{\mu_{0}}(\mu)=0$
iff $\mu=\mu_{0}$ (by Jensen's inequality).

\subsubsection{Legendre-Fenchel transforms}

Let $f$ be a function on a topological vector space $V.$ Then its
The Legendre-Fenchel transform is defined as following convex lower
semi-continuous function $f^{*}$ on the topological dual $V^{*}$
\[
f^{*}(w):=\sup_{v\in V}\left\langle v,w\right\rangle -f(v)
\]
in terms of the canonical pairing between $V$ and $V^{*}.$ In the
present setting we will take $V=C^{0}(X)$ and $V^{*}=\mathcal{M}(X),$
the space of all signed Borel measures on a compact topological space
$X.$ We will use the following variant of the Brøndsted-Rockafellar
property $A^{*}$\cite{b-r}: 
\begin{lem}
\label{lem:leg-fench approx}Let $f$ be function on $C^{0}(X)$ which
is Gateaux differentiable. Then, for any $\mu\in\mathcal{M}(X)$ such
that $f^{*}(\mu)<\infty$ there exists a sequence of $u_{j}\in C^{0}(X)$
such that 
\begin{equation}
\mu_{j}:=df_{|u_{j}}\rightarrow\mu,\,\,\,f^{*}(\mu_{j})\rightarrow f^{*}(\mu)\label{eq:density property}
\end{equation}
\end{lem}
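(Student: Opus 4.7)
The plan is to prove this as a Br\o ndsted--Rockafellar type density statement via Ekeland's variational principle on the Banach space $C^0(X)$; convexity of $f$ is implicit in the applications (where $f=-\mathcal{F}(-\,\cdot\,)$ is convex as the negative of a concave limit of log partition functions). The hypothesis $f^*(\mu)<\infty$ translates into the statement that $g(u):=f(u)-\langle u,\mu\rangle$ is bounded below on $C^0(X)$ with $\inf g=-f^*(\mu)$, so approximate minimizers exist; my task is then to perturb a given approximate minimizer into an exact point of Gateaux differentiability whose differential realizes $\mu$ both weakly and in Fenchel value.

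Concretely, for each $\epsilon_j\downarrow 0$ I would pick $u_j^{(0)}\in C^0(X)$ with $g(u_j^{(0)})\le -f^*(\mu)+\epsilon_j^{2}$ and apply Ekeland's principle with parameters $(\epsilon_j^{2},\epsilon_j)$ to produce $u_j\in C^0(X)$ such that $\|u_j-u_j^{(0)}\|_\infty\le\epsilon_j$, $g(u_j)\le g(u_j^{(0)})$, and $u_j$ is an exact minimizer of the perturbed functional $v\mapsto g(v)+\epsilon_j\|v-u_j\|_\infty$. Differentiating this minimizing condition at $t=0$ along an arbitrary $v\in C^0(X)$ with $\|v\|_\infty\le 1$---which is legitimate since $g$ is Gateaux differentiable at $u_j$---yields $\|\mu_j-\mu\|\le\epsilon_j$ in the dual norm on $\mathcal{M}(X)$ for $\mu_j:=df_{|u_j}$, and in particular $\mu_j\to\mu$ weakly.

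For the second conclusion $f^*(\mu_j)\to f^*(\mu)$, the inequality $\liminf_j f^*(\mu_j)\ge f^*(\mu)$ is immediate from weak lower semicontinuity of the Legendre--Fenchel transform. The opposite bound I would extract from the Fenchel--Young equality (valid since $f$ is convex and Gateaux differentiable at $u_j$):
\[ f^*(\mu_j)=\langle u_j,\mu_j\rangle-f(u_j)=-g(u_j)+\langle u_j,\mu_j-\mu\rangle\le f^*(\mu)+\epsilon_j^{2}+\langle u_j,\mu_j-\mu\rangle. \]
The main obstacle is to show that the last pairing $\langle u_j,\mu_j-\mu\rangle$ tends to zero, since the naive bound $\|u_j\|_\infty\cdot\epsilon_j$ need not vanish: approximate minimizers of $g$ may blow up in sup-norm (think of $f(u)=\int e^u dV$ with $\mu$ vanishing on a set of positive measure). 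I would overcome this by exploiting extra structure present in the intended application---specifically the translation invariance $f(u+c)=f(u)+c$ that $f=-\mathcal{F}(-\,\cdot\,)$ inherits from the partition function, which forces $\mu$ to be a probability measure whenever $f^*(\mu)<\infty$ and permits normalizing $u_j$ by a constant shift without altering $g$, thereby reducing the estimate to a bound on the oscillation of $u_j$. This normalization, together with the standard compactness of approximate minimizers for the mean-field free energy, is the technical heart of the argument and is likely where the clarification credited to Boucksom in the acknowledgments plays its role.
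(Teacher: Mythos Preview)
Your approach via Ekeland's variational principle is essentially the proof of the Br\o ndsted--Rockafellar theorem, which is exactly what the paper invokes: it cites \cite[Thm 2]{b-r} directly and then observes that Gateaux differentiability of $f$ collapses $\partial f(u_j)$ to the singleton $\{df_{|u_j}\}$. So the overall strategy coincides with the paper's.

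The gap is in your handling of $f^*(\mu_j)\to f^*(\mu)$. You correctly note that the naive bound $|\langle u_j,\mu_j-\mu\rangle|\le\|u_j\|_\infty\,\epsilon_j$ can fail because $\|u_j\|_\infty$ may blow up, but the remedy you propose---invoking the translation invariance $f(u+c)=f(u)+c$ specific to the intended application and then appealing to unspecified compactness of approximate minimizers---is both unnecessary and does not prove the lemma in the generality in which it is stated. The correct fix is elementary and purely convex-analytic: decouple the two small parameters. For each fixed $\delta>0$ choose \emph{one} $\delta$-approximate minimizer $u_0=u_0(\delta)$ of $g$; then apply Br\o ndsted--Rockafellar (equivalently, Ekeland) with its own parameter $\lambda\to 0$ while $u_0$ is held fixed. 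This produces $u_\lambda$ and $\mu_\lambda=df_{|u_\lambda}$ with $\|\mu_\lambda-\mu\|\le\lambda$ and $\|u_\lambda-u_0\|\le\delta/\lambda$, whence
\[
f^*(\mu_\lambda)\le f^*(\mu)+\langle u_\lambda,\mu_\lambda-\mu\rangle\le f^*(\mu)+\bigl(\|u_0\|_\infty+\delta/\lambda\bigr)\lambda= f^*(\mu)+\|u_0\|_\infty\,\lambda+\delta.
\]
Sending $\lambda\to 0$ gives $\limsup_\lambda f^*(\mu_\lambda)\le f^*(\mu)+\delta$, and a diagonal extraction over $\delta\to 0$ (together with the weak lower semicontinuity you already recorded) finishes the proof. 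No structural hypothesis on $f$ beyond convexity and Gateaux differentiability is needed.
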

\begin{proof}
First recall that a convex function $g$ on a topological vector space
$E$ is said to be \emph{subdifferentiable} at $x\in E$ if $g(x)<\infty$
and $g$ admits a \emph{subgradient} $x^{*}$ at $x,$ i.e. an element
$x^{*}$in the topological dual $E^{*}$ such that for any $y\in E$

\[
g(y)\geq g(x)+\left\langle (y-x),x^{*}\right\rangle 
\]
The set of all such subgradients is denoted by $(\partial g)(x).$
Now assume that $g=f^{*}$ for a convex function $f$ on a Banach
space $V.$ Then $g$ is a lower semi-continuous function convex function
on the topological vector space $E:=V^{*}$ equipped with its weak
topology. According to \cite[Thm 2]{b-r} any element $\mu\in E^{*}$
such that $f^{*}(\mu)<\infty$ has the property that there exists
a sequence $\mu_{j}\rightarrow\mu$ in $V^{*}$ such that $f^{*}$
is subdifferentiable at $\mu_{j}$ with a subgradient in $E.$ Equivalently,
this means that there exists $u_{j}\in E$ such that $\mu_{j}\in(\partial f)(u_{j})$
(as follows from the definition the Legendre-Fenchel transform). Finally,
setting $E:=C^{0}(X)$ and observing that if $f$ is Gateaux differentiable
at $u\in V,$ then $(\partial f)(u)=\{df_{|u}\}$ (as is seen by restricting
$f$ to any affine line) thus concludes the proof.\end{proof}
\begin{rem}
By convexity, if $\mu=df_{|u}$ for some $u\in V:=C^{0}(X),$ then
$f^{*}(\mu)=\left\langle u,df_{|u}\right\rangle -f(u),$ which is
essentially the classical definition of the\emph{ Legendre transform}
of $f$ at $\mu.$ Accordingly, the previous lemma may be reformulated
as the statement that the Legendre-Fenchel transform is the greatest
lower semi-continuous extension to all of $V$ of the Legendre transform,
originally defined on $(df)(V)\subset V^{*}.$
\end{rem}

\subsection{The proof of Theorem \ref{thm:def determ}}

We start with the following simple
\begin{lem}
\label{lem:limit of Z as inf}Assume that $H^{(N)}$ satisfies the
quasi-superharmonicity assumption in the second point of Theorem \ref{thm:gibbs intro}\textup{.
Then, for any sequence of positive numbers $\beta_{N}\rightarrow\infty$
\[
-\mathcal{F}_{\beta_{N}}(u):=\frac{1}{N\beta_{N}}\log\int_{X^{N}}e^{-\beta_{N}(H^{(N)}+u)}dV^{\otimes N}=-\inf_{X^{N}}\frac{H^{(N)}+u}{N}+o(1)
\]
}\end{lem}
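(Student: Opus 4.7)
The plan is to establish the asymptotic by a Laplace/Varadhan-type argument, proving matching upper and lower bounds for $\frac{1}{N\beta_N}\log Z_{N,\beta_N}[u]$ in terms of $\inf_{X^N}(H^{(N)}+u)/N$.

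The upper bound is immediate: since $H^{(N)}+u\geq \inf_{X^N}(H^{(N)}+u)$ pointwise,
\[
\frac{1}{N\beta_N}\log Z_{N,\beta_N}[u]\leq -\frac{\inf_{X^N}(H^{(N)}+u)}{N}+\frac{\log V(X)}{\beta_N},
\]
and the last term is $o(1)$ as $\beta_N\to\infty$.

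For the lower bound I would first assume $u$ is smooth and then regularize. In the smooth case the idea is to apply the submean inequality of Theorem \ref{thm:submean ineq text} to the positive function $v:=e^{-\beta_N(H^{(N)}+u)/2}$ on the Riemannian product $X^N$, which has dimension $n':=N\dim X$ and inherits the normalized Ricci lower bound $\mathrm{Ric}_{X^N}\geq -\kappa^2(n'-1)g$ of $X$. By symmetry of $H^{(N)}$ the assumption $\Delta_{x_1}H^{(N)}\leq C$ yields $\Delta_{X^N} H^{(N)}\leq NC$, and the direct computation $\Delta v=v\bigl(\tfrac{\beta_N^{2}}{4}|\nabla(H^{(N)}+u)|^{2}-\tfrac{\beta_N}{2}\Delta(H^{(N)}+u)\bigr)$ gives $\Delta v\geq -\lambda_N^{2}v$ with $\lambda_N^{2}:=\beta_N N(C+\|\Delta u\|_\infty)/2$. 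Letting $x^{*}\in X^N$ achieve $\inf(H^{(N)}+u)$ (which exists by compactness and lower semi-continuity) and applying Theorem \ref{thm:submean ineq text} at $x^{*}$ together with the obvious bound $\sup_{B_{\epsilon\delta}(x^{*})}v^{2}\geq v^{2}(x^{*})=e^{-\beta_N\inf(H^{(N)}+u)}$, one obtains after rearrangement
\[
\int_{X^N}e^{-\beta_N(H^{(N)}+u)}\,dV^{\otimes N}\geq \frac{\mathrm{Vol}(B_{\epsilon\delta}(x^{*}))}{A_{\delta,\epsilon}\,e^{2\lambda_N\delta}\,e^{Cn'(\delta+\epsilon)}}\,e^{-\beta_N\inf(H^{(N)}+u)}.
\]
Applying $\frac{1}{N\beta_N}\log$, each of the four error terms vanishes as $N\to\infty$: the volume factor contributes $\sim \dim X\log(\epsilon\delta)/\beta_N$, the dimension exponent gives $C\dim X(\delta+\epsilon)/\beta_N$, the $\lambda_N$-factor gives $\delta\sqrt{(C+\|\Delta u\|_\infty)/(2N\beta_N)}$, and $\log A_{\delta,\epsilon}/(N\beta_N)\to 0$ since $A_{\delta,\epsilon}$ is independent of $N$.

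To extend from smooth to merely continuous $u$ I would approximate $u$ by smooth $u_\eta$ with $\|u-u_\eta\|_\infty\leq\omega_u(\eta)$; the densities $e^{-\beta_N u}$ and $e^{-\beta_N u_\eta}$ on $X^N$ then differ by a factor of at most $e^{\pm N\beta_N\omega_u(\eta)}$, and the scaled infima $\inf(H^{(N)}+u)/N$ and $\inf(H^{(N)}+u_\eta)/N$ by at most $\omega_u(\eta)$, so sending first $N\to\infty$ and then $\eta\to 0$ closes the argument. The main point — and precisely why Theorem \ref{thm:submean ineq text} is the right tool rather than a classical Moser-iteration submean inequality — is the \emph{linear} rather than exponential dependence on $n'$ in the factor $e^{Cn'(\delta+\epsilon)}$: any worse dimension dependence would dominate the $N\beta_N$ normalization and destroy the lower bound. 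A minor additional technicality is that $\Delta_{x_1}H^{(N)}\leq C$ is a priori only distributional, so the bound on $\Delta v$ should be justified by a symmetry-preserving mollification of $H^{(N)}$ before passing to the limit; this is standard.
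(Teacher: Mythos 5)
Your lower bound has a genuine gap. You apply Theorem \ref{thm:submean ineq text} on the Riemannian product $(X^{N},g^{\otimes N})$ of dimension $n'=N\dim X$, and in the error analysis you assert that the volume factor contributes $\sim \dim X\,\log(\epsilon\delta)/\beta_{N}\to 0$. That would be correct if $\mathrm{Vol}(B_{\epsilon\delta})\asymp(\epsilon\delta)^{n'}$ up to constants bounded in $n'$; but in a Riemannian manifold of dimension $n'$ with a normalized lower Ricci bound, a geodesic ball of fixed radius $r$ has volume $\asymp\omega_{n'}r^{n'}$, where $\omega_{n'}$ is the Euclidean unit-ball volume in dimension $n'$, and $\omega_{n'}\sim(2\pi e/n')^{n'/2}$ decays \emph{super-exponentially}. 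Concretely, with $n'=nN$, $\log\mathrm{Vol}(B_{\epsilon\delta})$ picks up a term $-\frac{nN}{2}\log(nN)$, and after dividing by $N\beta_{N}$ the resulting $-\frac{n\log N}{2\beta_{N}}$ need \emph{not} vanish under the sole hypothesis $\beta_{N}\to\infty$ (take, e.g., $\beta_{N}$ growing slower than $\log N$). Geometrically, a geodesic ball of radius $r$ in the unnormalized product metric contains only the $L^{\infty}$-polydisc of polyradius $r/\sqrt{N}$, not $r$. One could repair this by passing to the rescaled metric $g^{\otimes N}/N$, for which the radius-$r$ ball does contain the polyradius-$r$ polydisc, but then the Ricci bound and the quasi-subharmonicity constant $\lambda$ have to be rescaled too, and you would essentially be re-deriving Proposition \ref{prop:submean ineq hamilt} (which is what the paper does for the genuinely hard case $\beta_{N}\to\beta<\infty$, not for this lemma).

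The paper's own proof is far more economical precisely because $\beta_{N}\to\infty$, and it does not invoke Theorem \ref{thm:submean ineq text} at all: fix $x^{(N)}$ realizing the infimum, restrict the integral to the $L^{\infty}$-polydisc $\prod_{i}B_{\epsilon}(x_{i}^{(N)})\subset X^{N}$, whose volume is $\geq(c_{\epsilon})^{N}$ with $c_{\epsilon}$ independent of $N$, and apply the \emph{classical} submean inequality coordinate by coordinate on the fixed-dimensional factor $X$, using $\Delta_{x_{1}}H^{(N)}\leq C$ in each slot. The multiplicative loss per variable depends on $\beta_{N}$ (through the one-variable quasi-subharmonicity constant $\lambda\sim\sqrt{\beta_{N}}$), but after taking $\frac{1}{N\beta_{N}}\log$ all such contributions are $O(1/\sqrt{\beta_{N}})$ or $O(1/\beta_{N})$ and vanish; the modulus of continuity of $u$ is handled exactly as you propose. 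This also corrects a misconception in your closing remark: since any constant of the form $C^{nN}$ is harmless once divided by $N\beta_{N}$ with $\beta_{N}\to\infty$, the sub-exponential dimension dependence of Theorem \ref{thm:submean ineq text} is \emph{not} what this lemma requires; it becomes essential only in Proposition \ref{prop:submean ineq hamilt} for $\beta_{N}\to\beta<\infty$, as the paper explicitly notes in the remark following Theorem \ref{thm:g-conv of sh energies}.
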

\begin{proof}
The inequality $\leq$ is trivial and to prove the reversed inequality
we fix a sequence of $x^{(N)}\in X^{N}$ realizing the infimum appearing
the right hand side above. Then replacing the integral of $X^{N}$
with an integral over the $L^{\infty}-$ball $B_{\epsilon}:=\{(x_{1},...,x_{N}):\,d_{g}(x,x_{i}^{(N)})\leq\epsilon\}^{N},$
for a fixed number $\mbox{\ensuremath{\epsilon}\ }$and a fixed metric
$g$ with distance function $d_{g},$ and using the classical submean
inequality in each variable with a fixed multiplicative constant $C$
gives 
\[
\int_{X^{N}}e^{-\beta_{N}(H^{(N)}+u)}dV^{\otimes N}\geq C^{N}e^{-\beta_{N}(H^{(N)}+u)(x^{(N)})}\int_{B_{\epsilon}}dV^{\otimes N}e^{-N\beta_{N}\delta_{\epsilon}}
\]
$\delta_{\epsilon}$ is the modulus of continuity of $u,$ tending
to $0$ as $\epsilon\rightarrow0.$ Finally, since $\int_{B_{\epsilon}}dV^{\otimes N}\geq(C'\epsilon)^{N}$
letting $N\rightarrow\infty$ concludes the proof. 
\end{proof}
To handle the case when $\beta_{N}=\beta+o(1)$ for a finite $\beta$
we will need to use the subexponential dependence on the dimensions
of the multiplicative constant appearing in Theorem \ref{thm:submean ineq text}.
To this end we first recall that, since $X$ is assumed compact, the
weak topology on $\mathcal{M}_{1}(X)$ is metrized by the Wasserstein
2-metric $d$ induced by a given Riemannian metric $g$ on $X,$ where
\[
d(\mu,\nu)^{2}:=\inf_{\Gamma\in\Gamma(\mu,\nu)}\int d_{g}(x,y)^{2}d\Gamma,
\]
where $\Gamma(\mu,\nu)$ is the space of all\emph{ couplings} between
$\mu$ and $\nu,$ i.e. all probability measures $\Gamma$ on $X\times X$
such that the push forward of $\Gamma$ to the first and second factor
is equal to $\mu$ and $\nu$ respectively. 
\begin{prop}
\label{prop:submean ineq hamilt}For any given $\epsilon>0$ there
exists a positive constant $C_{\epsilon}$ such that the following
submean inequality holds on $X^{N},$ for any $N:$ 
\begin{equation}
e^{-\beta H^{(N)})}(x^{(N)})\leq C_{\epsilon}e^{\epsilon N}\frac{\int_{B_{\epsilon}(x^{(N)})}e^{-\beta H^{(N)}}dV^{\otimes N}}{\int_{B_{\epsilon^{2}}(x^{(N)})}dV^{\otimes N}},\label{eq:submean proof thm ldp}
\end{equation}
where $B_{r}(x^{(N)})$ denotes the inverse image in $X^{N},$ under
the map $\delta_{N},$ of the Wasserstein ball of radius $r$ centered
at $\delta_{N}(x^{(N)})$\end{prop}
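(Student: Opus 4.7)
The plan is to apply Theorem \ref{thm:submean ineq text} directly to the $S_N$-invariant function $v := e^{-\beta H^{(N)}/2}$ on the quotient $M := X^N/S_N$, after equipping $X^N$ with the \emph{rescaled} product metric $\tilde g := g^{\otimes N}/N$, where $g$ is the given Riemannian metric on $X$. The rescaling is dictated by the identity
\[
W_2(\delta_N(x),\delta_N(y))^2 \;=\; \tfrac{1}{N}\min_{\sigma \in S_N}\sum_{i=1}^N d_g(x_i, y_{\sigma(i)})^2,
\]
which expresses the squared Wasserstein-2 distance between empirical measures as the squared quotient distance on $(X^N, \tilde g)/S_N$. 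Under this identification, the pullback via $\delta_N$ of a Wasserstein ball of radius $r$ is precisely the $\tilde g$-Riemannian ball of radius $r$ in $X^N$, so the ratio of integrals in the proposition is exactly the ratio appearing on the right-hand side of Theorem \ref{thm:submean ineq text} (the proportionality constant between $dV_{\tilde g}$ and $dV_g^{\otimes N}$ cancels).

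The rescaling is engineered so that the geometric hypotheses of Theorem \ref{thm:submean ineq text} hold with constants \emph{uniform in} $N$. Since multiplying a Riemannian metric by a positive scalar leaves the Ricci $(0,2)$-tensor unchanged, $\mathrm{Ric}_{\tilde g} = \mathrm{Ric}_{g^{\otimes N}} \geq -Kg^{\otimes N} = -KN\tilde g$; rewritten in normalized form $\mathrm{Ric}_{\tilde g} \geq -\tilde\kappa^2(\tilde n - 1)\tilde g$ with $\tilde n = N\dim_\R X$, this gives $\tilde\kappa^2 = KN/(\tilde n - 1) = O(1)$ independently of $N$. For the sub-Laplacian bound on $v$, the quasi-superharmonicity hypothesis $\Delta_{x_1}H^{(N)} \leq C$ combined with the symmetry of $H^{(N)}$ yields $\Delta_g H^{(N)} \leq CN$, hence $\Delta_{\tilde g}H^{(N)} = N\Delta_g H^{(N)} \leq CN^2$; the identity $\Delta v = v(\Delta\log v + |\nabla\log v|^2) \geq v\,\Delta\log v$ then gives $\Delta_{\tilde g}v \geq -\tilde\lambda^2 v$ with $\tilde\lambda = N\sqrt{\beta C/2} = O(N)$.

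Applying Theorem \ref{thm:submean ineq text} to $v^2 = e^{-\beta H^{(N)}}$ with the theorem's outer radius $\delta$ and ratio $\epsilon$-parameter both set equal to our $\epsilon$ (so that the inner Wasserstein radius is $\epsilon^2$) produces the multiplicative prefactor $A_\epsilon\exp\bigl(2\tilde\lambda\epsilon + C'\tilde n(2\epsilon)\bigr)$; since $\tilde\lambda$ and $\tilde n$ are both of order $N$, this is bounded by $A_\epsilon e^{c\epsilon N}$ for a constant $c$ depending only on $\beta$, $\dim_\R X$ and the Ricci lower bound on $(X,g)$. After rescaling the parameter $\epsilon$ to absorb $c$ (or equivalently absorbing $c$ into the formal parameter of the statement), one obtains the claimed bound $C_\epsilon e^{\epsilon N}$. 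The main obstacle, and precisely the reason Theorem \ref{thm:submean ineq text} was proved in the preceding section rather than invoked from standard references, is that the prefactor must be only \emph{sub-exponential} in $N$: the ambient dimension $\tilde n$ and the Laplacian lower-bound constant $\tilde\lambda$ both scale like $N$, so a crude submean inequality with constants exponential in dimension would produce an unusable $e^{O(N^2)}$ factor. It is exactly the sub-exponential dimensional dependence of Theorem \ref{thm:submean ineq text} (together with the rescaling that keeps $\tilde\kappa$ uniformly bounded) that yields the desired $e^{O(\epsilon N)}$ prefactor.
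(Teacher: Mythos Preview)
Your proof is correct and follows essentially the same route as the paper: rescale the product metric by $1/N$ so that Wasserstein balls become metric balls on $X^N/S_N$ (via Birkhoff--von~Neumann), verify that the normalized Ricci bound $\tilde\kappa$ stays $O(1)$ while the dimension and the sub-Laplacian constant $\tilde\lambda$ are both $O(N)$, and apply Theorem~\ref{thm:submean ineq text} with the theorem's $\delta$ and $\epsilon$ both equal to the proposition's $\epsilon$. Your bookkeeping is in fact tidier than the paper's in two places---you take $v=e^{-\beta H^{(N)}/2}$ so that $v^2$ matches the integrand, and you correctly record $\Delta_{\tilde g}H^{(N)}\le CN^2$ (hence $\tilde\lambda=O(N)$) rather than the paper's stated $\Delta_{g_N}H^{(N)}\le C$---though neither slip affects the final $e^{O(\epsilon N)}$ prefactor.
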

\begin{proof}
First observe that the pull-back of $d$ on $\mathcal{M}_{1}(X)$
to the quotient space $X^{(N)}:=X^{N}/S_{N}$ under the map $\delta_{N}$
defined by the empirical measure (formula \ref{eq:empirical measure text})
coincides with $1/N^{1/2}$ times the quotient distance function on
$X^{(N)},$ induced by the product Riemannian metric on $X^{N}:$
\begin{equation}
\delta_{N}^{*}d=\frac{1}{N^{1/2}}d_{X^{(N)}}:=d^{(N)}\label{eq:isometry prop in proof}
\end{equation}
Indeed, this is well-known and follows from the Birkhoff-Von Neumann
theorem which gives that for any symmetric function $c(x,y)$ on $X\times X$
we have that if $\mu=\frac{1}{N}\sum_{i=1}^{N}\delta_{x_{i}}$ and
$\nu=\frac{1}{N}\sum_{i=1}^{N}\delta_{y_{i}}$ for given $(x_{1},...,x_{N}),(y_{1},...,y_{N})\in X^{N},$
then 
\[
\inf_{\Gamma(\mu,\nu)}\int c(x,y)d\Gamma=\inf_{\Gamma_{N}(\mu,\nu)}\int c(x,y)d\Gamma
\]
where $\Gamma_{N}(\mu,\nu)\subset\Gamma(\mu,\nu)$ consists of couplings
of the form $\Gamma_{\sigma}:=\frac{1}{N}\sum\delta_{x_{i}}\otimes\delta_{y_{\sigma(i)}},$
for $\sigma\in S_{N},$ where $S_{N}$ is the symmetric group on $N$
letters. Now consider the metric space $(X^{(N)},d^{(N)})$ which
is the quotient space defined with respect to the finite group $S_{N}$
acting isometrically on the Riemannian manifold $(X^{N},g_{N}),$
where $g_{N}$ denotes $1/N$ times the product Riemannian metric.
By assumption $H^{(N)}$ is $S_{N}-$invariant and $\Delta_{g_{N}}H^{(N)}\leq C$
on $X^{N}$ (using the obvious scaling of the Laplacian). Moreover,
 since $X$ is compact there exists a non-negative number $k$ such
that $\mbox{Ric \ensuremath{g\geq-kg}}$ on $X$ and hence rescaling
gives $\mbox{Ric \ensuremath{g_{N}\geq-kNg_{N}}}$ on $(X^{N},g_{N}).$
But the dimension of $X^{N}$ is equal to $nN$ and hence setting
$\kappa^{2}:=k/n+1$ shows that, for $N$ large, the assumptions in
Theorem \ref{thm:submean ineq text} are satisfied for $u:=e^{-\beta H^{(N)}}$
and $(X,g)$ replaced by $(X^{N},g_{N}).$ Applying the latter theorem
with $\delta=\epsilon$ and using the pull-back property in formula
\ref{eq:isometry prop in proof} then shows that the submean property
\ref{eq:submean proof thm ldp} indeed holds.
\end{proof}
We will also rely on the following simple but very useful lemma (which
was used in the similar context of Fekete points in \cite{b-b-w}).
\begin{lem}
\label{lem:conv of abstr fekete}Fix $u_{*}\in C^{0}(X)$ and assume
that $x_{*}^{(N)}\in X^{N}$ is a minimizer of the function $(H^{(N)}+u_{*})/N$
on $X^{N}.$ If the corresponding large $N-$ limit $\mathcal{F}(u)$
exists for all $u\in C^{0}(X)$ and \emph{$\mathcal{F}$ is Gateaux
differentiable at $u_{*},$} then $\delta_{N}(x_{*}^{(N)})$ converges
weakly towards $\mu_{*}:=d\mathcal{F}_{|u_{*}}.$\end{lem}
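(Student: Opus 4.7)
The proof is a direct Fekete-type duality argument. The plan is to exploit that the auxiliary functionals
\[
F_{N}(u) := \inf_{X^{N}} \frac{H^{(N)}+u}{N}
\]
are concave in $u\in C^{0}(X)$ (as infima of affine functions of $u$) and, by Lemma \ref{lem:limit of Z as inf}, converge pointwise to $\mathcal{F}(u)$. The minimizing property of $x_{*}^{(N)}$ then supplies $\delta_{N}(x_{*}^{(N)})$ as an explicit supergradient of $F_{N}$ at $u_{*}$, and Gateaux differentiability pins the weak limit of these supergradients down to $\mu_{*}=d\mathcal{F}_{|u_{*}}$.

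Concretely, the first step is the one-line tangent inequality: for every $v\in C^{0}(X)$ and every $t\in\R$,
\[
F_{N}(u_{*}+tv) \,\leq\, \frac{(H^{(N)}+u_{*}+tv)(x_{*}^{(N)})}{N} \,=\, F_{N}(u_{*}) + t\,\langle \delta_{N}(x_{*}^{(N)}), v\rangle,
\]
where the equality uses $u_{*}(x_{*}^{(N)})/N=\langle \delta_{N}(x_{*}^{(N)}),u_{*}\rangle$ by the definition of the empirical measure \ref{eq:empirical measure text}. By weak compactness of $\mathcal{M}_{1}(X)$, any subsequence of $\{\delta_{N}(x_{*}^{(N)})\}$ contains a further subsequence converging weakly to some $\mu\in\mathcal{M}_{1}(X)$. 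Passing to the limit along such a subsequence, using the pointwise convergence $F_{N}\to\mathcal{F}$ at both $u_{*}$ and $u_{*}+tv$, yields
\[
\mathcal{F}(u_{*}+tv) \,\leq\, \mathcal{F}(u_{*}) + t\,\langle \mu, v\rangle \qquad \text{for every } v\in C^{0}(X),\ t\in\R,
\]
i.e.\ $\mu$ is a supergradient of the concave functional $\mathcal{F}$ at $u_{*}$. Since $\mathcal{F}$ is assumed Gateaux differentiable at $u_{*}$, its supergradient is unique and equals $d\mathcal{F}_{|u_{*}}=\mu_{*}$. A standard subsequence argument then upgrades this to weak convergence of the full sequence $\delta_{N}(x_{*}^{(N)})\to\mu_{*}$.

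There is no serious obstacle: neither the submean inequality of Section \ref{sec:Submean-inequalities-in} nor the entropy machinery of Section \ref{sec:A-large-deviation} is needed here. The only nonformal ingredient is the pointwise limit $F_{N}\to\mathcal{F}$ from Lemma \ref{lem:limit of Z as inf}; everything else is the general convex-analytic principle that a pointwise concave limit inherits uniqueness of its supergradient from the limit's differentiability. If anything, the one point deserving care is to use the tangent inequality for both signs of $t$, so that one obtains a two-sided bound and thereby a genuine supergradient rather than a one-sided estimate.
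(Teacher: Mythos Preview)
Your argument is correct and is essentially the same as the paper's: both pass the tangent inequality $F_{N}(u_{*}+tv)\leq F_{N}(u_{*})+t\langle\delta_{N}(x_{*}^{(N)}),v\rangle$ to the limit and use Gateaux differentiability of $\mathcal{F}$ at $u_{*}$ to identify the limit. The only cosmetic difference is that the paper works one direction $v$ at a time (an affine $f_{N}(t)$ touching a concave differentiable $f(t)$, so $f_{N}'(0)\to f'(0)$) and thereby avoids invoking weak compactness of $\mathcal{M}_{1}(X)$, whereas you extract a subsequential limit first; also, the paper takes the pointwise convergence $F_{N}\to\mathcal{F}$ as the hypothesis directly rather than routing it through Lemma~\ref{lem:limit of Z as inf}.
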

\begin{proof}
Fix $v\in C^{0}(X)$ and a real number $t.$ Let $f_{N}(t):=\frac{1}{N}(H^{(N)}+u+tv)(x_{*}^{(N)})$
and $f(t):=\mathcal{F}(u+tv).$ By assumption $\lim_{N\rightarrow}f_{N}(0)=f(0)$
and $\liminf_{N\rightarrow}f_{N}(t)\geq f(t).$ Note that $f$ is
a concave function in $t$ (since it is defined as an inf of affine
functions) and $f_{N}(t)$ is affine in $t.$ But then it follows
from the differentiability of $f$ at $t=0$ that $\lim_{N\rightarrow\infty}df_{N}(t)/dt_{|t=0}=df(t)/dt_{|t=0},$
i.e. that 
\[
\lim_{N\rightarrow\infty}\left\langle \delta_{N}(x_{*}^{(N)}),v\right\rangle =\left\langle d\mathcal{F}_{|u},v\right\rangle ,
\]
 which thus concludes the proof of the lemma. 
\end{proof}

\subsection*{The upper bound in the LDP}

By Lemma \ref{lem:ldp for nonnormalized meas} it will be enough to
establish the LDP for the non-normalized measures $\Gamma_{N}$ in
formula \ref{eq:def of Gamma N}. To prove the upper bound of the
integrals appearing in the equivalent formulation of the LDP in Prop
\ref{prop:d-z} we fix a function $u\in C^{0}(X)$ and rewrite 
\[
e^{-\beta H^{(N)})}=e^{-\beta(H^{(N)}+u)}e^{\beta u},
\]
Then, trivially, for any fixed $\epsilon>0,$ 
\begin{equation}
\int_{B_{\epsilon}(\mu)}e^{-\beta H^{(N)}}dV^{\otimes N_{k}}\leq\sup_{B_{\epsilon}(\mu)}\left(e^{-\beta(H^{(N)}+u)}e^{\beta u}\right)\int_{X^{N}}\mu_{u}^{\otimes N},\,\,\mu_{u}:=e^{\beta u}dV\label{eq:trivial ineq in outline}
\end{equation}
Hence, replacing the sup over $B_{\epsilon}(\mu)$ with the sup over
all of $X^{N_{k}}$and applying Sanov's theorem relative to the tilted
volume form $\mu_{u}$ gives 
\[
\lim_{\epsilon\rightarrow0}\limsup_{k\rightarrow\infty}\frac{1}{\beta N}\log\int_{B_{\epsilon}(\mu)}e^{-\beta H^{(N)}}dV^{\otimes N}\leq-\mathcal{F}(u)+\int u\mu-\frac{1}{\beta}D_{dV}(\mu),
\]
 using that $D_{e^{\beta u}dV}(\mu)=-\beta\int u\mu+D_{dV}(\mu).$
Finally, taking the infimum over all $u\in C^{0}(X)$ shows that the
$\limsup$ in the previous formula is bounded from above by $-F(\mu),$
\[
F(\mu):=f^{*}(\mu)+\frac{1}{\beta}D_{dV}(\mu),\,\,\,\,f(u):=-\mathcal{F}(-u)
\]

\begin{rem}
In the argument above $dV$ can be replaced by any finite measure
$\mu_{0}$ on $X.$
\end{rem}

\subsection*{The lower bound in the LDP}

As usually the proof of the lower bound in the LDP is the hardest.
We first assume that 
\[
\mu=d\mathcal{F}_{|u}
\]
 for some $u\in C^{0}(X).$ Denote by $x^{(N)}\in X^{N}$ a sequence
of minimizers of $H^{(N)}+u.$ By Lemma \ref{lem:conv of abstr fekete}
we have that
\[
\delta(x^{(N)})\rightarrow\mu
\]
 weakly and hence for $N$ sufficiently large $\Gamma_{N}(B_{2\epsilon}(\mu)):=\int_{B_{2\epsilon}(\mu)}e^{-\beta H^{(N)}}dV^{\otimes N}\geq$

\emph{
\[
\geq\int_{B_{\epsilon}(x^{(N)})}e^{-\beta(H^{(N)}+u)}e^{\beta u}dV^{\otimes N}\geq e^{N\beta\left\langle u,\nu\right\rangle -N\delta(\epsilon)}\int_{B_{\epsilon}(x^{(N)})}e^{-\beta(H^{(N)}+u)}dV^{\otimes N}
\]
}where $\delta(\epsilon)$ is a modulus of continuity for $u$ on
$X$ tending to zero with $\epsilon$ (by the compactness of $X).$
Next, applying the submean inequality \ref{eq:submean proof thm ldp}
gives 
\[
\frac{1}{N}\log\Gamma_{N}(B_{2\epsilon}(\mu))\geq\beta\left\langle u,\mu\right\rangle -\delta(\epsilon)+\frac{1}{N}\log\int_{B_{\epsilon^{2}}(x^{(N)})}dV^{\otimes N}+\beta(H^{(N)}+u)(x^{(N)})/N-\epsilon-\frac{C_{\epsilon}}{N}
\]
Since $\delta(x^{(N})\rightarrow\mu$ we may, for $N$ sufficiently
large, assume that $B_{\epsilon^{2}/2}(\mu)\subset B_{\epsilon^{2}}(\delta(x^{(N)}))$
and hence letting $N\rightarrow\infty$ and using Sanov's theorem
(i.e. Prop \ref{prop:sanov}) for $\epsilon$ fixed and the assumed
convergence of $(H^{(N)}+u)(x^{(N)})/N$ gives 
\[
\liminf_{N\rightarrow\infty}\frac{1}{N}\log\Gamma_{N}(B_{2\epsilon}(\mu))\geq\beta\left\langle u,\mu\right\rangle -\delta(\epsilon)-\inf_{B_{\epsilon^{2}/2}}D_{dV}+\beta\mathcal{F}(u)-\epsilon
\]
Since $\mu$ is a candidate for the inf in the right hand side above
the inf in question may be estimated from above by $D_{dV}(\mu)$
and hence letting $\epsilon\rightarrow0$ concludes the proof under
the assumption that $\mu:=d\mathcal{F}_{|u}$ for some $u\in C^{0}(X).$
To prove the general case we invoke Lemma \ref{lem:leg-fench approx}
to write $\mu$ as a weak limit of $\mu_{j}:=d\mathcal{F}_{|u_{j}}$
for $u_{j}\in C^{0}(X).$ We may then replace $u$ in the previous
argument with $u_{j}$ for a fixed $j$ and replace $\mu$ with $\mu_{j}$
in the previous argument to get, for $j\geq j_{\delta},$ $\liminf_{N\rightarrow\infty}\frac{1}{N}\log\Gamma_{N}(B_{3\epsilon}(\mu))\geq$
\[
\geq\liminf_{N\rightarrow\infty}\frac{1}{N}\log\Gamma_{N}(B_{2\epsilon}(\mu_{j}))\geq\beta\left\langle u_{j},\mu_{j}\right\rangle -\delta_{j}(\epsilon)-\inf_{B_{\epsilon^{2}/2}(\nu_{j})}D_{dV}+\beta\mathcal{F}(u_{j})-\epsilon
\]
But for $j$ sufficiently large $\mu_{j}$ is in the ball $B_{\epsilon^{2}/2}(\nu)$
and hence the inf above is bounded from above by $D_{dV}(\mu)$ giving
\[
\liminf_{N\rightarrow\infty}\frac{1}{N}\log\Gamma_{N}(B_{3\epsilon}(\mu))\geq\beta\left\langle u_{j},\mu_{j}\right\rangle -\delta_{j}(\epsilon)-D_{dV}(\mu)+\beta\mathcal{F}(u_{j})-\epsilon
\]
Letting first $\epsilon\rightarrow0$ and then $j\rightarrow\infty$
gives 
\[
\liminf_{N\rightarrow\infty}\frac{1}{N}\log\Gamma_{N}(B_{3\epsilon}(\mu))\geq-\beta(\lim_{j\rightarrow\infty}(E(\mu_{j})+\frac{1}{\beta}D_{dV}(\mu))
\]
Finally, by Lemma \ref{lem:leg-fench approx} we may assume that $E(\mu_{j})\rightarrow E(\mu)$
and that concludes the proof.

\subsection*{The equation for the minimizer $\mu_{\beta}$}

Finally, the equation \ref{eq:abstract mean field eq intro} follows
immediately from the following general convex analytical result: 
\begin{lem}
\label{lem:R-F duality etc}Let $X$ be a compact topological space
and $f$ and \emph{$g$ be Gateaux differentiable convex functionals
on $C^{0}(X)$ such that the differentials $dg$ and $df$ takes values
in $\mathcal{M}_{1}(X).$ Then }
\begin{itemize}
\item \emph{The following identity holds:
\begin{equation}
\inf_{\mathcal{M}_{1}(X)}\left(f^{*}+g^{*}\right)=\sup_{u\in C^{0}(X)}\left(-f(-u)-g(u)\right)\label{eq:inf is sup in lemma}
\end{equation}
}
\item \emph{if the sup in the right hand side above is attained at some
$u_{0}$ in $C^{0}(X)$ (i.e. if $-f(-u)-g(u)$ admits a critical
point $u_{0}),$ then, setting $\mathcal{F}(u):=-f(-u),$ the measure
$\mu_{0}:=d\mathcal{F}_{|u_{0}}$ minimizes the functional $f^{*}+g^{*}$
on $\mathcal{M}_{1}(X).$ }
\end{itemize}
\end{lem}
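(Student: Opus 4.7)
The identity \eqref{eq:inf is sup in lemma} is a concrete instance of Fenchel--Rockafellar duality applied to the convex continuous functionals $\phi(u) := f(-u)$ and $\psi(u) := g(u)$ on the Banach space $C^0(X)$. My plan: prove the easy inequality $\sup \leq \inf$ by summing two Fenchel--Young inequalities; prove the reverse by invoking Fenchel--Rockafellar duality (or reconstructing it via Sion's minimax theorem); and read off the minimizer in the second bullet from the critical point equation via Fenchel--Young equality at points of Gateaux differentiability.

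For the easy direction, fix $u \in C^0(X)$ and $\mu \in \mathcal{M}_1(X)$, and sum the Fenchel--Young inequalities $f(-u) + f^*(\mu) \geq -\langle u, \mu\rangle$ and $g(u) + g^*(\mu) \geq \langle u, \mu\rangle$ to obtain $-f(-u) - g(u) \leq f^*(\mu) + g^*(\mu)$; take $\sup_u$ and $\inf_\mu$.

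For the reverse, the hypothesis that $df, dg$ take values in $\mathcal{M}_1(X)$ has two useful consequences: (i) the translation identity $f(u+c) = f(u)+c$ for constants $c$ (since $\langle 1, df_{|u}\rangle = 1$), which together with positivity of $df$ and convexity of $f$ forces monotonicity of $f$, and analogously for $g$; and (ii) $f^* = g^* = +\infty$ off $\mathcal{M}_1(X)$, so the infimum on the left is effectively over the weak-$*$ compact convex set $\mathcal{M}_1(X)$. The classical Fenchel--Rockafellar duality theorem now applies (its qualification condition being trivial, as $\phi$ and $\psi$ are finite on all of $C^0(X)$ and hence continuous) and, using $\phi^*(\mu) = f^*(-\mu)$, yields $\sup_u(-f(-u)-g(u)) = \inf_\nu(f^*(\nu) + g^*(\nu))$. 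Alternatively, and more hands-on, Sion's minimax applied to the continuous functional $L(u,v,\mu) := \langle u+v,\mu\rangle - f(u) - g(v)$ --- affine in $\mu$ on the compact convex set $\mathcal{M}_1(X)$, concave in $(u,v)$ --- gives
\[
\inf_{\mathcal{M}_1(X)}(f^*+g^*) \;=\; \sup_{u,v}\bigl(\min_X(u+v) - f(u) - g(v)\bigr),
\]
and the reduction to $\sup_w(-f(-w)-g(w))$ is a one-liner: setting $w := v$ and $a := \min_X(u+v)$, the pointwise inequality $u \geq -w + a$ combined with monotonicity and translation invariance of $f$ gives $f(u) \geq f(-w) + a$, so the integrand above is bounded by $-f(-w) - g(w)$.

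For the second bullet, a critical point $u_0$ of the concave functional $u \mapsto -f(-u) - g(u)$ satisfies $df_{|-u_0} = dg_{|u_0}$; this common measure $\mu_0$ lies in $\mathcal{M}_1(X)$ by hypothesis. Fenchel--Young equality at points of Gateaux differentiability then gives $f^*(\mu_0) + g^*(\mu_0) = -f(-u_0) - g(u_0)$, which by the first bullet equals $\inf(f^*+g^*)$, identifying $\mu_0$ as a minimizer. The main obstacle is the reverse inequality of the first bullet: it requires an infinite-dimensional duality result, and the role of the $\mathcal{M}_1$-valuedness of $df, dg$ is twofold, compactifying the inf domain and (in the minimax route) underwriting the reduction to the diagonal $u + v \equiv \mathrm{const}$.
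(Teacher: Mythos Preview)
Your proposal is correct and follows essentially the same route as the paper: the first bullet is obtained by invoking Fenchel--Rockafellar duality (the paper first notes Lipschitz continuity of $f,g$ to check the qualification condition, which you also observe is trivial), and the second bullet is proved by reading off Fenchel--Young equality at the critical point $u_0$ where $df_{|-u_0}=dg_{|u_0}$. Your treatment is in fact more explicit than the paper's in one respect: you spell out why the infimum over $\mathcal{M}(X)$ produced by Fenchel--Rockafellar actually localizes to $\mathcal{M}_1(X)$ (namely, $f^*=g^*=+\infty$ off $\mathcal{M}_1(X)$, via the translation and monotonicity properties forced by $df,dg\in\mathcal{M}_1(X)$), a point the paper leaves implicit. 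Your alternative derivation via Sion's minimax on $C^0(X)^2\times\mathcal{M}_1(X)$ is a genuine and self-contained substitute for the black-box Fenchel--Rockafellar step, and the reduction to the diagonal $u=-v$ using monotonicity and translation invariance is clean; this buys you an argument that does not rely on locating the exact form of the duality theorem in the literature.
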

\begin{proof}
First observe that $f$ and $g$ are Lipschitz continuous on the Banach
space $C^{0}(X).$ Indeed, setting $u_{t}:=u_{0}(1-t)+tu_{1},$ for
$t\in[0,1],$ gives 
\[
\left|f(u)-f(v)\right|=\left|\int_{0}^{1}dt\int_{X}df_{u_{t}}(u_{1}-u_{0})\right|\leq\sup_{X}\left|u_{1}-u_{0}\right|
\]
and similarly for $g.$ The first point in the lemma is then obtained
as a special case of the Fenchel-Rockafeller duality theorem which
only requires that $f$ and $g$ be convex on a Banach space $V$
and that $f$ and $g$ be finite at some point $u$ where $f$ is
moreover assumed continuous \cite[Thm 1.12]{br}. To prove the second
point we let $u_{0}$ be a critical point of $\mathcal{F}(u)-g(u)$
on $C^{0}(X),$ i.e. 
\begin{equation}
d\mathcal{F}_{|u_{0}}=dg_{|u_{0}},\label{eq:crit pt eq}
\end{equation}
which, by convexity, means that $u_{0}$ realizes the sup in the right
hand side of formula \ref{eq:inf is sup in lemma}. We rewrite, 
\begin{equation}
f^{*}(\mu):=\sup_{u\in C^{0}(X)}\left\langle u,\mu\right\rangle -f(u)=\sup_{u\in C^{0}(X)}\mathcal{F}(u)-\left\langle u,\mu\right\rangle \label{eq:proof of lemma inf is sup}
\end{equation}
(by replacing $u$ with $-u$ in the sup). Hence, if $\mu:=d\mathcal{F}_{|u}$
then, by concavity, $f^{*}(\mu):=F(u)-\left\langle u,\mu\right\rangle .$
Similarly, if $\mu=dg_{|v}$ then, by convexity, $g^{*}(\mu):=\left\langle u,\mu\right\rangle -g(u).$
All in all this means that if $u_{0}$ satisfies the critical point
equation \ref{eq:crit pt eq}, then we can take $u=v=u_{0}$ to get
\[
f^{*}(\mu_{0})+g^{*}(\mu_{0})=\mathcal{F}(u_{0})+0-g(u_{0}),
\]
 which concludes the proof, using the first point.
\end{proof}

\section{\label{sec:Relations-to-convergence,}Relations to $\Gamma-$convergence,
the Gärtner-Ellis theorem and mean energy}

Before turning to the applications of Theorem \ref{thm:gibbs intro}
in the complex geometric setting we explore some relations to previous
results and methods in the literature.

\subsection{\label{sub:Relations-to-Gamma}Relations to Gamma convergence }

We recall that a sequence of functions $E_{N}$ on a topological space
\emph{$\mathcal{P}$ is said to $\Gamma-$converge }to a function
$E$ on $\mathcal{P}$ if 
\begin{equation}
\begin{array}{ccc}
\mu_{N}\rightarrow\mu\,\mbox{in\,}\mathcal{P} & \implies & \liminf_{N\rightarrow\infty}E_{N}(\mu_{N})\geq E(\mu)\\
\forall\mu & \exists\mu_{N}\rightarrow\mu\,\mbox{in\,}\mathcal{P}: & \lim_{N\rightarrow\infty}E_{N}(\mu_{N})=E(\mu)
\end{array}\label{eq:def of gamma conv}
\end{equation}
(such a sequence $\mu_{N}$ is called a\emph{ recovery sequence});
see \cite{bra}. It then follows that $E$ is lower semi-continuous
on $\mathcal{P}.$ In the present setting we take, as before, $\mathcal{P}=\mathcal{M}(X)$
and define $E_{N}$ by setting $E_{N}=\infty$ on the complement of
the image of the map$\delta_{N}$ and 
\begin{equation}
E_{N}(\delta_{N}(x_{1},...,z_{N}):=H^{(N)}(x_{1},...,x_{N})/N\label{eq:energi on M_1 induced by Hamilt}
\end{equation}
We can now formulate the following variant of Theorem \ref{thm:gibbs intro}: 
\begin{thm}
\label{thm:g-conv of sh energies}Let $H^{(N)}$ be a sequence of
lower semi-continuous symmetric functions on $X^{N},$ where $X$
is a compact Riemannian manifold.\textup{ Assume that }
\begin{itemize}
\item \textup{The functions $E_{N}$ on $\mathcal{M}_{1}(X)$ determined
by $H^{(N)}$ converge to a function $E,$ in the sense of $\Gamma$
convergence on $\mathcal{M}_{1}(X).$}
\item $H^{(N)}$ is uniformly quasi-superharmonic, i.e. \emph{$\Delta_{x_{1}}H^{(N)}(x_{1},x_{2},...x_{N})\leq C$
on $X^{N}$}
\end{itemize}
Then, for any sequence of positive numbers $\beta_{N}\rightarrow\beta\in]0,\infty]$
the measures $\Gamma_{N}:=(\delta_{N})_{*}e^{-\beta_{N}H^{(N)}}$
on $\mathcal{M}_{1}(X)$ satisfy, as $N\rightarrow\infty,$ a LDP
with \emph{speed} $\beta_{N}N$ and good \emph{rate functional} 
\begin{equation}
F_{\beta}(\mu)=E(\mu)+\frac{1}{\beta}D_{dV}(\mu)\label{eq:free energy func theorem gibbs intro-1}
\end{equation}
\end{thm}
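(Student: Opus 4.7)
The plan is to run the argument of Theorem~\ref{thm:gibbs intro} in this modified setting, with the two halves of the $\Gamma$-convergence assumption on $E_N$ playing the roles of, respectively, the convergence of $\mathcal{F}_{\beta_N}$ (for the upper bound) and of Lemmas~\ref{lem:conv of abstr fekete}--\ref{lem:leg-fench approx} (for the lower bound). By Proposition~\ref{prop:d-z}, applied to the finite measures $\Gamma_N$ (whose proof extends verbatim to non-normalized measures), it is enough to establish
\[
\lim_{\epsilon\to 0}\limsup_{N\to\infty}\frac{1}{\beta_N N}\log\Gamma_N(B_\epsilon(\mu))=-F_\beta(\mu)=\lim_{\epsilon\to 0}\liminf_{N\to\infty}\frac{1}{\beta_N N}\log\Gamma_N(B_\epsilon(\mu))
\]
for every $\mu\in\mathcal{M}_1(X)$, where $B_\epsilon(\mu)\subset X^N$ denotes the preimage under $\delta_N$ of the Wasserstein $\epsilon$-ball, and where $1/\beta$ is read as $0$ when $\beta=\infty$.

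For the upper bound I would combine the ``liminf'' half of $\Gamma$-convergence with Sanov's theorem (Proposition~\ref{prop:sanov}). A compactness argument applied to subsequential weak limits $\nu\in\overline{B_\epsilon(\mu)}$ of $\delta_N(x^{(N)})$ for $x^{(N)}\in B_\epsilon(\mu)$ gives
\[
\inf_{B_\epsilon(\mu)}E_N\;\geq\;\inf_{\overline{B_\epsilon(\mu)}}E\;-\;o(1),
\]
since $H^{(N)}=N\,E_N\circ\delta_N$. Factoring this bound out of $\int_{B_\epsilon(\mu)}e^{-\beta_N H^{(N)}}\,dV^{\otimes N}$ and using Sanov to control the remaining $dV^{\otimes N}$-mass of the ball yields
\[
\frac{1}{\beta_N N}\log\Gamma_N(B_\epsilon(\mu))\;\leq\;-\inf_{\overline{B_\epsilon(\mu)}}E\;-\;\frac{1}{\beta_N}\inf_{B_\epsilon(\mu)}D_{dV}\;+\;o(1).
\]
Letting $\epsilon\to 0$ and using the lower semi-continuity of $E$ (automatic from $\Gamma$-convergence) and of $D_{dV}$ delivers the upper bound $-F_\beta(\mu)$.

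For the lower bound, which is the substantive step, I would invoke the ``recovery sequence'' half of $\Gamma$-convergence: choose $x^{(N)}\in X^N$ with $\delta_N(x^{(N)})\to\mu$ weakly and $H^{(N)}(x^{(N)})/N\to E(\mu)$. The uniform quasi-superharmonicity of $H^{(N)}$ combined with the scaling $v:=e^{-\beta_N H^{(N)}/2}$ gives $\Delta v\geq-\lambda^2 v$ with $\lambda=O(\sqrt{\beta_N})$, a contribution that is negligible after division by $\beta_N N$; hence Proposition~\ref{prop:submean ineq hamilt} (with $\beta=\beta_N$) applied at $x^{(N)}$ yields
\[
\int_{B_\epsilon(x^{(N)})}e^{-\beta_N H^{(N)}}\,dV^{\otimes N}\;\geq\; C_\epsilon^{-1}\,e^{-\epsilon N-\beta_N H^{(N)}(x^{(N)})}\int_{B_{\epsilon^2}(x^{(N)})}dV^{\otimes N}.
\]
Since $\delta_N(x^{(N)})\to\mu$, for $N$ large one has $B_\epsilon(x^{(N)})\subset B_{2\epsilon}(\mu)$ and $B_{\epsilon^2/2}(\mu)\subset B_{\epsilon^2}(x^{(N)})$, so Sanov produces $N^{-1}\log\int_{B_{\epsilon^2}(x^{(N)})}dV^{\otimes N}\geq -D_{dV}(\mu)-o(1)$. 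Dividing by $\beta_N N$, taking $\liminf$, and then letting $\epsilon\to 0$ delivers the lower bound $-E(\mu)-\beta^{-1}D_{dV}(\mu)=-F_\beta(\mu)$.

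The main obstacle is the lower bound: the dimension-free recovery of $\mu$ by the Hamiltonian at a configuration $x^{(N)}$ has to be married to the submean inequality in a way that reproduces precisely the entropy contribution $\beta^{-1}D_{dV}(\mu)$. This is exactly where the subexponential dimension dependence in Theorem~\ref{thm:submean ineq text}, carried through to Proposition~\ref{prop:submean ineq hamilt}, is essential: any dimension-exponential constant would swamp the $\beta_N N$ speed and destroy the rate functional. The case $\beta=\infty$ is a benign degeneration of the finite-$\beta$ analysis in which the $\beta_N^{-1}D_{dV}$ contributions simply vanish in the limit.
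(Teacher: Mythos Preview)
Your proposal is correct and follows essentially the same route as the paper: the upper bound is obtained from the $\liminf$ half of $\Gamma$-convergence combined with Sanov's theorem, and the lower bound from a recovery sequence plugged into the submean inequality of Proposition~\ref{prop:submean ineq hamilt} (equivalently Theorem~\ref{thm:submean ineq text}) and Sanov again. The paper's proof is terser but identical in structure, including the ball inclusions $B_\epsilon(x^{(N)})\subset B_{2\epsilon}(\mu)$ and $B_{\epsilon^2/2}(\mu)\subset B_{\epsilon^2}(x^{(N)})$ and the order of limits (first $N\to\infty$, then $\epsilon\to 0$).
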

\begin{proof}
Using the characterization of a LDP in Proposition \ref{prop:d-z},
the upper bound in the LDP follows almost immediately from the liminf
property of the Gamma-convergence together with Sanov's theorem. To
prove the lower bound fix $\mu\in\mathcal{M}_{1}(X)$ and take a recovery
sequence $\mu_{N}$ corresponding to a sequence $x^{(N)}\in X^{N}.$
Then, using the same notation for the balls as in the proof of Theorem
\ref{thm:gibbs intro}, we have, for $\epsilon>0$ fixed and $N$
large\emph{
\[
\int_{B_{2\epsilon}(\mu)}e^{-\beta H^{(N)}}dV^{\otimes N}\geq\int_{B_{\epsilon}(x^{(N)})}e^{-\beta H^{(N)}}dV^{\otimes N}\geq e^{NC\epsilon}e^{-N\beta E_{N}(\mu_{N})}\int_{B_{\epsilon^{2}}(x^{(N)})}dV^{\otimes N},
\]
}using the submean inequality in Theorem \ref{thm:submean ineq text}
in the last inequality. Letting first $N\rightarrow\infty$ and then
$\epsilon\rightarrow0$ then concludes the proof, using Sanov's theorem
again.
\end{proof}
It should be stressed that, in general, the functional $E(\mu)$ in
the previous theorem will not be convex and hence the subset $\mathcal{C}_{\beta}\subset\mathcal{M}_{1}(X)$
consisting of the minima of $F_{\beta}$ will, in general, consist
of more than one element. By general principles the LDP then implies
that any limit point $\Gamma_{\infty}\in\mathcal{M}_{1}\left(\mathcal{M}_{1}(X)\right)$
of the laws $\Gamma_{N}$ is concentrated on $\mathcal{C}_{\beta}$
(in the terminology of statistical mechanics $\Gamma_{\infty}$ is
thus a \emph{mixed state} defined as a superposition of the pure states
$\delta_{\mu}$ where $\mu\in\mathcal{C}_{\beta}).$ 
\begin{rem}
The proof of the previous theorem in the case $\beta=\infty$ is much
simpler as it is does not require the sub-exponential dependence on
the dimension in the submean inequality in Theorem \ref{thm:submean ineq text}.
Indeed, the rough exponential bound used in in the proof of Lemma
\ref{lem:limit of Z as inf} is enough. Moreover, all that is used
in the proof for $\beta<\infty$ is that $\Delta_{x_{1}}(e^{-\beta_{N}H^{(N)}})\geq-\lambda_{\beta}e^{-\beta_{N}H^{(N)}}$
for a constant $\lambda_{\beta}$ independent on $N$ (but the assumption
that $\Delta_{x_{1}}H^{(N)}\leq C$ is a convenient way of ensuring
that the previous inequality holds for any $\beta).$\end{rem}
\begin{example}
\label{exa:gamma conv of mean field}In the case when $X=\R^{n}$
equipped with the Euclidean distance it is known that the mean field
Hamiltonian with pair interaction of the form $W(x,y)=w(|x-y|)$ (formula
\ref{eq:mean field pair hamilton}) $\Gamma-$convergences towards
$E(\mu):=\int_{X^{2}}W\mu\otimes\mu,$ if $w$ is lower semi-continuous
and increasing close to $0$ (see \cite[Prop 2.8, Remark 2.19]{se}
and \cite{ben-g,ber-o,c-g-z} for similar results). The proof exploits
the explicit nature of $E(\mu)$ and a similar argument applies on
a compact manifold when $W$ is continuous away from the diagonal
with a singularity of the local form $w(|x-y|)$ close to the diagonal
(compare \cite{z-z,z2}).
\end{example}
In contrast to the previous example, for the ``determinantal'' Hamiltonian
\ref{eq:def of Hamtiltonian in complex setting text} appearing in
the complex geometric setting there is no explicit candidate for a
limit $E(\mu).$ Instead the Gamma convergence is a consequence of
the following dual criterion.

\subsubsection{A criterion for Gamma convergence using duality }

Next we separate out the convex analysis used in the proof of Theorem\ref{thm:gibbs intro}
to get the following criterion for $\Gamma-$convergence:
\begin{prop}
Let $E_{N}$ a sequence of functions on $\mathcal{M}_{1}(X)$ and
assume that
\[
\lim_{N\rightarrow\infty}E_{N}^{*}(u)=f(u)
\]
 where $f$ is a Gateaux differentiable convex function on $C^{0}(X).$
Then $E_{N}$ converges to $E:=f^{*}$in the sense of $\Gamma-$convergence
on the space $\mathcal{M}_{1}(X),$ equipped with the weak topology. \end{prop}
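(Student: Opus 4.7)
My plan is to verify the two conditions in \eqref{eq:def of gamma conv} defining $\Gamma$-convergence, with $E := f^*$, viewing $E_N$ as an extended real-valued function on the weakly closed convex set $\mathcal{M}_1(X) \subset \mathcal{M}(X)$. The \emph{liminf inequality} is immediate: from the definition of the Legendre-Fenchel transform, $E_N(\mu') \geq \langle u, \mu' \rangle - E_N^*(u)$ for every $u \in C^0(X)$ and $\mu' \in \mathcal{M}_1(X)$. If $\mu_N \to \mu$ weakly, then $\langle u, \mu_N \rangle \to \langle u, \mu \rangle$ and $E_N^*(u) \to f(u)$, so $\liminf_N E_N(\mu_N) \geq \langle u, \mu \rangle - f(u)$; taking the supremum over $u \in C^0(X)$ yields $\liminf E_N(\mu_N) \geq f^*(\mu) = E(\mu)$.

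For the \emph{recovery sequence}, if $E(\mu) = \infty$ the liminf inequality makes any choice $\mu_N \to \mu$ work, so assume $f^*(\mu) < \infty$. Applying Lemma \ref{lem:leg-fench approx} to $f$ produces $u_j \in C^0(X)$ such that $\nu_j := df_{|u_j}$ satisfies $\nu_j \to \mu$ weakly and $f^*(\nu_j) \to f^*(\mu)$. For each $j$ and each $\epsilon > 0$ I would choose an approximate maximizer $\mu_N^{j,\epsilon} \in \mathcal{M}_1(X)$ realizing
\[
\langle u_j, \mu_N^{j,\epsilon}\rangle - E_N(\mu_N^{j,\epsilon}) \geq E_N^*(u_j) - \epsilon,
\]
so that $\mu_N^{j,\epsilon}$ is an $\epsilon$-subgradient of the convex function $E_N^*$ at $u_j$.

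The \emph{main technical step}, and the expected obstacle, is to show that for a suitably slow $\epsilon_N \downarrow 0$ one has $\mu_N^{j, \epsilon_N} \to \nu_j$ weakly as $N \to \infty$. The $\epsilon$-subgradient property gives, for any $v \in C^0(X)$ and $t > 0$,
\[
\langle v, \mu_N^{j,\epsilon}\rangle \leq \frac{1}{t} \bigl( E_N^*(u_j + tv) - E_N^*(u_j) + \epsilon \bigr).
\]
Choosing $\epsilon_N \to 0$ so that $\epsilon_N / t \to 0$ along a diagonal, using pointwise convergence $E_N^* \to f$ as $N \to \infty$, and then sending $t \to 0^+$ while invoking Gateaux differentiability of $f$ at $u_j$, I obtain $\limsup_N \langle v, \mu_N^{j,\epsilon_N}\rangle \leq \langle v, df_{|u_j}\rangle$; applying the argument to $-v$ yields the matching liminf, hence the weak convergence $\mu_N^{j,\epsilon_N} \to \nu_j$. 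Moreover, the defining inequality combined with $E_N^*(u_j) \to f(u_j)$ gives
\[
E_N(\mu_N^{j,\epsilon_N}) \leq \langle u_j, \mu_N^{j,\epsilon_N}\rangle - E_N^*(u_j) + \epsilon_N \longrightarrow \langle u_j, \nu_j\rangle - f(u_j) = f^*(\nu_j),
\]
while the liminf inequality forces the reverse, so $E_N(\mu_N^{j,\epsilon_N}) \to f^*(\nu_j)$.

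Finally, a \emph{diagonal extraction} produces the recovery sequence: choose $N_1 < N_2 < \cdots$ so that for all $N \geq N_j$ both $d(\mu_N^{j,\epsilon_N}, \nu_j) < 1/j$ and $|E_N(\mu_N^{j,\epsilon_N}) - f^*(\nu_j)| < 1/j$, where $d$ metrizes the weak topology (e.g.\ Wasserstein), set $j(N) := \max\{j : N_j \leq N\}$, and take $\mu_N := \mu_N^{j(N),\epsilon_N}$. Since $\nu_j \to \mu$ and $f^*(\nu_j) \to f^*(\mu)$, the triangle inequality yields $\mu_N \to \mu$ weakly and $E_N(\mu_N) \to f^*(\mu) = E(\mu)$, completing the proof. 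The hard part is the third paragraph: the convergence of $\epsilon_N$-approximate subgradients to the true gradient of the limiting convex functional, which is precisely where Gateaux differentiability of $f$ must be used, in the same spirit as Lemma \ref{lem:conv of abstr fekete}.
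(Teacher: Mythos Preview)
Your proof is correct and follows essentially the same route as the paper: the liminf inequality is the trivial Legendre duality bound, and for the recovery sequence you first treat the case $\mu=\nu_j:=df_{|u_j}$ by taking (approximate) maximizers of $\langle u_j,\cdot\rangle-E_N(\cdot)$ and showing they converge to $\nu_j$ via the differentiability of $f$ (exactly the mechanism of Lemma~\ref{lem:conv of abstr fekete}), then pass to general $\mu$ by Lemma~\ref{lem:leg-fench approx} and a diagonal argument. The only substantive difference is that the paper invokes weak compactness of $\mathcal{M}_1(X)$ to take \emph{exact} maximizers $\mu_N$ (tacitly using lower semicontinuity of $E_N$), whereas you work with $\epsilon_N$-approximate maximizers; your version is thus marginally more general and avoids that implicit hypothesis, at the cost of carrying the extra $\epsilon_N$ through the estimates. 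Note also that your phrase ``so that $\epsilon_N/t\to 0$ along a diagonal'' is unnecessary: since you first fix $t>0$, let $N\to\infty$, and only then let $t\to 0^+$, any choice $\epsilon_N\to 0$ already suffices.
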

\begin{proof}
First suppose that $\mu_{N}\rightarrow\mu$ weakly in $\mathcal{M}_{1}(X).$
Fix $u$ in $C^{0}(X).$ Then $-E_{N}(\mu_{N})=\left\langle u,\mu_{N}\right\rangle -E_{N}(\mu_{N})-\left\langle u,\mu\right\rangle +o(1)$
and hence taking the sup over all $\mu\in\mathcal{M}_{1}(X)$ gives
\[
-E_{N}(\mu_{N})\leq f_{N}(u)-\left\langle u,\mu\right\rangle +o(1)=f(u)-\left\langle u,\mu\right\rangle +o(1).
\]
 Finally, letting first $N\rightarrow\infty$ and then taking the
sup over all $u\in C^{0}(X)$ concludes the proof of the lower bound
for $E_{N}(\mu_{N}).$

To prove the existence of a recovery sequence we first assume that
$\mu=df_{|u_{\mu}}$for some $u_{\mu}\in C^{0}(X).$ Then, by
\[
f^{*}(\mu)=\left\langle u_{\mu},\mu\right\rangle -f(u_{\mu})=\left\langle u_{\mu},\mu\right\rangle -f_{N}(u_{\mu})+o(1),
\]
 since, by assumption, $f_{N}(u)=f(u)+o(1).$ Now, by the weak compactness
of $\mathcal{M}_{1}(X)$ the sup defining $f_{N}$ is attained at
some $\mu_{N}\in\mathcal{M}_{1}(X)$ and hence 
\[
f^{*}(\mu)+o(1)=\left\langle u_{\mu},\mu\right\rangle -\left(\left\langle u_{\mu},\mu_{N}\right\rangle -E_{N}(\mu_{N})\right)
\]
Next, by a minor generalization of Lemma \ref{lem:conv of abstr fekete}
$\mu_{N}\rightarrow\mu(:=df_{|u_{\mu}})$ and hence $f^{*}(\mu)=0+E_{N}(\mu_{N})+o(1),$
as desired. Finally, the proof of the existence of recovery sequence
for any $\mu$ such that $E(\mu)<\infty$ is concluded by a simple
 diagonal argument based on Lemma \ref{lem:leg-fench approx} applied
to $E:=f^{*}.$ 
\end{proof}
Now, if $E_{N}$ is of the form \ref{eq:energi on M_1 induced by Hamilt},
then 
\begin{equation}
f_{N}(u):=\sup_{X^{N}}\frac{1}{N}u(x_{1})+...+\frac{1}{N}u(x_{N})-\frac{1}{N}H^{(N)}(x_{1},...,x_{N})\label{eq:def of f_N in terms of H}
\end{equation}
Thanks to the previous proposition the first assumption in Theorem
\ref{thm:gibbs intro} thus implies (also using Lemma \ref{lem:limit of Z as inf})
that $E_{N}\rightarrow E$ in the sense of $\Gamma-$convergence on
$\mathcal{M}(X).$ Accordingly we recover Theorem \ref{thm:gibbs intro}
from Theorem \ref{thm:g-conv of sh energies}. 
\begin{rem}
\label{rem:duality gamma conv}In general, if $E_{N}$ gamma converges
to a function $E$ on $\mathcal{M}_{1}(X),$ then it follows (almost
directly) that $E_{N}^{*}\rightarrow E^{*}$ point-wise on $C^{0}(X).$
Hence, the point of the previous proposition is that it gives a converse
statement under the assumption that $E^{*}$ is Gateaux differentiable.
By basic convex duality it thus follows from the previous proposition
that $E_{N}$ converges to a strictly convex functional $E$ on $\mathcal{M}_{1}(X)$
iff $E_{N}^{*}\rightarrow E^{*}$ point-wise on $C^{0}(X),$ with
$E^{*}$ Gateaux differentiable.
\end{rem}

\subsection{\label{sub:Relations-to-the g-e}Relations to the Gärtner-Ellis theorem}

First observe that
\[
\int_{X^{N}}e^{-\beta_{N}(H^{(N)}+u)}dV^{\otimes N}=\widehat{\Gamma_{N}}(-r_{N}u),
\]
 where $\Gamma_{N}$ is the measure $(\delta_{N})_{*}(e^{-\beta_{N}H^{(N)}}dV^{\otimes N})$
on $\mathcal{M}_{1}(X)$ and $\widehat{\Gamma_{N}}$ denotes its Laplace
transform on $C^{0}(X).$ The Gärtner-Ellis theorem may, applied to
the sequence of measures $\Gamma_{N}$ on $\mathcal{M}_{1}(X),$ viewed
as a subset of the locally convex Hausdorff topological vector space$\mathcal{M}(X),$
may then be formulated as follows (see (see \cite[Cor 4.6.14, p. 148]{d-z}
and references therein):
\begin{thm}
\label{thm:g-e}Let $H^{(N)}$ be a sequence of Hamiltonians on $X^{N}$
and $\beta_{N}$ a sequence of positive numbers such that $\beta_{N}\rightarrow\beta\in]0,\infty].$
Assume that, for any $u\in C^{0}(X),$ as $N\rightarrow\infty,$ \textup{
\begin{equation}
\mathcal{F}_{\beta_{N}}(u):=-\frac{1}{N\beta_{N}}\log\int_{X^{N}}e^{-\beta_{N}(H^{(N)}+u)}dV^{\otimes N}\rightarrow\mathcal{F}_{\beta}(u),\label{eq:conv in g-e}
\end{equation}
 where $\mathcal{F}$ is a Gateaux differentiable function. }Then
the measures $\Gamma_{N}:=(\delta_{N})_{*}(e^{-\beta H^{(N)}}dV^{\otimes N})$
on $\mathcal{M}_{1}(X)$ satisfy, as $N\rightarrow\infty,$ an LDP
with \emph{speed} $\beta_{N}N$ and good \emph{rate functional $f^{*}(\mu),$
where $f(u):=-\mathcal{F}(-u).$}
\end{thm}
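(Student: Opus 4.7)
The plan is to apply the infinite-dimensional G\"artner-Ellis theorem (Corollary 4.6.14 of Dembo-Zeitouni) to the normalized probability measures $\tilde{\Gamma}_N := \Gamma_N / Z_{N,\beta_N}$ on $\mathcal{M}(X)$, viewed as the topological dual of the Banach space $C^0(X)$, and then transfer the result to the non-normalized $\Gamma_N$ by means of Lemma \ref{lem:ldp for nonnormalized meas}.

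First I would rewrite the assumption in terms of the logarithmic moment generating function of $\Gamma_N$. Using the identity $\langle Nv, \delta_N(x_1,\ldots,x_N)\rangle = \sum_i v(x_i)$ valid for $v \in C^0(X)$ and the identification of $v$ with the symmetric function $\sum_i v(x_i)$ on $X^N$,
\[
\widehat{\Gamma_N}(\beta_N N v) = \int_{X^N} e^{\beta_N \sum_i v(x_i) - \beta_N H^{(N)}}\, dV^{\otimes N} = e^{-\beta_N N\, \mathcal{F}_{\beta_N}(-v)}.
\]
Since $Z_{N,\beta_N} = \widehat{\Gamma_N}(0) = e^{-\beta_N N\, \mathcal{F}_{\beta_N}(0)}$, the rescaled log-Laplace transform of $\tilde{\Gamma}_N$ at speed $r_N := \beta_N N$ is
\[
\Lambda_N(v) := \frac{1}{r_N}\log \widehat{\tilde{\Gamma}_N}(r_N v) = -\mathcal{F}_{\beta_N}(-v) + \mathcal{F}_{\beta_N}(0) \longrightarrow \Lambda(v) := f(v) - f(0),
\]
with $f(v) := -\mathcal{F}(-v)$ Gateaux differentiable by assumption.

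Next I would verify the remaining hypotheses of G\"artner-Ellis. Since $X$ is compact, the support $\mathcal{M}_1(X)$ of $\tilde{\Gamma}_N$ is weakly compact by the Banach-Alaoglu theorem, so exponential tightness on $\mathcal{M}(X)$ is automatic. The abstract theorem then yields an LDP for $\tilde{\Gamma}_N$ at speed $r_N$ with good rate function $\Lambda^*(\mu) = f^*(\mu) + f(0)$, which equals $+\infty$ outside $\mathcal{M}_1(X)$. Finally, Lemma \ref{lem:ldp for nonnormalized meas} translates this into the LDP for $\Gamma_N$: the rate function differs from $\Lambda^*$ by the shift $-\lim_N r_N^{-1}\log Z_{N,\beta_N} = -f(0)$, producing exactly $f^*(\mu)$ as claimed.

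The chief obstacle is the essential-smoothness / exposed-hyperplane hypothesis built into the infinite-dimensional G\"artner-Ellis theorem, which is what promotes the (automatic) weak upper bound into a matching lower bound. In the present abstract formulation this is encoded precisely in the Gateaux differentiability of $\Lambda$ on all of $C^0(X)$: for any $\mu$ with $f^*(\mu) < \infty$, the Br{\o}ndsted-Rockafellar-type density statement of Lemma \ref{lem:leg-fench approx} furnishes $u_j \in C^0(X)$ with $df_{|u_j} \to \mu$ and $f^*(df_{|u_j}) \to f^*(\mu)$, and these ``exposed'' points provide the approximation needed to carry the lower bound through, exactly mirroring the strategy already used in the proof of Theorem \ref{thm:gibbs intro}.
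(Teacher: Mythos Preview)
Your proposal is correct and matches the paper's approach: the paper does not give a self-contained proof of this statement but simply cites \cite[Cor 4.6.14]{d-z}, after making exactly the Laplace-transform identification $\int_{X^N} e^{-\beta_N(H^{(N)}+u)}dV^{\otimes N} = \widehat{\Gamma_N}(-r_N u)$ that you spell out. Your normalization via $\tilde{\Gamma}_N$ and subsequent undoing through Lemma~\ref{lem:ldp for nonnormalized meas}, together with the observation that Gateaux differentiability (via Lemma~\ref{lem:leg-fench approx}) supplies the exposed-point hypothesis, is a faithful unpacking of what the cited corollary requires in this setting.
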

Compared with the Gärtner-Ellis theorem the main point of Theorem
\ref{thm:gibbs intro} is thus that, under the quasi-subharmonicity
assumption in the second point of the theorem, the assumption that
the convergence of the partition functions in formula \ref{eq:conv in g-e}
holds for $\beta=\infty$ is enough to ensure that one gets an LDP
for any $\beta<\infty.$ Moreover, as a consequence, the convergence
then also hold for any $\beta<\infty$ with $-\mathcal{F}_{\beta}(\cdot)$
defined as the Legendre-Fenchel transform of the rate functional $F_{\beta}$
appearing in Theorem \ref{thm:gibbs intro}. In fact, the latter convergence
is equivalent to the LDP in question, as made precise by the following 
\begin{lem}
\label{lem:appl of ge}Let $H^{(N)}$ be a sequence of Hamiltonians
on $X^{N}$ and $\beta_{N}$ a sequence of positive numbers such that
$\beta_{N}\rightarrow\beta\in]0,\infty[.$ Assume that, for any given
volume form $dV,$ the corresponding partition functions $Z_{N.\beta_{N}}$
satisfy
\[
\lim_{N\rightarrow\infty}-\frac{1}{N\beta_{N}}\log Z_{N.\beta_{N}}:=\inf_{\mu}F_{\beta},\,\,\,F_{\beta}:=E+D_{dV}/\beta,
\]
 with $E(\mu)$ convex. Then the measures $(\delta_{N})_{*}(e^{-\beta H^{(N)}}dV^{\otimes N})$
on $\mathcal{M}_{1}(X)$ satisfy, as $N\rightarrow\infty,$ an LDP
with \emph{speed} $\beta_{N}N$ and good \emph{rate functional $F_{\beta}.$
Moreover, if the asymptotics above also holds for $\beta=\infty$
with $E(\mu)$ strictly convex, then the LDP holds for $\beta=\infty,$
as well. }\end{lem}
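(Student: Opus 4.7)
The plan is to deduce the LDP by verifying the hypothesis of the Gärtner--Ellis theorem (Theorem~\ref{thm:g-e}), i.e.\ the convergence of $\mathcal{F}_{\beta_N}(u)$, as $N\to\infty$, to a Gâteaux-differentiable functional on $C^0(X)$. The only information available is the asymptotic of the \emph{untwisted} partition function for every volume form on $X$, so the strategy is to absorb the $u$-twist in the Boltzmann factor into a tilt of the reference measure.

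For $\beta<\infty$ and $u\in C^0(X)$, set $dV_u:=e^{-\beta u}\,dV$. A direct manipulation gives
\[
\int_{X^N} e^{-\beta_N(H^{(N)}+u)}\,dV^{\otimes N} = \int_{X^N} e^{-\beta_N H^{(N)}}\,e^{(\beta-\beta_N)\sum_i u(x_i)}\,dV_u^{\otimes N},
\]
and the extra integrand factor is pointwise bounded by $e^{N|\beta-\beta_N|\|u\|_\infty}$, contributing $o(1)$ to $\mathcal{F}_{\beta_N}(u)$ since $|\beta-\beta_N|/\beta_N\to 0$. Applying the standing hypothesis with reference volume form $dV_u$ and using the elementary identity
\[
D_{dV_u}(\mu)=D_{dV}(\mu)+\beta\,\langle u,\mu\rangle
\]
(immediate from $d\mu/dV_u = e^{\beta u}\,d\mu/dV$), one obtains
\[
\mathcal{F}_{\beta_N}(u)\longrightarrow \mathcal{F}_\beta(u):=\inf_{\mu\in\mathcal{M}_1(X)}\bigl(F_\beta(\mu)+\langle u,\mu\rangle\bigr)=-F_\beta^*(-u).
\]

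Gâteaux differentiability of $\mathcal{F}_\beta$ follows from strict convexity of $F_\beta$, which is supplied for free by the relative entropy $D_{dV}$, combined with the weak compactness of $\mathcal{M}_1(X)$ and lower semi-continuity of $F_\beta$: the infimum defining $\mathcal{F}_\beta(u)$ is attained at a unique $\mu_u$, a compactness/uniqueness argument shows $\mu_{u+tv}\to\mu_u$ weakly as $t\to 0$ for every $v\in C^0(X)$, and the envelope identity yields $d\mathcal{F}_\beta|_u=\mu_u$. The Gärtner--Ellis theorem then produces the LDP with rate $f^*$, where $f(u):=-\mathcal{F}_\beta(-u)=F_\beta^*(u)$; the rate therefore equals $F_\beta^{**}$, which coincides with $F_\beta$ since the latter is already convex and lower semi-continuous.

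The main obstacle is the ``moreover'' clause at $\beta=\infty$: the tilt $dV_u=e^{-\beta u}\,dV$ degenerates and the entropy regularizer is no longer available to enforce strict convexity of the rate functional, which is exactly why the hypothesis is strengthened to \emph{strict} convexity of $E$. Here I would adapt the argument by inserting Chebyshev-type upper bounds (exponentially tilting along the linear functional $\mu\mapsto\langle u,\mu\rangle$) to separate closed sets from the unique minimizer $\mu_*$ of $E$, and by extracting the matching lower bound from the concentration of the untwisted measures at $\mu_*$ already implied by the assumed partition-function asymptotic. Matching the Legendre identification of the rate functional at $\beta=\infty$ without an entropy term is the most delicate point of this extension.
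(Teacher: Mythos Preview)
Your argument for the main case $\beta\in\,]0,\infty[$ is correct and is exactly the paper's proof: tilt the reference measure to $dV_u=e^{-\beta u}dV$, apply the hypothesis for $dV_u$ together with $D_{dV_u}=D_{dV}+\beta\langle u,\cdot\rangle$ to obtain $\mathcal F_{\beta_N}(u)\to\inf_\mu\bigl(F_\beta(\mu)+\langle u,\mu\rangle\bigr)$, note that strict convexity of $F_\beta$ (inherited from $D_{dV}$) makes this limit Gateaux differentiable, and conclude via G\"artner--Ellis with rate $F_\beta^{**}=F_\beta$.

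On the ``moreover'' clause at $\beta=\infty$: you correctly observe that the tilt $e^{-\beta u}dV$ degenerates, and the paper's proof is equally terse here---it does not supply a separate argument beyond noting that strict convexity of $E$ makes $E^*$ Gateaux differentiable, so that G\"artner--Ellis applies \emph{once} the pointwise limit $\mathcal F_{\beta_N}(u)\to-E^*(-u)$ is available. Establishing that limit from the partition-function hypothesis alone is precisely the difficulty you flag (for $\beta_N\to\infty$ the hypothesis is insensitive to $dV$, so no new information is gained by varying it). Your Chebyshev-and-concentration sketch is too vague to close this, and the paper does not fill it either; in the paper's applications the clause is not load-bearing, since for $\beta=\infty$ the required convergence of $\mathcal F_{\beta_N}(u)$ is supplied directly by Theorem~\ref{thm A b-b}.
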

\begin{proof}
Fixing a volume form $dV$ and applying the asymptotics in the lemma
to the volume forms $e^{-\beta u}dV$ for any $u\in C^{0}(X)$ reveals
that the asymptotics \ref{eq:conv in g-e} hold with $f_{\beta}$
given by the Legendre-Fenchel transform of $E+D_{dV}/\beta.$ Now,
if $E$ is convex, then $E+D_{dV}/\beta$ is strictly convex (since
$D_{dV}$ is) and hence it follows from basic convex duality that
$f_{\beta}$ is Gateaux differentiable. In fact, the differential
$\mu_{u}:=df_{\beta|u}$ is the unique minimizer attaining the sup
defining $f_{\beta}(u),$ viewed as the Legendre-Fenchel transform
of $E+D_{dV}/\beta.$ Equivalently, $\mu_{u}$ is the unique minimizer
of the strictly convex functional $\mu\mapsto E(\mu)+\left\langle u,\mu\right\rangle +D_{dV}(\mu)/\beta.$\end{proof}
\begin{rem}
\label{rem:g-e ldp conv}Let $\beta_{N}$ be sequence tending to $\infty.$
By convex duality the Gärtner-Ellis theorem may in the present setting,
be formulated as follows (also using Varadhan's lemma \cite{d-z}
in the converse): let $E_{N}$ be a sequence of functions on $\mathcal{M}_{1}(X).$
Then $e^{-\beta_{N}NE_{N}}(\delta_{N})_{*}(dV^{\otimes N})\sim e^{-\beta_{N}NE(\mu)}$
in the sense of a LDP, with $E(\mu)$ strictly convex iff $\beta_{N}N$
times the log of the Laplace transform of $e^{-\beta_{N}NE_{N}}(\delta_{N})_{*}(dV^{\otimes N})$
converges to the Gateaux differentiable function $E^{*}$on $C^{0}(X).$ 
\end{rem}

\subsection{\label{sub:Relations-to-the exist mean}Relations to the existence
of the mean energy}

Given a sequence of Hamiltonians $H^{(N)}$ on $X^{N}$ we set 
\[
\bar{E}_{N}(\mu):=\frac{1}{N}\int_{X^{N}}H^{(N)}\mu^{\otimes N},
\]
If the limit as $N\rightarrow\infty$ exists then we will call it
the\emph{ mean energy }of $\mu,$ denoted by $\bar{E}(\mu).$ 
\begin{example}
If $H^{(N)}$ is the mean field Hamiltonian associated to the pair
interaction potential $W$ (formula \emph{\ref{eq:mean field pair hamilton})}
then, trivially, $E(\mu)=\bar{E}_{N}(\mu)$ for any $\mu$ such that
$W\in L^{1}(\mu).$ 
\end{example}
It follows immediately from the definition that if the limit of $E_{N}^{*}(:=f_{N})$
appearing in formula \ref{eq:def of f_N in terms of H} exists then
\[
\bar{E}(\mu)\geq f^{*}(\mu).
\]
(but, in general this is a strict inequality, for example if $\bar{E}(\mu)$
is not convex). In particular, under the assumptions in Theorem \ref{thm:gibbs intro}
we have $\bar{E}(\mu)\geq E(\mu),$ where $E(\mu)$ appears as the
rate functional in Theorem \ref{thm:gibbs intro} for $\beta=\infty$
(using Lemma \ref{lem:limit of Z as inf}). Motivated by the complex
geometric applications discussed in Section \ref{sec:Outlook} this
leads one to consider the following
\begin{problem}
\label{prob:mean en}Show that the assumptions on $H^{(N)}$ in Theorem
\ref{thm:gibbs intro} imply that the corresponding mean energy $\bar{E}(\mu)$
exists when $\mu$ is a volume for (perhaps under additional appropriate
assumptions on $H^{(N)}$). 
\end{problem}
As illustrated by the following lemma this problem turns out to be
related to the asymptotics of the Gibbs measures with $\beta$ negative:
\begin{lem}
\label{lem:asympt for beta negative implies exist of mean}Assume
that there exists some negative $\beta_{0}$ such that for any $\beta\geq\beta_{0}$
the corresponding Gibbs measures are well-defined, i.e. $Z_{N.\beta_{N}}<\infty$
for $N$ sufficiently large. Moreover, assume that there exists a
functional $E(\mu)$ such that 
\begin{equation}
-\lim_{N\rightarrow\infty}\frac{1}{N}\log Z_{N.\beta_{N}}=\inf_{\mu\in\mathcal{M}_{1}(X)}\beta E(\mu)+D_{dV}(\mu)>-\infty,\label{eq:asympt in formula negative}
\end{equation}
 for any volume form $dV.$ Then the mean energy $\bar{E}(\mu)$ exists
for any volume form $\mu$ and $\bar{E}(\mu)=E(\mu).$\end{lem}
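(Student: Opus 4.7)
The plan is to extract the mean energy as the derivative, at $\beta=0$, of the normalized free energy and then apply the standard convex analysis principle that pointwise convergence of concave functions implies convergence of derivatives at points where the limit is differentiable.

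First I would fix a volume form $dV$ and consider, for $\beta$ in a neighborhood of $0$ (which by hypothesis lies in the interior of the interval $[\beta_0,\infty)$ where $Z_{N,\beta}<\infty$), the function $f_N(\beta):=-\frac{1}{N}\log Z_{N,\beta}$. A direct computation shows that $f_N$ is smooth and \emph{concave} on this interval, with second derivative equal to $-\frac{1}{N}\mathrm{Var}_{\mu_\beta^{(N)}}(H^{(N)})$. Differentiating under the integral sign gives
$$f_N'(0)=\frac{1}{N}\int_{X^N}H^{(N)}\,dV^{\otimes N}=\bar E_N(dV).$$
That the integrand is in $L^1$ is not an issue: since $H^{(N)}$ is lower semicontinuous on the compact space $X^N$ it is bounded below, and $Z_{N,\beta_0}<\infty$ for some $\beta_0<0$ forces $e^{|\beta_0|H^{(N)}}$ to be in $L^1(dV^{\otimes N})$, which dominates $H^{(N)}_+$ at infinity.

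Next I would use the hypothesis to identify the pointwise limit of $f_N$. For every $\beta\in[\beta_0,\infty)$ one has $f_N(\beta)\to f(\beta):=\inf_{\mu\in\mathcal{M}_1(X)}\bigl(\beta E(\mu)+D_{dV}(\mu)\bigr)$, and $f$ is concave in $\beta$ as an infimum of affine functions. At $\beta=0$ the infimum is $\inf D_{dV}=0$, attained \emph{uniquely} at $\mu=dV$ thanks to strict convexity of relative entropy. The envelope theorem then yields that $f$ is differentiable at $\beta=0$ with $f'(0)=E(dV)$.

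The third step is to invoke the classical fact that if concave functions $f_N$ converge pointwise on an open interval to a concave function $f$ which is differentiable at an interior point $\beta_*$, then $f_N'(\beta_*)\to f'(\beta_*)$; this follows immediately from sandwiching $f_N'(\beta_*)$ between secant slopes and passing to the limit in $N$ first, then in the secant endpoints. Applying this at $\beta_*=0$ combines with the previous two steps to give $\bar E_N(dV)\to E(dV)$, so the mean energy exists for $\mu=dV$ and equals $E(dV)$. Finally, since the asymptotic assumption is postulated for \emph{every} volume form, I would replay the argument with $dV$ replaced by an arbitrary volume form $\mu$: the unique entropy minimizer now becomes $\mu$ itself, yielding $\bar E(\mu)=E(\mu)$, as required.

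The only delicate point is verifying differentiability of $f$ at $\beta=0$; this is the step that truly uses the strict convexity of $D_{dV}$ to force uniqueness of the minimizer, and without it the convex-analysis transfer of pointwise convergence to convergence of derivatives would break down.
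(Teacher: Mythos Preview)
Your proposal is correct and follows essentially the same route as the paper: define $f_N(\beta)=-\tfrac{1}{N}\log Z_{N,\beta}$, note its concavity and that $f_N'(0)=\bar E_N(dV)$, identify the concave limit $f(\beta)=\inf_\mu(\beta E(\mu)+D_{dV}(\mu))$, use uniqueness of the minimizer at $\beta=0$ to get $f'(0)=E(dV)$, and transfer convergence of derivatives via basic convex analysis. Your write-up is in fact more careful than the paper's (you justify integrability of $H^{(N)}$ and isolate the role of strict convexity of $D_{dV}$), but the argument is the same.
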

\begin{proof}
First observe that, by Jensen's inequality, the number $f_{N}(\beta):=-\frac{1}{N}\log Z_{N.\beta}$
appearing in the right hand side above for $N$ is concave in $\beta$
(and, by assumption, finite). Moreover, $\partial f_{N}(\beta)/\partial\beta=\bar{E}_{N}(dV)$
at $\beta=0.$ Further more, the finite function $f(\beta)$ defined
by the right hand side in formula \ref{eq:asympt in formula negative}
is also concave, as it is an infimum of a family of linear functions
and for $\beta=0$ the infimum attained is precisely at $\mu=dV.$
Hence, by basic convex analysis, the derivative of $f$ at $\beta=0$
exists and is given by $E(dV).$ Finally, the proof is concluded by
using that if $f_{N}$ is a sequence of convex functions converging
point-wise to convex function $f$ such that $f_{N}$ and $f$ are
differentiable at $0$ then the corresponding derivatives at $0$
also converge.
\end{proof}
It should, however, be stressed that, if $H^{(N)}$ is too singular
then the partition function $Z_{N.\beta}$ is equal to $\infty,$
for any $\beta<0$ (even if $H^{(N)}$ is quasi-superharmonic as in
the assumptions of Theorem \ref{thm:gibbs intro}). Indeed, for the
mean field Hamiltonian corresponding to a pair interaction $W$ this
happens as soon as $W$ has a repulsive power-law singularity, i.e.
$W(x,y)\sim|x-y|^{\alpha}$ with $\alpha<0$ close to the diagonal.
On the other hand, in the case of a \emph{logarithmic} singularity
$Z_{N.\beta_{N}}$ is indeed finite for $\beta_{0}<0$ and sufficiently
close to $0$ (see \cite{bo-g} for the corresponding LDP in the setting
of the 2D vortex model). 

Using the Gibbs variational principle some converses to Lemma \ref{lem:asympt for beta negative implies exist of mean}
can be established \cite{berm10}, where the existence of the mean
energy is assumed (and some additional assumptions), by extending
the approach of Messer-Spohn \cite{m-s}. However is should be stressed
that the main point of our proof of Theorem \ref{thm:gibbs intro}
is that it does note rely on the existence of the mean energy $\bar{E}(\mu),$
which, as pointed above, is an open problem in the present setting.

\section{\label{sec:Applications-to-K=0000E4hler-Einstein}Applications to
Kähler-Einstein geometry}

In this section we will apply Theorem \ref{thm:gibbs intro} to complex
manifolds $X$ equipped with a line bundle $L,$ assuming that $L$
is positive. The extension to big line bundles (and varieties of positive
dimension) is given in the companion paper\cite{berm8}, using the
full power of the pluripotential theory developed in \cite{begz,bbgz,berm6}
(see the discussion in Section \ref{sub:The-generalization-to big}).

\subsection{\label{sub:K=0000E4hler-geometry-setup}Kähler geometry setup}

Let $X$ be an $n-$dimensional compact complex manifold and denote
by $J$ the corresponding complex structure viewed as an endomorphism
of the real tangent bundle satisfying $J^{2}=-I.$

\subsubsection{Kähler forms/metrics}

On a complex manifold $(X,J)$ anti-symmetric two forms $\omega$
and symmetric two tensors $g$ on $TX\otimes TX,$ which are $J-$invariant,
may be identified by setting 
\[
g:=\omega(\cdot,J\cdot)
\]
Such a real two form $\omega$ is said to be Kähler if $d\omega=0$
and the corresponding symmetric tensor $g$ is positive definite (i.e.
defines a Riemannian metric)\footnote{In the complex analysis literature a $J-$invariant two form $\omega$
is usually said to be of type $(1,1)$ since $\omega=\sum_{i,j}\omega_{ij}dz_{i}\wedge d\bar{z}_{j}$
in local holomorphic coordinates.}. Conversely, a Riemannian metric $g$ is said to be \emph{Kähler
}if it arises in this way (in Riemannian terms this means that parallel
transport with respect to $g$ preserves $J).$ By the local $\partial\bar{\partial}-$
lemma a two form $\omega$ is closed, i.e. $d\omega=0$ if and only
if $\omega$ may be locally expressed as $\omega=\frac{i}{2\pi}\partial\bar{\partial}\phi,$
in terms of a local smooth function $\phi$ (called a local potential
for $\omega)$. In real notation this means that 
\[
\omega=dd^{c}\phi,\,\,d^{c}:=-\frac{1}{4\pi}d(J\cdot)
\]
(and hence $\omega$ is Kähler iff $\phi$ is strictly plurisubharmonic).
The normalization above ensures that $dd^{c}\log|z|^{2}$ is a probability
measure on $\C.$ We will denote by $[\omega]\in H^{2}(X,\R)$ the
de Rham cohomology represented by $\omega.$ If $\omega_{0}$ is a
fixed Kähler form then, according to the global $\partial\bar{\partial}-$
lemma, any other Kähler metric in $[\omega_{0}]$ may be globally
expressed as 
\[
\omega_{\varphi}:=\omega_{0}+dd^{c}\varphi,\,\,\,\,\varphi\in C^{\infty}(X),
\]
where $\varphi$ is determined by $\omega_{0}$ up to an additive
constant. We set 
\[
\mathcal{H}(X,\omega):=\left\{ \varphi\in C^{\infty}(X):\,\omega_{\varphi}>0\right\} 
\]
The association $\varphi\mapsto\omega_{\varphi}$ thus allows one
to identify $\mathcal{H}(X,\omega)/\R$ with the space of all Kähler
forms in $[\omega_{0}].$

\subsubsection{\label{sub:Metrics-on-line}Metrics on line bundles and curvature}

Let $L$ be a holomorphic line bundle on $X$ and $\left\Vert \cdot\right\Vert $
a Hermitian metric on $L.$ The normalized curvature two form of $\left\Vert \cdot\right\Vert $
may be (locally) written as 
\[
\omega:=-dd^{c}\log\left\Vert s\right\Vert ^{2},
\]
 in terms of a given local trivialization holomorphic section $s$
of $L.$ The corresponding cohomology class $[\omega]$ is independent
of the metric $\left\Vert \cdot\right\Vert $ on $L$ and coincides
with the first Chern class $c_{1}(L)$ in $H^{2}(X,\R)\cap H^{2}(X,\Z)$
(conversely, any such cohomology class is the first Chern class of
line bundle $L).$ A line bundle $L$ is said to be\emph{ positive}
if it admits a metric with positive curvature, i.e. such that the
curvature form $\omega$ is Kähler. Fixing a reference metric $\left\Vert \cdot\right\Vert $on
$L$ with curvature form $\omega_{0}$ any other metric on $L$ may
be expressed as $\left\Vert \cdot\right\Vert e^{-u/2},$ for $u\in C^{\infty}(X)$
and its curvature is positive iff $u\in\mathcal{H}(X,\omega).$ When
$L$ is the canonical line bundle $K_{X},$ i.e. the top exterior
power of the holomorphic cotangent bundle of $X:$ 
\[
K_{X}:=\det(T^{*}X)
\]
any volume form $dV$ on $X$ induces a smooth metric $\left\Vert \cdot\right\Vert _{dV}$
on $K_{X},$ by locally setting $\left\Vert dz\right\Vert _{dV}:=dz/dV,$
where $dz:=dz_{1}\wedge\cdots\wedge dz_{n}$ in terms of local holomorphic
coordinates. When $dV$ is the volume form of a given Kähler metric
$\omega$ on $X,$ i.e. $dV=\omega^{n}/n!,$ then its curvature form
may be identified with minus the Ricci curvature of $\omega,$ i.e.
\begin{equation}
\mbox{\ensuremath{\mbox{Ric}}\ensuremath{\omega=}}-dd^{c}\log\frac{dV}{c_{n}dz\wedge d\bar{z}},\label{eq:def of ricci curv}
\end{equation}
 where $c_{n}dz\wedge d\bar{z}$ is a short hand for the local Euclidean
volume form $\frac{i}{2}dz_{1}\wedge d\bar{z}_{1}\wedge\cdots\wedge\frac{i}{2}dz_{n}\wedge d\bar{z}_{n}.$
By a slight abuse of notation we will also write $\mbox{Ric \ensuremath{(dV)} }$for
the right hand side in formula \ref{eq:def of ricci curv}.

\subsubsection{\label{sub:Twisted-K=0000E4hler-Einstein-metrics}Twisted Kähler-Einstein
metrics }

A Kähler metric $\omega_{\beta}$ is said to be a \emph{twisted Kähler-Einstein
metric} if it satisfies the twisted Kähler-Einstein equation 
\begin{equation}
\mbox{\ensuremath{\mbox{Ric}}\ensuremath{\omega}}=-\beta\omega+\eta,\label{eq:tw ke eq in text}
\end{equation}
where the form $\eta$ is called the\emph{ twisting form. }Since $\omega_{\beta}$
is Kähler the form $\eta$ is necessarily closed and $J-$invariant.
The corresponding equation at the level of cohomology classes is

\[
[\eta]=-c_{1}(K_{X})+\beta[\omega]
\]
Fixing, once and for all, a volume form $dV$ on $X$ gives the following
one-to-one corresponds between twisting forms $\eta$ and Kähler forms
$\omega_{0}$ solving the cohomological equation above: 
\begin{equation}
\eta:=\beta\omega_{0}+\ensuremath{\mbox{Ric}}\ensuremath{dV}.\label{eq:def of eta}
\end{equation}
The following lemma then follows directly from the expression \ref{eq:def of ricci curv}
for the Ricci curvature of a Kähler metric:
\begin{lem}
\label{lem:twisted}Let $X$ be a compact complex manifold endowed
with a $J-$invariant and closed form $\eta.$ Then a Kähler form
$\omega_{\beta}$ solves the corresponding twisted Kähler-Einstein
equation \ref{eq:tw ke eq in text} iff $\omega_{\beta}:=\omega_{0}+dd^{c}\varphi_{\beta}$
for a unique $\varphi_{\beta}\in\mathcal{H}(X,\omega)$ solving the
PDE
\begin{equation}
\omega_{\varphi}^{n}=e^{\beta\varphi}dV\label{eq:ma eq with beta in text}
\end{equation}
The celebrated Aubin-Yau theorem may now be formulated as follows:\end{lem}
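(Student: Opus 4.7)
My plan is to prove the equivalence by a direct local computation based on the definition \eqref{eq:def of ricci curv} of the Ricci form together with the definition \eqref{eq:def of eta} of the twisting form $\eta$. First I would use the splitting
$$\text{Ric}\,\omega_\varphi \;=\; -dd^c\log\frac{\omega_\varphi^n}{c_n\, dz\wedge d\bar z} \;=\; -dd^c\log\frac{\omega_\varphi^n}{dV} + \text{Ric}(dV),$$
valid for any $\varphi\in\mathcal H(X,\omega_0)$. The logarithm of $dV/c_n dz\wedge d\bar z$ is only locally defined, but its $dd^c$ is the globally well-defined smooth $(1,1)$-form $-\text{Ric}(dV)$, so the identity makes sense globally as an equality of $(1,1)$-forms on $X$.

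For the forward direction, assume $\varphi$ solves the Monge-Amp\`ere equation $\omega_\varphi^n = e^{\beta\varphi}dV$. Then the first term on the right of the displayed identity is $-\beta\, dd^c\varphi = -\beta(\omega_\varphi - \omega_0)$, and combining with $\eta := \beta\omega_0 + \text{Ric}(dV)$ yields exactly $\text{Ric}\,\omega_\varphi = -\beta\omega_\varphi + \eta$. For the converse, suppose $\omega_\beta = \omega_0 + dd^c\varphi$ with $\varphi\in\mathcal H(X,\omega_0)$ satisfies the twisted K\"ahler-Einstein equation \eqref{eq:tw ke eq in text}. Running the same identity in reverse and cancelling $\text{Ric}(dV)$ on both sides would give
$$dd^c\bigl(\log(\omega_\varphi^n/dV) - \beta\varphi\bigr) = 0,$$
so the smooth function in parentheses is pluriharmonic on the compact K\"ahler manifold $X$, hence equal to some real constant $c$. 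Replacing $\varphi$ by $\varphi - c/\beta$ --- which does not alter $\omega_\beta = \omega_0 + dd^c\varphi$ --- produces a potential solving the Monge-Amp\`ere equation on the nose.

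Uniqueness of the normalized $\varphi_\beta$ then follows immediately: any two potentials $\varphi_1,\varphi_2\in\mathcal H(X,\omega_0)$ yielding the same $\omega_\beta$ differ by a constant (by the global $\partial\bar\partial$-lemma), and comparing their Monge-Amp\`ere equations forces the exponential of $\beta$ times that constant to equal $1$, so the constant vanishes provided $\beta\neq 0$. The whole argument is essentially bookkeeping, and I do not anticipate any substantive obstacle; the only subtlety is to be careful that the locally defined logarithm $\log(dV/c_n dz\wedge d\bar z)$ is handled only through its globally well-defined $dd^c$, which is what makes the string of identities above rigorous on all of $X$.
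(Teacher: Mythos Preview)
Your argument is correct and is precisely the direct computation from the definition \eqref{eq:def of ricci curv} of the Ricci form that the paper alludes to (the paper only states that the lemma ``follows directly from the expression \eqref{eq:def of ricci curv} for the Ricci curvature of a K\"ahler metric'' and gives no further details). Your handling of the converse direction---noting that $dd^c$-closed smooth functions on compact $X$ are constant and then absorbing the constant into $\varphi$---and of uniqueness are exactly what is needed to make that one-line remark precise.
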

\begin{thm}
(Aubin-Yau) \cite{au,y} Given a compact complex manifold $X,$ endowed
with a Kähler form $\omega_{0}$ and a volume form $dV,$ The PDE
\ref{eq:ma eq with beta in text} admits, for any positive number
$\beta,$ a unique solution $\varphi_{\beta}\in\mathcal{H}(X,\omega).$
Equivalently, given a $J-$invariant and closed form $\eta$ such
that the class $[\eta]+c_{1}(K_{X})$ is positive (i.e. contains a
Kähler form) there exists a unique Kähler metric $\omega_{\beta}$
solving the twisted Kähler-Einstein equation \ref{eq:tw ke eq in text}.\end{thm}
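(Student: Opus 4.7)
The plan is to solve the Monge--Amp\`ere equation \eqref{eq:ma eq with beta in text} by the classical continuity method of Aubin--Yau, then deduce the equivalent twisted K\"ahler--Einstein formulation directly from Lemma \ref{lem:twisted}. Writing $dV = e^{f}\omega_0^n$ for some $f \in C^\infty(X)$, I consider the family
\begin{equation*}
(\omega_0 + dd^c \varphi_t)^n = e^{\beta \varphi_t + t f}\,\omega_0^n,\qquad t\in[0,1],
\end{equation*}
whose $t=0$ instance is solved by $\varphi_0 \equiv 0$ and whose $t=1$ instance is the desired PDE. Let $S \subset [0,1]$ be the set of $t$ for which a smooth solution $\varphi_t \in \mathcal{H}(X,\omega_0)$ exists; the goal is to show $S = [0,1]$ by verifying non-emptiness, openness and closedness.

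Uniqueness and the $L^\infty$ bound come together from the maximum principle, and this is where positivity of $\beta$ is decisive. At a point where a solution $\varphi_t$ attains its maximum, $dd^c\varphi_t \le 0$ forces $\omega_{\varphi_t} \le \omega_0$ as $(1,1)$-forms, hence $\omega_{\varphi_t}^n \le \omega_0^n$; the equation then gives $\beta\sup\varphi_t \le t\|f\|_\infty$, with an analogous lower bound at the minimum, so $\|\varphi_t\|_\infty \le C(\beta, f)$ uniformly in $t$. The same argument applied to the difference $\varphi - \psi$ of two solutions of the equation at the same parameter shows $\varphi - \psi \le 0$ at its maximum, whence $\varphi \equiv \psi$ by symmetry.

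Openness of $S$ follows from the implicit function theorem applied to $\Phi(\varphi,t) := \log(\omega_\varphi^n/\omega_0^n) - \beta\varphi - tf$ on $C^{2,\alpha}(X)\times[0,1]$. Its $\varphi$-linearization at a solution $\varphi_{t_0}$ is the elliptic operator $\Delta_{\omega_{\varphi_{t_0}}} - \beta$, which is self-adjoint with spectrum contained in $(-\infty, -\beta]$ (since the Laplacian on a compact K\"ahler manifold is non-positive with kernel given by constants), hence a Banach space isomorphism $C^{2,\alpha}\to C^{0,\alpha}$. The implicit function theorem then produces solutions $\varphi_t$ for $t$ in a neighbourhood of $t_0$.

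The deepest step, and the main obstacle, is closedness, which requires uniform higher-order a priori estimates. Given the $L^\infty$ bound, Yau's second-order estimate---obtained by applying the maximum principle to $\log(n + \Delta_{\omega_0}\varphi_t) - A\varphi_t$ for $A$ sufficiently large in terms of a lower bound on the bisectional curvature of $\omega_0$, and using the equation to control $\Delta_{\omega_{\varphi_t}}$ of the right-hand side---produces a uniform upper bound on $\mathrm{tr}_{\omega_0}\omega_{\varphi_t}$. Combined with the Monge--Amp\`ere equation this gives two-sided control of the eigenvalues of $\omega_{\varphi_t}$ relative to $\omega_0$, and in particular uniform ellipticity of $\Phi$. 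Evans--Krylov then upgrades this to a uniform $C^{2,\alpha}$ estimate, after which Schauder bootstrapping delivers uniform $C^{k,\alpha}$ bounds to all orders. Hence $S$ is closed, so $S = [0,1]$, yielding the smooth solution at $t=1$; the equivalent twisted K\"ahler--Einstein statement follows at once from Lemma \ref{lem:twisted}.
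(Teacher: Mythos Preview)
The paper does not give its own proof of this statement: it is quoted as the classical Aubin--Yau theorem and attributed to \cite{au,y}, then used as a black box in the proof of Theorem \ref{thm:def determ}. Your outline via the continuity method is precisely the original approach of Aubin and Yau, and the sketch is correct (maximum principle for uniqueness and the $L^\infty$ bound, invertibility of $\Delta_{\omega_{\varphi_{t_0}}}-\beta$ for openness, Yau's $C^2$ estimate plus Evans--Krylov and Schauder for closedness), so there is nothing to compare.
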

\begin{example}
\label{exa:ke when K_X pos}A complex manifold $X$ admits a Kähler-Einstein
metric with negative Ricci curvature iff $K_{X}$ is positive (and
the metric is unique). Indeed, if $K_{X}$ is positive then, by the
very definition of positivity, we can take $\omega_{0}:=-\ensuremath{\mbox{Ric}}\ensuremath{dV}$
for some volume form on $X,$ ensuring that $\eta=0$ above, with
$\beta=1$ (and the converse is trivial). \end{example}
\begin{rem}
When $n\geq2$ the equation is \ref{eq:tw ke eq in text} precisely
the trace-reversed formulation of Einstein's equations on $X$ (with
Euclidean signature): $-\beta$ is the cosmological constant and $\eta$
is the trace-reversed stress-energy tensor. Here we are only concerned
with the solutions which are Kähler metrics.
\end{rem}

\subsubsection{The projection operator $P_{\omega_{0}}$ to the space $PSH(X,\omega_{0})$ }

Next, we recall the definition of the operator $P$ introduced in
\cite{b-b} (which turns out to be related to the limit as $\beta\rightarrow\infty$
of the equations \ref{eq:ma eq with beta in text}). Given $u\in C^{0}(X)$
we set 
\begin{equation}
(Pu)(x):=\sup_{\varphi\in\mathcal{H}(X,\omega_{0})}\{\varphi(x):\,\,\,\varphi\leq u\}\label{eq:def of proj operator in khler case}
\end{equation}
which defines an operator 
\[
P:\,C^{0}(X)\rightarrow PSH(X,\omega_{0})
\]
from $C^{0}(X)$ to the space $PSH(X,\omega_{0})$ of all $\omega_{0}-$psh
functions on $X,$ i.e. all upper semi-continuous functions $\varphi$
in $L^{1}(X)$ such that $\omega_{\varphi}\geq0$ in the sense of
currents. In fact, the operator $P$ preserves $C^{0}(X)$ and hence
defines a projection operator from $C^{0}(X)$ onto $PSH(X,\omega_{0})\cap C^{0}(X).$
More generally, if $u\in C^{\infty}(X),$ the current $dd^{c}(Pu)$
has coefficients in $L_{loc}^{\infty},$ i.e. 
\begin{equation}
\omega_{Pu}\in L_{loc}^{\infty}\label{eq:reg of projection}
\end{equation}
In fact, as shown in \cite{berm9}, taking $dV$ in equation \ref{eq:ma eq with beta in text}
to be of the form $dV=e^{-\beta u}dV$ one has 
\[
\lim_{\beta\rightarrow\infty}\varphi_{\beta}=Pu
\]
 uniformly on $X$ and with a uniform upper bound on the corresponding
Kähler forms $\omega_{\varphi_{\beta}}.$

\subsubsection{The Monge-Ampère operator and the functionals $\mathcal{E}$ and
$\mathcal{F}$ }

The second order operator 
\begin{equation}
\varphi\mapsto MA(\varphi):=\omega_{\varphi}^{n},\label{eq:ma}
\end{equation}
 appearing in the equation \ref{eq:ma eq with beta in text}, is the
complex \emph{Monge-Ampère operator} (with respect to the reference
form $\omega_{0}).$ \footnote{The terminology stems from the fact that when $\omega_{0}=0$ (which
can always be locally arranged by shifting $\varphi)$ the density
of $MA(\varphi)$ is proportional to the determinant of the complex
Hessian $\partial\bar{\partial}\varphi.$} By Stokes theorem 
\[
\int_{X}\omega_{\varphi}^{n}=\int_{X}\omega_{0}^{n}:=V
\]
which is hence a positive number independent of $\varphi\in C^{\infty}(X).$
Up to a trivial scaling we may and will assume that $V=1.$ When $n=1$
the operator $MA$ may be identified with the Laplacian, but when
$n\geq2$ it is fully non-linear. The one-form on $C^{\infty}(X)$
defined by $MA$ is closed and hence admits a primitive $\mathcal{E},$
i.e. a functional on $C^{\infty}(X)$ whose differential is given
by 
\begin{equation}
d\mathcal{E}_{|\varphi}=MA(\varphi).\label{eq:def of energyfunc as primitive}
\end{equation}
The functional $\mathcal{E}$ is only determined up to an additive
constant which may be fixed by the normalization condition $\mathcal{E}(0)=0.$

Using pluripotential theory \cite{begz,bbgz} the operator $MA$ can
be extended from $\mathcal{H}_{\omega_{0}}$ to all of $PSH(X,\omega_{0})$
giving a positive measures satisfying 
\[
\int_{X}MA(\varphi)\leq V(:=1)
\]
Similarly the functional $\mathcal{E}$ also extends from $\mathcal{H}(X,\omega_{0})$
to an increasing lower-semi continuous functional on $PSH(X,\omega_{0}).$
We then set 
\begin{equation}
\mathcal{F}(u):=(\mathcal{E}\circ P)(u),\label{eq:def of F as energy comp P}
\end{equation}
 which by \cite{b-b} defines a Gateaux differentiable functional
on $C^{0}(X).$ More precisely, \emph{
\begin{equation}
(d\mathcal{F})_{|u}=MA(Pu).\label{eq:formula for diff}
\end{equation}
}This setup leads to a direct variational approach for solving complex
Monge-Ampère equations, including the Aubin-Yau equation \ref{eq:ma eq with beta in text},
in the more general setting of big cohomology classes and singular
volume forms $dV$ \cite{bbgz} (compare Section \ref{sub:The-generalization-to big}).
However, in the present setting where $L$ is positive the pluripotential
theory can be dispensed with by observing that $MA(\varphi)$ is a
well-defined probability measure as long as $\omega_{\varphi}$ is
in $L_{loc}^{\infty}$ (using that $MA(\varphi)$ is point-wise defined
almost everywhere on $X$ ). Then $\mathcal{F}(u)$ may be defined
by first taking $u$ to be in $C^{\infty}$and using the regularity
result \ref{eq:reg of projection} for $Pu.$ One then defines $\mathcal{F}$
on $C^{0}(X)$ as the unique continuous extension of $\mathcal{F}$
from $C^{\infty}(X),$ using that $\mathcal{F}(u)$ is Lipschitz continuous
on $C^{\infty}(X)$ with respect to the $C^{0}-$norm (as follows
form general principles; see the beginning of the proof of Lemma \ref{lem:R-F duality etc}).

\subsection{The ``temperature deformed'' determinantal point processes on $X$}

Let $(X,L)$ be a polarized manifold, i.e. an $n-$dimensional complex
compact manifold $X$ endowed with a positive holomorphic line bundle
$L.$ We will denote by $H^{0}(X,kL)$ the space of all global holomorphic
sections with values in the $k$ th tensor power of $L$ (using additive
notation for tensor powers). By the Hilbert-Samuel theorem 
\[
N_{k}:=\dim H^{0}(X,kL)=Vk^{n}+o(k^{n}),
\]
where $V=\int_{X}c_{1}(L)^{n}>0.$

To the data $(\left\Vert \cdot\right\Vert ,dV,\beta_{k})$ consisting
of a Hermitian metric $\left\Vert \cdot\right\Vert $ on $L,$ a volume
form $dV$ on $X$ and a sequence of positive number $\beta_{k}$
we can associate the following sequence of symmetric probability measures
on $X^{N_{k}}:$ 
\begin{equation}
\mu^{(N_{k},\beta)}:=\frac{\left\Vert (\det S^{(k)})(x_{1},x_{2},...x_{N_{k}})\right\Vert ^{2\beta_{k}/k}dV^{\otimes N_{k}}}{Z_{N_{k},\beta}}\label{eq:prob measure def det text}
\end{equation}
where $\det S^{(k)}$ is a generator of the top exterior power $\Lambda^{N_{k}}H^{0}(X,kL),$
viewed as a one-dimensional subspace of $H^{0}(X^{N_{k}},(kL)^{\boxtimes N_{k}})$
under the usual isomorphism between $H^{0}(X^{N_{k}},(kL)^{\boxtimes N_{k}})$
and the $n$ fold tensor product of $H^{0}(X,L).$ The number $Z_{N_{k},\beta}$
is the normalizing constant 
\begin{equation}
Z_{N_{k},\beta}:=\int_{X^{N_{k}}}\left\Vert \det S^{(k)}\right\Vert ^{2\beta/k}dV^{\otimes N_{k}}\label{eq:def of partion function in beta-setting text}
\end{equation}
By homogeneity the probability measure $\mu^{(N_{k},\beta)}$ is independent
of the choice of generator $\det S^{(k)}$ and thus only depends on
the data $(\left\Vert \cdot\right\Vert ,dV,\beta_{k}).$ We will refer
to to the corresponding random point processes on $X,$ as the\emph{
temperature deformed determinantal point processes} on $X$ attached
to $(\left\Vert \cdot\right\Vert ,dV,\beta_{k})$ (the special case
$\beta_{k}=k$ defines a bona fide determinantal point process, as
recalled below). 
\begin{rem}
\label{rem:invariance}Since the transformation $(\left\Vert \cdot\right\Vert ,dV,\beta_{k})\mapsto(\left\Vert \cdot\right\Vert e^{-u/2},e^{u\beta_{k}}dV,\beta_{k}),$
for $u\in C^{0}(X),$ leaves the probability measure \ref{eq:prob measure def det text}
invariant, the processes above only depend on the data $(\left\Vert \cdot\right\Vert ,dV,\beta_{k})$
through the corresponding two form $\eta,$ defined by formula \ref{eq:def of eta}.
Moreover, to any twisting form $\eta$ such that the cohomology class
$([\eta]+c_{1}(K_{X}))/\beta_{k}$ defines a positive class in $H^{2}(X,\R)\cap H^{2}(X,\Z),$
i.e. is the first Chern class of a line bundle $L,$ arises from a
choice of data $(\left\Vert \cdot\right\Vert ,dV,\beta_{k})$ (compare
Section \ref{sub:Twisted-K=0000E4hler-Einstein-metrics}). 
\end{rem}
It will be convenient to take $\det S^{(k)}$ to be the generator
determined by a basis $s_{1},...,s_{N_{k}}$ in $H^{0}(X,kL)$ which
is orthonormal with respect to the $L^{2}-$product determined by
$(\left\Vert \cdot\right\Vert ,dV)$ for any fixed volume form $dV$
on $X:$ 
\[
\left\langle s,s\right\rangle _{L^{2}}:=\int_{X}\left\Vert s\right\Vert ^{2}dV
\]
We then take $(\det S^{(k)})(x_{1},x_{2},...,x_{N}):=$ 
\begin{equation}
=\det(s_{i}(x_{j})):=\sum_{\sigma\in S_{N_{k}}}(-1)^{\mbox{sign\ensuremath{(\sigma)}}}s_{1}(x_{\sigma(1)})\cdots s_{N_{k}}(x_{\sigma(N_{k})})\label{eq:def of det S}
\end{equation}

\begin{example}
\label{exa:proj subm}The model case of a polarized manifold is $(X,L)=(\P^{m},\mathcal{O}(1)),$
where $\P^{m}(:=\C^{m}-\{0\})/\C^{*}$ is $m-$dimensional complex
projective space and $\mathcal{O}(1)$ is the hyperplane line bundle
over $\P^{m}$ (the model positively curved metric on $\mathcal{O}(1)$
is the Fubini-Study metric induced from the Euclidean metric on $\C^{m}$).
More generally, taking $X$ to be a non-singular algebraic variety
of $\P^{m}$ and $L$ as the restriction to $X$ of $\mathcal{O}(1)$
gives a polarized where the elements in $H^{0}(X,kL)$ are, for $k$
sufficiently large, the restrictions to $X$ of homogeneous polynomials
of degree $k$ on $\P^{m}$ (in fact, by the Kodaira embedding theorem
any polarized manifold $(X,L)$ may, after replacing $L$ with a sufficiently
high tensor power, be concretely realized as $(X,\mathcal{O}(1)_{|X})).$
In the case of $X=\P^{1}$ (=the Riemann sphere) with $\left\Vert \cdot\right\Vert $
denoting the Fubini-Study metric on $\mathcal{O}(1)$ whose curvature
form $\omega_{0}$ is the invariant measure on $\P^{1}$ one can take
the base $\{s_{i}\}$ to consist of monomials and factorize 
\[
\left\Vert \det S^{(k)}\right\Vert (x_{1},x_{2},...,x_{N})=Z_{N}\prod_{1\leq i<j<N}|x_{i}-x_{j}|,
\]
where $N=k+1$ and $X$ has been identified with the unit-sphere in
Euclidean $\R^{3}$ and where $Z_{N}=N^{N}\binom{N-1}{0}...\binom{N-1}{N-1}/N!.$
In the physics literature the corresponding ensemble appears as a
\emph{Coulomb gas} of $N$ unit-charge particles (i.e a one component
plasma) confined to the sphere in a neutralizing uniform background
$\omega$ (see for example \cite{ca}). More generally, on any Riemann
surface of genus $g$ the bosonization formula \cite{b-v-} gives
\begin{equation}
\left\Vert \det S^{(k)}\right\Vert (x_{1},...x_{N})=Z_{N}\exp\left(-\sum_{i\neq j}G(x_{i},x_{j})+r(x_{1},....,x_{N})\right)\label{eq:boson formula}
\end{equation}
where $G$ is the Green function of the Laplacian induced by the metric
$\omega_{0}$ and where the second term $r$ appearing above vanishes
for genus $g=0,$ while for $g>0$ it may be expressed in terms of
the Riemann eta function on the Jacobian torus of the Riemann surface
$X$ (giving a contribution which is lower order than the first term;
see \cite{z2} and references therein). However, when $n>1$ it should
be stressed that there is no tractable formula for $\left\Vert \det S^{(k)}\right\Vert (x_{1},...x_{N}),$
even to the leading order.
\end{example}
When $\beta_{k}=k$ the probability measure $\mu^{(N_{k},\beta_{k})}$
in formula \ref{eq:prob measure def det text}defines a \emph{determinantal
point process} i.e. its density can be written as 
\[
\left\Vert \det_{i,j\leq N}(K^{(k)}(x_{i},x_{j}))\right\Vert /N_{k}!,
\]
 where $K^{(k)}(x,y)$ denotes the kernel of the orthogonal projection
onto the space $H^{0}(X,kL)$ viewed as a subspace of the space $C^{\infty}(X,kL)$
of all smooth sections equipped with the $L^{2}-$norm determined
by $(\left\Vert \cdot\right\Vert ,dV)$ \cite{h-k-p,berm 1 komma 5}. 

The following result generalizes the LDP in \cite{berm 1 komma 5}
for determinantal point processes (or more generally for the case
$\beta=\infty)$ to the general case where $\beta_{k}\rightarrow\beta\in]0,\infty]:$ 
\begin{thm}
\label{thm:def determ}Let $(X,L)$ be a polarized manifold and assume
given the data $(\left\Vert \cdot\right\Vert ,dV,\beta_{k})$ consisting
of a Hermitian metric $\left\Vert \cdot\right\Vert $ on $L,$ a volume
form $dV$ on $X$ and a sequence of positive number $\beta_{k}\rightarrow\beta\in]0,\infty].$
Then the law of the empirical measures $\delta_{N_{k}}$ of the corresponding
deformed determinantal point processes with $N_{k}$ particles satisfies
a LDP with speed $\beta_{k}N_{k}$ and rate functional 
\[
F_{\beta}(\mu)=E_{\omega_{0}}(\mu)+\frac{1}{\beta}D_{dV}(\mu)-C_{\beta},
\]
 where $E_{\omega_{0}}(\mu)$ is the pluricomplex energy of $\mu$
with respect to the curvature form $\omega_{0}$ of $\left\Vert \cdot\right\Vert $
and 
\[
C_{\beta}=\inf_{\mathcal{M}_{1}(X)}F_{\beta}=-\lim_{N\rightarrow\infty}\frac{1}{N_{k}\beta_{k}}\log Z_{N.\beta_{k}},
\]
 In particular, $\delta_{N_{k}}$ converges in law to the deterministic
measure given by the unique minimizer $\mu_{\beta}$ of $F_{\beta}.$
Moreover, when $\beta<\infty$ the measure $\mu_{\beta}$ is the normalized
volume form $\omega_{\beta}$ of the twisted Kähler-Einstein metric
corresponding to the twisting form $\eta:=\beta\omega_{0}+\ensuremath{\mbox{Ric}}\ensuremath{dV}.$
\end{thm}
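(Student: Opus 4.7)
The plan is to realize the probability measure \ref{eq:prob measure def det text} as the Gibbs measure $e^{-\beta_k H^{(N_k)}}dV^{\otimes N_k}/Z_{N_k,\beta_k}$ associated to the complex geometric Hamiltonian
$$H^{(N_k)}(x_1,\ldots,x_{N_k}) := -k^{-1}\log \|(\det S^{(k)})(x_1,\ldots,x_{N_k})\|^2$$
on $X^{N_k}$, and then to invoke the abstract LDP machinery of Theorems \ref{thm:gibbs intro} and \ref{thm:g-conv of sh energies}. Both hypotheses of that machinery can be verified directly in the present polarized setting.

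\emph{Quasi-superharmonicity.} For any fixed $(x_2,\ldots,x_{N_k})$ the function $\det S^{(k)}(\cdot,x_2,\ldots,x_{N_k})$ is a holomorphic section of $kL$ whose reference metric has curvature $k\omega_0$. The Poincar\'e--Lelong formula therefore gives
$$dd^c_{x_1}\bigl(-\log\|\det S^{(k)}\|^2\bigr) = k\omega_0 - [\det S^{(k)}=0]\;\leq\; k\omega_0$$
as currents. Dividing by $k$ and tracing with respect to a fixed background K\"ahler metric shows $\Delta_{x_1}H^{(N_k)}\leq C$ uniformly in $k$. For the convergence of the free energy I would use the auxiliary sequence $\beta_{N_k}:=k$, so that
$$\mathcal{F}_{\beta_{N_k}}(u) \;=\; -\frac{1}{N_k k}\log\int_{X^{N_k}}\|\det S^{(k)}\|^2\, e^{-k\sum_i u(x_i)}\,dV^{\otimes N_k},$$
which equals $-(N_k k)^{-1}$ times the logarithm of the squared $L^2$-norm of $\det S^{(k)}$ in the Hermitian metric twisted by $\|\cdot\|e^{-u/2}$. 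By the main asymptotics of \cite{b-b}, this converges as $k\to\infty$ to $\mathcal{F}(u)=\mathcal{E}(Pu)$ (modulo an inessential additive constant), a functional which, by \ref{eq:formula for diff}, is Gateaux differentiable on $C^0(X)$ with differential $MA(Pu)$.

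\emph{Extraction of the LDP.} To accommodate a varying sequence $\beta_k\to\beta\in\,]0,\infty]$ I would invoke Theorem \ref{thm:g-conv of sh energies} rather than Theorem \ref{thm:gibbs intro} directly: the duality criterion of Section \ref{sub:Relations-to-Gamma} converts the convergence of $\mathcal{F}_{\beta_{N_k}}$ just obtained into $\Gamma$-convergence of $E_{N_k}(\mu) := H^{(N_k)}/N_k$ on $\mathcal{M}_1(X)$ toward the Legendre--Fenchel transform of $-\mathcal{F}(-\cdot)$. The resulting LDP then has speed $\beta_k N_k$ and rate functional $E+\beta^{-1}D_{dV}-C_\beta$, where by the pluripotential-theoretic variational framework of \cite{bbgz} the abstract limit $E$ coincides with the pluricomplex energy $E_{\omega_0}$.

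\emph{Minimizer and main obstacle.} Finally, for $\beta<\infty$ the second part of Theorem \ref{thm:gibbs intro} identifies the minimizer via the mean-field equation \ref{eq:abstract mean field eq intro}, which here reads $MA(Pu)=e^{\beta u}dV/\int e^{\beta u}dV$. The Aubin--Yau theorem produces a smooth $\omega_0$-psh solution $\varphi_\beta$ of $\omega_{\varphi_\beta}^n=c\,e^{\beta\varphi_\beta}dV$; since $P\varphi_\beta=\varphi_\beta$ for such a $\varphi_\beta$, one may take $u_\beta:=\varphi_\beta$ (with appropriate normalization), whence $\mu_\beta=MA(u_\beta)=\omega_{\varphi_\beta}^n$ is the normalized volume form of the twisted K\"ahler--Einstein metric with twisting form $\eta=\beta\omega_0+\mbox{Ric}\,dV$. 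The only nontrivial analytic input is the asymptotic formula of \cite{b-b}, whose proof rests on Bergman-kernel type estimates; the identification of the abstract Legendre transform with $E_{\omega_0}$ similarly rests on \cite{bbgz}. Granted these two established ingredients, the remainder is formal assembly from the results of the preceding sections.
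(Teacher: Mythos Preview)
Your proposal is correct and follows the paper's own proof essentially step for step: the same Hamiltonian, the same quasi-superharmonicity check via the $k\omega_0$-psh property of $\log\|\det S^{(k)}\|^2$, the same appeal to the Berman--Boucksom asymptotics \cite{b-b} with $\beta_{N_k}=k$, and the same identification of the minimizer by plugging the Aubin--Yau solution into the mean-field equation (which is exactly how the paper invokes Lemma~\ref{lem:R-F duality etc}). The one minor refinement is that you route through Theorem~\ref{thm:g-conv of sh energies} (the $\Gamma$-convergence version) to accommodate a genuinely varying sequence $\beta_k\to\beta$, whereas the paper simply cites Theorem~\ref{thm:gibbs intro}, which as stated is for fixed $\beta$; this is a cosmetic gain in precision rather than a different argument, since the paper itself shows the two theorems are equivalent via the duality criterion of Section~\ref{sub:Relations-to-Gamma}.
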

In fact, the Kähler form $\omega_{\beta}$ may be recovered directly
from the limiting volume form $\mu_{\beta}$ by differentiation twice
(as follow from the very definition of the twisted Kähler-Einstein
equation \ref{eq:tw ke eq in text}): 
\[
\omega_{\beta}:=\frac{i}{2\pi}\frac{1}{\beta}\partial\bar{\partial}\log\frac{\mu_{\beta}}{dV}+\omega_{0},
\]
Using basic compactness properties of the space $PSH(X,\omega_{0})$
one then arrives at the following corollary (see \cite{berm8} for
the proof):
\begin{cor}
\label{cor:tw ke text}Given data as in the previous theorem with
$\beta\in]0,\infty[,$ the following sequence of Kähler forms on $X$
\[
\omega^{(k)}:=dd^{c}\frac{1}{\beta}\log\frac{\int_{X^{N_{k}-1}}\left\Vert \det S^{(k)}(\cdot,x_{2},...x_{N_{k}})\right\Vert ^{2\beta/k}dV^{\otimes(N_{k}-1)}}{dV}+\omega_{0},
\]
 converges to the unique solution $\omega_{\beta}$ of the the twisted
Kähler-Einstein metric corresponding to the twisting form $\eta:=\beta\omega_{0}+\ensuremath{\mbox{Ric}}\ensuremath{dV}.$\end{cor}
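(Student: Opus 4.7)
The plan is to reinterpret $\omega^{(k)}$ as the curvature of a natural Hermitian metric built from the first marginal of the Gibbs measure \eqref{eq:prob measure def det text}, and then to combine the convergence in law from Theorem \ref{thm:def determ} with the compactness of $PSH(X,\omega_0)$. Writing $\rho^{(k)}$ for the density of the first marginal of $\mu^{(N_k,\beta)}$ with respect to $dV$, one has
\[
\omega^{(k)} \;=\; \omega_0 + dd^c\phi^{(k)},\qquad \phi^{(k)}:=\tfrac{1}{\beta}\log\rho^{(k)},
\]
since the partition function $Z_{N_k,\beta}$ is killed by $dd^c$. For each fixed $(x_2,\ldots,x_{N_k})$ the slice $\det S^{(k)}(\,\cdot\,,x_2,\ldots,x_{N_k})$ is a holomorphic section of $kL$ in the first variable, so $\tfrac{\beta}{k}\log\|\det S^{(k)}(\,\cdot\,,x_2,\ldots,x_{N_k})\|^2$ is $\beta\omega_0$-psh in $x_1$; since the log of an integral of exponentials of a family of $\omega$-psh functions is itself $\omega$-psh, $\phi^{(k)}\in PSH(X,\omega_0)$, and in particular $\omega^{(k)}\geq 0$ as a current.

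Next I would invoke Theorem \ref{thm:def determ}: the laws of the empirical measures $\delta_{N_k}$ concentrate on $\mu_\beta$, and so their expectations, which by symmetry are precisely the first marginals $\rho^{(k)}\,dV$, converge weakly to $\mu_\beta=f_\beta\,dV$, where $f_\beta:=d\mu_\beta/dV$. By the twisted Aubin--Yau equation of Lemma \ref{lem:twisted}, $\log f_\beta=\beta u_\beta+\mathrm{const}$ for the unique $u_\beta\in\mathcal{H}(X,\omega_0)$ with $\omega_\beta=\omega_0+dd^c u_\beta$, so the corollary reduces to showing that $\phi^{(k)}$ converges to $u_\beta$ modulo additive constants.

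The hard part is upgrading the weak convergence of measures $\rho^{(k)}\,dV\to f_\beta\, dV$ to the weak convergence of the potentials $\phi^{(k)}$, as $\log$ is not weakly continuous. The strategy is to exploit the $\omega_0$-plurisubharmonicity of $\phi^{(k)}$: the constraint $\int e^{\beta\phi^{(k)}}\,dV=1$ together with Jensen's inequality yields a uniform upper bound on $\int\phi^{(k)}\,dV$, and the Hartogs-type compactness of $PSH(X,\omega_0)$ then rules out $\phi^{(k)}\to-\infty$ locally uniformly and produces, along a subsequence, an $L^1$-limit $\phi_\infty\in PSH(X,\omega_0)$. Using the standard uniform upper bound for $\omega_0$-psh functions and dominated convergence, $\rho^{(k_j)}\,dV=e^{\beta\phi^{(k_j)}}\,dV\to e^{\beta\phi_\infty}\,dV$; comparing with $\rho^{(k)}\,dV\to f_\beta\, dV$ forces $\phi_\infty=u_\beta$ up to an additive constant. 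Uniqueness of the limit then promotes the convergence of subsequences to convergence of the full sequence, and $\omega^{(k)}=\omega_0+dd^c\phi^{(k)}\to\omega_0+dd^c u_\beta=\omega_\beta$ weakly as currents, as claimed.
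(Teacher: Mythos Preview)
Your argument is correct and matches the route the paper indicates: it only says ``using basic compactness properties of the space $PSH(X,\omega_{0})$'' and defers the details to \cite{berm8}, and what you have written is precisely a clean implementation of that sketch (first-marginal convergence from the LDP, $\omega_0$-psh potentials $\phi^{(k)}$ with the normalization $\int e^{\beta\phi^{(k)}}dV=1$ forcing a uniform $\sup$-bound via $\sup_X\phi\le \int_X\phi\,dV+C$, Hartogs compactness, and dominated convergence to identify the limit). The only point worth making explicit is that your ``uniform upper bound'' step really is the inequality $\sup_X\phi\le \int_X\phi\,dV+C$ for $\phi\in PSH(X,\omega_0)$, which is what converts the Jensen bound on $\int\phi^{(k)}dV$ into the $\sup$-bound needed both for Hartogs compactness and for the dominating function $e^{\beta C}$.
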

\begin{rem}
\label{rem:quasi-explicit}The previous corollary yields a quasi-explicit
way of approximating the solution $\omega_{\beta}$ to the twisted
KE equation in question (or equivalently the solution $\varphi_{\beta}$
of the corresponding complex Monge-Ampère equation \ref{eq:ma eq with beta in text}),
by performing integrals over the spaces $X^{N_{k}-1}$ of increasing
dimension. The procedure becomes explicit as soon as one has constructed
bases in the spaces $H^{0}(X,kL),$ for $k$ sufficiently large. 
\end{rem}

\subsubsection{The canonical random point processes on $X$ }

We start by recalling the basic fact that, by the very definition
of the canonical line bundle $K_{X},$ any holomorphic section $s_{k}$
of the $k$ th tensor power of $K_{X}$ (i.e. $s_{k}\in H^{0}(X,kK_{X}$)
induces a measure on $X,$ symbolically denoted by $(s_{k}\wedge\bar{s}_{k})^{1/k}.$
Concretely, given an open set $U\subset X$ with holomorphic coordinates
$(z_{1},...,z_{n})$ and writing $s_{k|U}=f_{k}dz^{\otimes k}$ for
a holomorphic function $f_{k}$ on $U,$ where $dz:=dz_{1}\wedge\cdots\wedge dz_{n}$
trivializes $K_{X}$ over $U,$ 
\[
(s_{k}\wedge\bar{s}_{k})_{|U}^{1/k}=|f_{k}|^{2/k}i^{n^{2}}dz\wedge d\bar{z},
\]
 which is independent of $U$ and thus defines a global measure on
$X$ (using any holomorphic atlas on $X).$ We also recall that any
volume form $dV$ on $X$ induces a metric $\left\Vert \cdot\right\Vert _{dV}$
on the canonical line bundle $K_{X}$ with the property that, if $s_{k}\in H^{0}(X,kK_{X})$
then $(s_{k}\wedge\bar{s}_{k})_{|U}^{1/k}$ may be expressed as 
\begin{equation}
(s_{k}\wedge\bar{s}_{k})_{|U}^{1/k}=\left\Vert s_{k}\right\Vert _{dV}^{2/k}dV,\label{eq:measure for s in terms of metric}
\end{equation}
 as follows immediately from the definitions.

Now, fixing a volume form $dV$ on $X$ we can apply the relation
\ref{eq:measure for s in terms of metric} to $X^{N}$ equipped with
the induced volume form $dV^{\otimes N}$ and the corresponding metric
on $L$ and deduce that the canonical probability measure $\mu^{(N_{k})}$
on $X^{N_{k}}$defined by formula \ref{eq:canon prob measure intro}
coincides with the probability measured in formula \ref{eq:prob measure def det text}
corresponding to the data $(\left\Vert \cdot\right\Vert _{dV},dV,1)$
Hence, Theorem \ref{thm:ke intro} is indeed a special case of Theorem
\ref{thm:def determ} (also using that $\eta=0$ for this particular
data).

\subsection{Proof of Theorem \ref{thm:def determ}}

To apply Theorem \ref{thm:gibbs intro} in the present setting first
note that the Hamiltonian is given by 
\begin{equation}
E^{(N_{k})}(x_{1},x_{2},...x_{N_{k}}):=-\frac{1}{k}\log\left\Vert (\det S^{(k)})(x_{1},x_{2},...x_{N_{k}})\right\Vert ^{2},\label{eq:def of Hamtiltonian in complex setting text}
\end{equation}
 where $\det S^{(k)}$ is defined by formula \ref{eq:def of det S}.
The validity of the first assumption in Theorem \ref{thm:gibbs intro}
is then a consequence of the following result from \cite{b-b}, where
$\beta_{N_{k}}=k:$ 
\begin{thm}
\label{thm A b-b} \cite{b-b}. Let $L\rightarrow X$ be a positive
line bundle equipped with a smooth Hermitian metric $\left\Vert \cdot\right\Vert $
on $L$ with curvature form $\omega_{0}$ and $dV$ a volume form
on $X.$ Then \emph{
\[
\lim_{k\rightarrow\infty}-\frac{1}{kN_{k}}\left(\log\int_{X^{N_{k}}}\left\Vert \det S^{(k)}\right\Vert ^{2}(x_{1},...,x_{N})e^{-ku(x_{1})-\cdots-ku(x_{N})}\right)=\mathcal{F}(u),
\]
where $\mathcal{F}$ is the Gateaux differentiable functional defined
by formula }\ref{eq:def of F as energy comp P}
\end{thm}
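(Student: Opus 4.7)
The plan is to rewrite the integral as a Gram determinant and then identify its large-$k$ asymptotic by a variational argument using Bergman kernels along a linear path of weights. Since $\{s_1,\ldots,s_{N_k}\}$ is orthonormal with respect to the $L^2$-product determined by $(\|\cdot\|,dV)$, antisymmetry of $\det S^{(k)}$ gives
\[
\int_{X^{N_k}} \|\det S^{(k)}\|^2 e^{-k\sum u(x_i)}\,dV^{\otimes N_k} \;=\; N_k!\,\det G_k(u),
\]
with Gram matrix $G_k(u)_{ij}:=\int_X\langle s_i,s_j\rangle_{\|\cdot\|^k}e^{-ku}\,dV$. Since $G_k(0)=I$ and $\log N_k!=O(k^n\log k)=o(kN_k)$, the theorem reduces to proving
\[
\mathcal{L}_k(u)\;:=\;-\frac{1}{kN_k}\log\det G_k(u)\;\longrightarrow\;\mathcal{F}(u)=\mathcal{E}(Pu)
\]
for $u\in C^\infty(X)$, with a routine extension to $u\in C^0(X)$ at the end.

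To obtain this limit I would differentiate along the linear path $u_t:=tu$, $t\in[0,1]$. A direct computation gives
\[
\frac{d}{dt}\mathcal{L}_k(u_t)\;=\;\frac{1}{N_k}\int_X u(x)\,B_{k,t}(x)\,dV(x),
\]
where $B_{k,t}$ is the diagonal of the Bergman kernel of $H^0(X,kL)$ equipped with the $L^2$-product induced by the metric $\|\cdot\|^k e^{-ktu/2}$ and the volume form $dV$, normalized so that $\int_X B_{k,t}\,dV=N_k$. The key analytic input is the Bergman-kernel asymptotic for not necessarily plurisubharmonic weights (the main content of \cite{b-b}): as $k\to\infty$,
\[
\frac{B_{k,t}\,dV}{N_k}\;\longrightarrow\;MA(Pu_t)\quad\text{weakly on }X
\]
(using the normalization $V=1$), where $P$ is the envelope operator from Section \ref{sub:K=0000E4hler-geometry-setup}. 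Testing against the continuous function $u$ yields
\[
\lim_{k\to\infty}\frac{d}{dt}\mathcal{L}_k(u_t)\;=\;\int_X u\,MA(Pu_t)\;=\;\frac{d}{dt}\mathcal{F}(u_t),
\]
the last identity being exactly \eqref{eq:formula for diff}. Since $|d\mathcal{L}_k(u_t)/dt|\le\|u\|_\infty$ uniformly in $k$ and $t$ by the trace normalization, dominated convergence in $t$ on $[0,1]$ gives $\mathcal{L}_k(u)-\mathcal{L}_k(0)\to\mathcal{F}(u)-\mathcal{F}(0)$, and both boundary terms vanish as $k\to\infty$.

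The main obstacle is the Bergman-kernel asymptotic itself in this low-regularity setting. For general $u\in C^\infty(X)$ the weight $\|\cdot\|^k e^{-ktu/2}$ need not have positive curvature, so the classical Tian--Yau--Zelditch asymptotic does not apply; moreover the envelope $Pu_t$ is typically only $C^{1,1}$-regular. Establishing the weak convergence $B_{k,t}\,dV/N_k\to MA(Pu_t)$ in this setting relies crucially on the regularity $\omega_{Pu_t}\in L^\infty_{\mathrm{loc}}$ from \eqref{eq:reg of projection}, combined with a localization of the Bergman kernel near the contact set $\{Pu_t=u_t\}$ (where $Pu_t$ is plurisubharmonic and one may apply the usual extremal/sub-mean techniques). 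Once the $u\in C^\infty$ case is established, extending to $u\in C^0(X)$ is straightforward: both $\mathcal{L}_k$ and $\mathcal{F}$ are $1$-Lipschitz on $C^0(X)$ with respect to the sup-norm (they shift by $+c$ under $u\mapsto u+c$ and are monotone in $u$), so uniform approximation of $u\in C^0$ by smooth functions closes the argument.
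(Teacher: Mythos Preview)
The paper does not prove this statement; it is quoted verbatim from \cite{b-b} and used as a black box input to verify the first hypothesis of Theorem~\ref{thm:gibbs intro}. So there is no ``paper's own proof'' to compare against.

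That said, your outline is an accurate reconstruction of the architecture of the proof in \cite{b-b}: the Gram-determinant identity reducing the integral to $N_k!\det G_k(u)$, the trace formula $\tfrac{d}{dt}\mathcal{L}_k(u_t)=\tfrac{1}{N_k}\int_X u\,B_{k,t}\,dV$ along the linear path, and the identification of the limit via the weak Bergman asymptotic $B_{k,t}\,dV/N_k\to MA(Pu_t)$. You also correctly flag that the entire difficulty sits in that last convergence for possibly non-psh weights, and that the $C^{1,1}$-type regularity of $Pu_t$ and the localization to the contact set are the relevant tools. What you have written is therefore a proof \emph{reduction}, not a proof: the step you label ``the main obstacle'' is precisely the theorem of \cite{b-b}, and you have not supplied an argument for it. If the intent was to sketch why the cited result holds, this is fine; if the intent was to give an independent proof, the gap is exactly the Bergman-kernel asymptotic itself.
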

To verify the second assumption in Theorem \ref{thm:gibbs intro},
concerning quasi-superharmonicity, we first observe that we may as
well assume that $dV$ is the volume form $dV_{g}$ of the metric
$g$ defined by the Kähler form $\omega_{0}.$ Indeed, $dV=e^{-u\beta}dV_{g}$
for some smooth function $u$ and hence changing $dV$ corresponds
to changing the metric $\left\Vert \cdot\right\Vert $ to $\left\Vert \cdot\right\Vert e^{-u/2}.$
Next, we recall that, in general, $\log\left\Vert s\right\Vert ^{2}$
is $k\omega-$psh for any holomorphic section $s$ of $kL\rightarrow X$
(where $\omega$ is the curvature form of $\left\Vert \cdot\right\Vert $).
Hence, we get, 
\[
\Delta_{g}\log\left\Vert s\right\Vert ^{2/k}\geq-\lambda
\]
 for some positive constant $\lambda.$ Applying the latter inequality
to $\left\Vert \det(s^{(k)}(\cdot,x_{2},...,x_{N})\right\Vert $ for
$x_{2},...,x_{N}$ thus shows that Theorem \ref{thm:gibbs intro}
can be applied to get the LDP in Theorem \ref{thm:def determ}. 

Next, we will show that the unique minimizer $\mu_{\beta}$ of the
rate functional $F_{\beta}$ appearing in Theorem \ref{thm:gibbs intro}
coincides with the normalized volume form $\omega_{\beta}$ of the
corresponding twisted Kähler-Einstein metric, by applying the general
Lemma \ref{lem:R-F duality etc}. It should however be stressed that
while the infimum in the left hand side of formula\ref{eq:inf is sup in lemma}
is always attained at some $\mu_{0}\in\mathcal{M}_{1}(X)$ (by weak
compactness and lower-semi continuity) this is not so for the right
hand side, in general. But in the present setting the sup is attained,
when $L$ is assumed to be positive, thanks to the Aubin-Yau theorem.
Indeed, first setting 
\[
g(u)=\beta^{-1}\log\int e^{\beta u}dV,
\]
 for a given $\beta\in]0,\infty[$ gives $g^{*}(\mu)=\beta^{-1}D_{dV}(\mu)$
if $\mu\in\mathcal{M}_{1}(X)$ and $g^{*}(\mu)=\infty$ otherwise,
as is well-known \cite{d-z} (and follows from Jensen's inequality
applied to the log). Moreover, by the dominated convergence theorem
\[
dg_{|u}=\frac{e^{\beta u}dV}{\int_{X}e^{\beta u}dV}\in\mathcal{M}_{1}(X)
\]
 Letting $\mathcal{F}$ be the functional on $C^{0}(X)$ defined by
formula \ref{eq:def of F as energy comp P} the critical point equation
\ref{eq:crit pt eq} thus becomes 
\[
MA(Pu)=\frac{e^{\beta u}dV}{\int_{X}e^{\beta u}dV},
\]
when $u$ is smooth, say. Up to replacing $u$ by $u+C$ we may as
well assume that the denominator above is equal to $1.$ In particular,
when $u\in\mathcal{H}(X,\omega)$ the equation above is precisely
the Aubin-Yau equation \ref{eq:ma eq with beta in text}, which, by
the Aubin-Yau theorem admits a (unique) solution $u_{\beta}\in\mathcal{H}(X,\omega).$
Hence, by the previous lemma $\mu_{\beta}:=MA(u_{\beta})$ is the
unique minimizer of the rate functional $F_{\beta}$ appearing in
Theorem \ref{thm:gibbs intro}, in the present setting. Finally, as
explained in Section \ref{sub:Twisted-K=0000E4hler-Einstein-metrics}
$\mu_{\beta}$ is the volume form of the Kähler form $\omega_{\beta}$
solving the twisted Kähler-Einstein equation \ref{eq:tw ke eq in text}. 
\begin{rem}
\label{rem:energy}To see the relation to the pluricomplex energy
introduced in \cite{bbgz} we write, as in formula \ref{eq:proof of lemma inf is sup},
\[
f^{*}(\mu)=\sup_{u\in C^{0}(X)}\mathcal{E}(Pu)-\left\langle u,\mu\right\rangle ,
\]
when $\mu\in\mathcal{M}_{1}(X),$ which coincides with the\emph{ pluricomplex
energy }of $\mu,$ with respect to $\omega_{0}$ in \cite{bbgz} (using
the notation in \cite{berm6}). More concretely, a direct calculation
reveals that when $\mu$ is a volume form 
\begin{equation}
E(\mu)=\frac{1}{V}\sum_{j=0}^{n-1}\frac{1}{j+2}\int_{X}d\varphi_{\mu}\wedge d^{c}\varphi_{\mu}\wedge\frac{(dd^{c}\varphi_{\mu}+\omega_{0})^{j}}{j!}\wedge\frac{\omega_{0}^{n-1-j}}{(n-1-j)!},\label{eq:energy in terms of potential}
\end{equation}
where $\varphi_{\mu}\in\mathcal{H}(X,\omega_{0})$ is the solution
to the Calabi-Yau equation \ref{eq:c-y eq intro}, which in Aubin's
notation \cite{au} means that $E(\mu)=c_{n}(I-J)(\varphi_{\mu})$
(using \cite{begz} the formula above holds for any $\mu$ such that
$E(\mu)<\infty,$ by letting $\wedge$ denote the non-pluripolar products
\cite{begz}). Thus $E(\mu)$ is a generalization of the classical
Dirichlet energy on a Riemann surface. The relation $F_{\beta}(\omega^{n})=\mbox{\ensuremath{\kappa}(\ensuremath{\omega}), }$where
$\kappa$ denotes the twisted version of \emph{Mabuchi's K-energy}
then follows from the Chen-Tian formula for the K-energy (see \cite{berm6}
and \cite{berm8} for a direct proof using convex analysis). Moreover,
the restriction to $\mathcal{H}(X,\omega_{0})$ of the dual functional
$f(-u)+g(u)$ appearing in Lemma \ref{lem:R-F duality etc} coincides
with the \emph{Ding functional} in Kähler geometry \cite{berm6}.
An alternative proof of the fact that $\omega_{\beta}^{n}$ minimizes
$F_{\beta}$ on $\mathcal{M}_{1}(X)$ can then be given by using that
$\omega_{\beta}$ is a critical point of $\kappa$ and hence, by convexity,
minimizes $\kappa$ on $\mathcal{H}(X,\omega_{0}).$ Accordingly,
the Calabi-Yau isomorphism $\omega\mapsto\omega^{n}$ shows that $\omega_{\beta}^{n}$
minimizes the restriction of $F_{\beta}$ to the subspace of all volume
forms in $\mathcal{M}_{1}(X).$ However, showing that the infimum
of $F_{\beta}$ over all of $\mathcal{M}_{1}(X)$ coincides with the
infimum over the subspace of volume forms requires the following non-trivial
fact: any $\mu$ such that $E(\mu)<\infty$ can be written as a weak
limit of volume forms $\mu_{j}$ such that $E(\mu_{j})\rightarrow E(\mu)$
and $D_{dV}(\mu_{j})\rightarrow D_{dV}(\mu)$ (see \cite{bdl} where
more general results are obtained).
\end{rem}

\subsection{\label{sub:The-generalization-to big}The generalization to big line
bundles and varieties of positive Kodaira dimension}

Let us briefly give some indications about the extension of Theorem
\ref{thm:def determ} to line bundles $L$ which are merely assumed
big, established in the companion paper \cite{berm8}. In analytic
terms $L$ is big iff $c_{1}(L)$ contains a positive current on $X$
which is strictly positive in the sense that it is bounded from below
by a Kähler form. However, in general, there is a proper open subset
$\Omega\subset X$ such that\emph{ all }positive currents in $c_{1}(L)$
are equal to $-\infty$ on the complement $X-\Omega$ (which can be
taken to be a complex subvariety of $X).$ Fixing a reference smooth
Hermitian metric $\left\Vert \cdot\right\Vert $ on $L$ with curvature
form $\omega_{0}$ in $c_{1}(L)$ the space of positive currents in
$c_{1}(L)$ gets identified, as before, with the space $PSH(X,\omega_{0})$
of all $\omega_{0}-$psh functions, modulo constants (however, in
general all elements in $PSH(X,\omega_{0})$ will be singular along
the subvariety $X-\Omega$). Moreover, the non-pluripolar Monge-Ampère
operator can be defined on $PSH(X,\omega_{0}),$ by restricting to
$\Omega$ \cite{begz}. Then the functional $\mathcal{F}$ can be
defined essentially as before and Theorem\ref{thm:def determ} still
holds (again using \cite{b-b} to verify the first assumption in Theorem
\ref{thm:gibbs intro}) Invoking, the general Theorem \ref{thm:gibbs intro}
thus establishes an LDP with a rate functional $F_{\beta},$ admitting
a unique minimizer $\mu_{\beta}$ as before. However, one new difficulty
is to show that $\mu_{\beta}$ can be written as $MA(\varphi_{\beta})$
for the solution to the equation \ref{eq:ma eq with beta in text}
with minimal singularities, whose existence is provided by the general
results in \cite{begz,bbgz}. The problem is that Lemma \ref{lem:R-F duality etc}
cannot be applied as it is not clear that $\varphi_{\beta}$ is of
the form $Pu$ for some $u$ in $C^{0}(X)$ (even if $u$ can be taken
to be in $L^{\infty}(X)$). But using the variational calculus in
\cite{bbgz,berm6} shows that $\mu_{\beta}$ is of the desired form. 

In particular, when $K_{X}$ is big, i.e. $X$ is a variety of general
type, the corresponding positive current $\omega_{\beta}$ is the
canonical Kähler-Einstein current in $X$ \cite{begz,bbgz}. In the
general case of a variety of positive Kodaira dimension $\kappa\leq n$
(where $\kappa=n$ iff $K_{X}$ is big) one can use the Ithaka fibration
$X\rightarrow Y$ to represent $K_{X}$ as the pull-back of a big
line bundle $L$ on the $\kappa-$dimensional manifold $Y.$ Using
the Fujino-Mori canonical bundle formula this reduces the proof of
the convergence on $X$ to the application of a generalization of
Theorem \ref{thm:def determ} concerning big line bundles on $Y$
endowed with a singular volume form $dV.$ As shown in \cite{berm8}
this realizes the corresponding canonical limiting current $\omega_{\beta}$
as the pull-back to $X$ of a (singular) Kähler form on $Y$ solving
a twisted Kähler-Einstein equation of the form \ref{eq:tw ke eq in text},
where $\eta$ is a current on $Y$ determined by the geometry of $X$
(the canonical current $\omega_{\beta}$ first appeared in a different
geometric context in \cite{s-t,ts}).

\section{\label{sec:Outlook}Outlook}

\subsection{$\beta=0$}

Let $(X,L)$ be a polarized manifold and fix a Kähler metric $\omega_{0}$
in $c_{1}(L).$ By Corollary \ref{cor:tw ke text} (and well-known
stability properties of the complex Monge-Ampère operator) one can
recover the unique (normalized) smooth solution to the Calabi-Yau
equation 
\begin{equation}
(\omega_{0}+i\partial\bar{\partial}\varphi)^{n}=dV,\label{eq:c-y eq in outlo}
\end{equation}
 \cite{y} as the double limit 
\[
\varphi:=\lim_{\beta\rightarrow\infty}\lim_{k\rightarrow\infty}\varphi_{\beta}^{(k)},\,\,\,\varphi_{\beta}^{(k)}:=\frac{1}{\beta}\log\frac{\int_{X^{N_{k}-1}}\left\Vert \det S^{(k)}(\cdot,x_{2},...x_{N_{k}})\right\Vert ^{2\beta/k}dV^{\otimes(N_{k}-1)}}{dV}-\log Z_{N}
\]
Formally interchanging the two limits thus suggests the following 
\begin{conjecture}
\label{conj:c-y eq}Let $(X,L)$ be a polarized manifold and $\omega_{0}$
a Kähler metric in $c_{1}(L).$ Then the unique smooth solution $\varphi$
to the Calabi-Yau equation \ref{eq:c-y eq in outlo}, normalized so
that $\int_{X}\varphi dV=0,$ may be represented as the following
limit in $L^{1}(X):$
\[
\varphi:=\lim_{k\rightarrow\infty}\varphi^{(k)},\,\,\,\varphi^{(k)}:=\frac{1}{k}\frac{\int_{X^{N_{k}-1}}\log\left\Vert \det S^{(k)}(\cdot,x_{2},...x_{N_{k}})\right\Vert ^{2}dV^{\otimes(N_{k}-1)}}{dV}-C_{k},
\]
 where the constant $C_{k}$ ensures that $\int_{X}\varphi^{(k)}dV=0.$ 
\end{conjecture}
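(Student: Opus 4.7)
My approach is to realize $\varphi^{(k)}$ as the $\beta\to 0^+$ limit of the deformed potentials $\varphi_\beta^{(k)}$ from Corollary~\ref{cor:tw ke text}, and to interchange this limit with $k\to\infty$. Expanding $\|\det S^{(k)}\|^{2\beta/k}=1+(\beta/k)\log\|\det S^{(k)}\|^{2}+O(\beta^2)$ inside the integral defining $\varphi_\beta^{(k)}$ shows that, up to an $x$-independent constant and after mean-zero normalization, $\varphi_\beta^{(k)}\to\varphi^{(k)}$ pointwise as $\beta\to 0^+$; moreover, by Jensen's inequality the convergence is monotone. In particular, since $x\mapsto\log\|\det S^{(k)}(x,\cdot)\|^{2}$ is $k\omega_0$-plurisubharmonic (as the log-norm of a holomorphic section of $kL$), the function $\varphi^{(k)}$ itself lies in $PSH(X,\omega_0)\cap L^\infty(X)$ after dividing by $k$.

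Two further inputs then apply. First, Corollary~\ref{cor:tw ke text} gives, for each fixed $\beta>0$, that $\varphi_\beta^{(k)}\to\varphi_\beta$ in $L^1(X,dV)$ as $k\to\infty$, where $\varphi_\beta$ is the unique mean-zero solution of the twisted Monge-Amp\`ere equation $(\omega_0+dd^c\varphi_\beta)^n=e^{\beta\varphi_\beta}\,dV$. Second, by standard stability for complex Monge-Amp\`ere equations, $\varphi_\beta\to\varphi$ uniformly on $X$ as $\beta\to 0^+$, where $\varphi$ is the unique mean-zero solution of the Calabi-Yau equation~\eqref{eq:c-y eq in outlo}. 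Combining these two limits with the Jensen comparison furnishes the upper bound $\limsup_k\varphi^{(k)}\le\varphi$ in $L^1(X,dV)$. For the opposite inequality I would exploit $L^1$-compactness of normalized $\omega_0$-psh functions: any subsequential limit $\varphi_*$ is $\omega_0$-psh, satisfies $\int\varphi_*\,dV=0$ and $\varphi_*\le\varphi$ a.e., and these two constraints force $\varphi_*=\varphi$ a.e., so the whole sequence converges.

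\textbf{Main obstacle.} The delicate step is to turn the pointwise Jensen comparison $\tilde\varphi_\beta^{(k)}-\tilde\varphi^{(k)}\ge \mathrm{const}_{\beta,k}$ into an $L^1$-comparison that is uniform in $k$ after mean-zero normalization. A direct computation shows $\|\varphi_\beta^{(k)}-\varphi^{(k)}\|_{L^1(dV)}\le 2\bar R_\beta^{(k)}$, where $\bar R_\beta^{(k)}$ is the $dV$-averaged Jensen gap, which Taylor expansion identifies with a variance-like quantity of order $\beta/k^{2}$ times the second moment of $\log\|\det S^{(k)}\|^{2}$ under $dV^{\otimes N_k}$. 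If this averaged gap tends to zero as $\beta\to 0$ uniformly in $k$, a diagonal choice $\beta=\beta_k\to 0$ slowly enough closes the argument. For $n=1$ the bosonization formula~\eqref{eq:boson formula} together with classical Coulomb-gas estimates should in principle provide this control; for $n\ge 2$ there is no such explicit representation, and the required concentration estimate for $\tfrac{1}{k}\log\|\det S^{(k)}\|^{2}$, presumably via refined Bergman-kernel asymptotics and exclusion-type bounds for the associated determinantal process, appears to lie beyond the tools developed here. This concentration estimate, rather than the Monge-Amp\`ere or potential-theoretic ingredients, is in my view the genuine content of the conjecture.
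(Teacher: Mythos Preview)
The statement you are addressing is labeled and treated in the paper as a \emph{conjecture}, not a theorem: the paper gives no proof, only the heuristic that it arises by formally interchanging the limits $\beta\to 0$ and $k\to\infty$ in Corollary~\ref{cor:tw ke text}, together with the remark that the conjecture would follow from the existence of the mean energy $\bar E(\mu)$ for every volume form $\mu$ (Problem~\ref{prob:mean en}), which in turn, by Lemma~\ref{lem:asympt for beta negative implies exist of mean}, would follow from partition-function asymptotics for slightly \emph{negative} $\beta$. You are therefore right to present this as a strategy with an unresolved obstacle rather than as a completed proof.

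Your route and the paper's suggested route are genuinely different but closely related. You work entirely on the side $\beta>0$, trying to control the averaged Jensen gap $\bar R_\beta^{(k)}$ (a second-moment/variance quantity for $\tfrac{1}{k}\log\|\det S^{(k)}\|^{2}$) uniformly in $k$ as $\beta\to 0^{+}$. The paper instead points to the existence of $\bar E(\mu)=\lim_{k}\tfrac{1}{N_k}\int H^{(N_k)}\mu^{\otimes N_k}$, a first-moment statement, and observes that this is equivalent to differentiability in $\beta$ at $\beta=0$ of the free-energy limit, which it proposes to obtain by passing to $\beta<0$. Your concentration estimate is a strictly stronger hypothesis than mean-energy existence (control of the variance implies convergence of the mean), so your plan would in particular resolve Problem~\ref{prob:mean en}; conversely, the paper's route, if it works, would give the conjecture without the variance bound.

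One correction to the internal logic of your plan: the ``upper bound'' $\limsup_k\varphi^{(k)}\le\varphi$ does \emph{not} follow from Jensen plus the two limits alone. After mean-zero normalization the Jensen inequality gives $\varphi^{(k)}=\varphi_\beta^{(k)}-R_\beta^{(k)}+\bar R_\beta^{(k)}$ with $R_\beta^{(k)}\ge 0$, hence only $\varphi^{(k)}\le\varphi_\beta^{(k)}+\bar R_\beta^{(k)}$; without a $k$-uniform bound on $\bar R_\beta^{(k)}$ this says nothing as $k\to\infty$. So both the upper and the lower bound in your scheme hinge on the same Jensen-gap control that you identify as the main obstacle, and your compactness argument for the lower bound cannot be separated from it. In short, your diagnosis of where the difficulty lies is correct, but the decomposition into ``easy upper bound $+$ hard concentration step'' is illusory: the concentration (equivalently, the mean-energy problem) is the whole content of the conjecture, exactly as the paper indicates.
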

The conjectural formula above can be seen as a generalization to the
non-linear complex Monge-Ampère operator of the classical Green's
formula for the solution of the Poisson equation for the Laplacian\textbf{
}on a Riemann surface. Indeed, when $X$ is a Riemann surface the
limit $\varphi$ above is precisely given by the Green formula in
question (as follows from the bosonization formula \ref{eq:boson formula}).
It turns out that the validity of the conjecture above would follow
from the existence of the corresponding mean energy $\bar{E}(\mu),$
for any volume form $\mu$ (see Problem \ref{prob:mean en}). This
is shown precisely as in the setting of the\emph{ real} Monge-Ampère
operator considered in \cite{berm7,hu} where the analog of the previous
conjecture was established using permanents as a replacements of the
determinants appearing in the present setting. In particular, when
$X$ is a Calabi-Yau manifold, i.e. $K_{X}$ is trivial, the conjecture
would imply a quasi-explicit formula for the unique Ricci flat Kähler
metric $\omega\in c_{1}(L),$ i.e. solving the Kähler-Einstein equation
with vanishing cosmological constant, $\Lambda=0.$

\subsection{$\beta<0$}

By Lemma \ref{lem:asympt for beta negative implies exist of mean}
the existence of the mean energy (and thus the resolution of the conjecture
above) would follow if one could establish the asymptotics in formula
\ref{eq:asympt in formula negative} of the corresponding partition
functions $Z_{N_{k},\beta/k}$ (assumed finite) for all $\beta>\beta_{0},$
for some negative number $\beta_{0}.$ It can be shown that $Z_{N,\beta_{N}}$
is indeed finite for for some negative $\beta_{0},$ sufficiently
close to zero. In fact, both sides of formula \ref{eq:asympt in formula negative}
are finite when $\beta>\beta_{0}$ (where the critical negative $\beta_{0}$
depends on $(X,L)).$ This motives the following
\begin{conjecture}
Let $(X,L)$ be a polarized manifold and assume given the data $(\left\Vert \cdot\right\Vert ,dV)$
consisting of a Hermitian metric $\left\Vert \cdot\right\Vert $ on
$L,$ a volume form $dV$ on $X.$ For a given negative number $\beta_{0}$
the following is equivalent: 
\begin{itemize}
\item For any $\beta>\beta_{0}$ the partition functions $Z_{N_{k},\beta}$
are finite for $k$ sufficiently large
\item For any $\beta>\beta_{0}$ the functional $\beta F_{\beta}$ admits
a minimizer on $\mathcal{M}_{1}(X)$
\item For any $\beta>\beta_{0}$ the measures $(\delta_{N})_{*}\left(e^{-\beta H^{(N_{k})}}dV^{\otimes N_{k}}\right)$
on $\mathcal{M}_{1}(X)$ satisfy a LDP with speed$N$ and rate functional
\[
\beta F_{\beta}(\mu)=\beta E_{\omega_{0}}(\mu)+D_{dV}(\mu)
\]
 where $E_{\omega_{0}}(\mu)$ is the pluricomplex energy of $\mu$
with respect to the curvature form $\omega_{0}$ of $\left\Vert \cdot\right\Vert .$ 
\end{itemize}
\end{conjecture}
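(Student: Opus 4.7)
My plan is to establish the three--way equivalence by a cycle of implications, with the bulk of the work concentrated in the direction (1)$\Rightarrow$(3). The two implications out of (3) are essentially formal: since $\mathcal{M}_1(X)$ is a compact Polish space, any good rate functional on it attains its infimum, giving (2); and the LDP upper bound applied to the full space yields $\limsup_k N_k^{-1}\log Z_{N_k,\beta}\leq -\inf_{\mathcal{M}_1(X)}\beta F_\beta <\infty$, so $Z_{N_k,\beta}$ is finite for $k$ large, giving (1). For the closing implication (2)$\Rightarrow$(1) I would argue that a minimizer $\mu_\beta$ of $\beta F_\beta$ with $E_{\omega_0}(\mu_\beta)<\infty$ is, by the variational calculus in \cite{bbgz,berm6}, the Monge--Amp\`ere measure of a solution to the twisted equation $(\omega_0+dd^c\varphi_\beta)^n = e^{\beta\varphi_\beta}dV$ (with $\beta<0$), and that a Laplace--type lower bound for a tilted partition function concentrated near $\varphi_\beta$ gives finiteness of $Z_{N_k,\beta}$ itself after undoing the tilt.

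The heart of the proof is (1)$\Rightarrow$(3), and here the natural strategy is to apply the G\"artner--Ellis theorem (Theorem \ref{thm:g-e}) to the tilted partition functions
\[
Z_{N_k,\beta}[u] := \int_{X^{N_k}} \|\det S^{(k)}\|^{2\beta/k}\, e^{-\beta\sum_i u(x_i)}\, dV^{\otimes N_k},\qquad u\in C^0(X).
\]
Concretely one wants to show that
\[
\mathcal{F}_\beta(u) := -\lim_{k\to\infty}\frac{1}{N_k\beta}\log Z_{N_k,\beta}[u]
\]
exists and is Gateaux differentiable in $u$, after which Theorem \ref{thm:g-e} supplies an LDP at speed $N_k$ with rate $\mathcal{F}_\beta^{*}$, and the identification $\mathcal{F}_\beta^{*}=\beta F_\beta$ follows from Lemma \ref{lem:R-F duality etc} together with the characterization of $E_{\omega_0}$ as the Legendre--Fenchel transform of the functional $\mathcal{F}$ from \cite{b-b}. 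For positive $\beta$ the existence of the limit is exactly Theorem \ref{thm A b-b}. To push through to $\beta\in(\beta_0,0)$ I would exploit that $\beta\mapsto N_k^{-1}\log Z_{N_k,\beta}[u]$ is convex in $\beta$ by H\"older's inequality, uniformly bounded above by the affine interpolation between its values at some fixed positive $\beta'$ and at $\beta$ (where assumption (1) gives finiteness), and uniformly bounded below by Jensen's inequality applied to $H^{(N_k)}+u$. A Helly--type selection together with convexity then gives pointwise convergence to a convex limit $\psi(\beta,u)$, uniform in $u$ on norm--bounded subsets of $C^0(X)$.

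The main obstacle is precisely the Gateaux differentiability of $u\mapsto\psi(\beta,u)$ for $\beta<0$. For positive $\beta$ the differentiability is automatic from the strict convexity of the dual object $\beta E_{\omega_0}+D_{dV}$, since $D_{dV}$ is strictly convex; for $\beta<0$ the term $\beta E_{\omega_0}$ is concave, strict convexity of $\beta F_\beta$ fails, and differentiability of $\psi(\beta,\cdot)$ becomes equivalent to uniqueness of the solution to the twisted Kähler--Einstein equation $\mathrm{Ric}\,\omega=-\beta\omega+\eta$ with the reversed sign. This is a delicate Kähler--geometric question closely tied to Moser--Trudinger--Onofri type inequalities on $(X,\omega_0)$, and the critical $\beta_0$ in the conjecture should mark precisely the threshold at which such uniqueness (equivalently, strict convexity of $\beta F_\beta$ on the appropriate finite--energy class) first breaks down. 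Controlling this threshold would require combining the pluripotential variational framework of \cite{begz,bbgz,berm6} with a quantitative analysis of the Moser--Trudinger inequality in the polarized setting, presumably via the companion paper \cite{berm8}.
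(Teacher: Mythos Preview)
The statement you are attempting to prove is labelled a \emph{Conjecture} in the paper, and the paper does not supply a proof. It is presented in the outlook section as an open problem, with remarks that it is a probabilistic analogue of the Yau--Tian--Donaldson conjecture in the Fano case and that it can be verified when $X$ is a Riemann surface using the bosonization formula and known results for vortex models. So there is no ``paper's own proof'' to compare against; your proposal is an attempt at an open problem.

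That said, your outline correctly isolates where the difficulty lies. The implications $(3)\Rightarrow(2)$ and $(3)\Rightarrow(1)$ are indeed soft, and you identify them as such. But two of your non-trivial steps are genuine gaps rather than routine details. First, in $(2)\Rightarrow(1)$ you invoke a ``Laplace-type lower bound for a tilted partition function concentrated near $\varphi_\beta$'' to deduce finiteness of $Z_{N_k,\beta}$. This does not work as stated: finiteness of $Z_{N_k,\beta}$ for $\beta<0$ is a local integrability statement about the singularities of $\|\det S^{(k)}\|^{2\beta/k}$ near the zero set of $\det S^{(k)}$, and a lower bound on the integral over a region \emph{away} from that zero set says nothing about integrability near it. The existence of a minimizer of $\beta F_\beta$ is a global variational fact and does not by itself control these local singularities. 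Second, in $(1)\Rightarrow(3)$ your Helly-type selection for the convex functions $\beta\mapsto N_k^{-1}\log Z_{N_k,\beta}[u]$ only yields subsequential limits; to run G\"artner--Ellis you need convergence of the full sequence to a \emph{specific} limit, and you have not identified what that limit should be for $\beta<0$, let alone shown it is Gateaux differentiable. You yourself flag the differentiability as the ``main obstacle'' and correctly tie it to uniqueness for the twisted K\"ahler--Einstein equation with the wrong sign; this is precisely why the statement remains a conjecture.
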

In particular, if the conjectural LDP above holds then the functional
$\beta F_{\beta}$ is lower semi-continuous and the large $N-$limit
of the laws of $\delta_{N_{k}}$ for the corresponding random point
processes is concentrated on the (non-empty) set of minimizers of
$\beta F_{\beta}.$ By \cite{berm6} any such minimizer is the volume
form of a Kähler metric $\omega_{\beta}$ solving the twisted Kähler-Einstein
equation \ref{eq:tw ke eq in text} corresponding to the data $(\omega_{0},dV,\beta)$
and $\beta F_{\beta}$ may be identified with the corresponding twisted
K-energy functional. Moreover, if the LDP holds then it follows that
$Z_{N_{k},\beta}\leq C_{\beta}^{N},$ when $\beta>\beta_{0}.$ The
conjecture should be contrasted with the fact that, in general, $\beta F_{\beta}$
is unbounded from below if $\beta$ is sufficiently negative and even
when $\beta F_{\beta}$ is bounded from below there exist, in general,
twisted Kähler-Einstein metrics whose volume forms do not minimizer
of $\beta F_{\beta}.$ 

In the case when $L$ is the dual $-K_{X}$ of the canonical line
bundle, i.e. $X$ is a Fano manifold (which equivalently means that
$\eta$ can be taken to be zero) the equivalence between the two points
in the conjecture above can be seen as a probabilistic analog of the
Yau-Tian-Donaldson conjecture saying that a Fano manifold $X$ admits
a Kähler-Einstein metric with positive Ricci curvature ((i.e. $\Lambda>0)$
iff $X$ is K-stable in the algebro-geometric sense; see the companion
paper \cite{berm8} for more detailed explanations of these relations.

Interestingly, the notion of negative temperature has already appeared
in Onsager's work on the 2D vortex model \cite{o}. Using the bosonization
formula \ref{eq:boson formula} on a Riemann surface and large $N-$results
for vortex models (as in \cite{clmp,k,bo-g}) it can be shown that
the conjecture above holds when $X$ is a Riemann surface. Moreover,
then the critical $\beta_{0}=2$ when the volume (degree) of $L$
is normalized to be one, which in our normalizations corresponds to
the critical negative temperature in the vortex model (a detailed
proof will appear elsewhere).

\section{Appendix: the constant in the Cheng-Yau gradient estimate}

Set $\phi:=|\nabla u/u|$ and $F:=\phi(a^{2}-\rho^{2}).$ First, Bochner's
identity gives after some calculations \cite{c-y} that, for any $x,$
\begin{equation}
\frac{\Delta\phi}{\phi}\geq\frac{\phi^{2}}{(n-1)}-(n-1)k^{2}-(2-\frac{2}{(n-1)})\frac{\nabla\phi}{\phi}\cdot\frac{\nabla u}{u}\label{eq:Bochner}
\end{equation}
Let now $x_{1}$ be a point in the interior of $B_{a}(x_{0})$ where
$F$ attains it maximum and assume that $\rho(:=d(x,x_{0}))$ is smooth
close to $x_{1}.$ Next $\nabla F=0$ at $x_{1}$ gives 
\begin{equation}
\frac{\nabla\phi}{\phi}=\frac{\nabla\rho^{2}}{a^{2}-\rho^{2}}=\frac{2\rho\nabla\rho}{a^{2}-\rho^{2}}\label{eq:nabla f is 0}
\end{equation}
(in the following all (in-)equalities are evaluated at $x=x_{1})$
and $\Delta F\leq0$ at $x_{1}$ gives 
\[
\frac{\Delta\phi}{\phi}-\frac{\Delta\rho^{2}}{a^{2}-\rho^{2}}-\frac{2|\nabla\rho^{2}|^{2}}{(a^{2}-\rho^{2})^{2}}\leq0
\]
Now, by the Laplacian comparison 
\[
\Delta\rho^{2}\leq2+2(n-1)(1+k\rho^{2})
\]
Substituting this into the previous inequality we get (using $|\nabla\rho|\leq1)$
\begin{equation}
\frac{\Delta\phi}{\phi}-\frac{2+2(n-1)(1+k\rho^{2})}{a^{2}-\rho^{2}}-\frac{8\rho^{2}}{(a^{2}-\rho^{2})^{2}}\leq0\label{eq:conseq of laplace of f neg}
\end{equation}
By \ref{eq:nabla f is 0} 
\[
-\frac{\nabla\phi}{\phi}\cdot\frac{\nabla u}{u}\geq-\frac{2\rho\phi}{a^{2}-\rho^{2}}
\]
Hence, equation \ref{eq:Bochner} combined with equations \ref{eq:conseq of laplace of f neg}
and the previous inequality gives 
\[
0\geq\frac{\phi^{2}}{(n-1)}-(n-1)k^{2}-\frac{4(n-2)}{(n-1)}\frac{2\rho\phi}{a^{2}-\rho^{2}}-\frac{(2+2(n-1))(1+k\rho^{2})}{a^{2}-\rho^{2}}-\frac{8\rho^{2}}{(a^{2}-\rho^{2})^{2}},
\]
 Equivalently, multiplying by $(a^{2}-\rho^{2})^{2}$ gives 
\[
0\geq\frac{F^{2}}{(n-1)}-(n-1)k^{2}(a^{2}-\rho^{2})^{2}-\frac{4(n-2)}{(n-1)}2\rho F-(2+2(n-1))(1+k\rho^{2})(a^{2}-\rho^{2})-8\rho^{2},
\]
Since we are only interested in the large $n$ limit we deduce from
the previous inequality that 
\[
0\geq\frac{F^{2}}{(n-1)}-8\rho F-nk^{2}(a^{2}-\rho^{2})^{2}-2n(1+k\rho^{2})(a^{2}-\rho^{2})-8\rho^{2}
\]
 giving, after multiplication by $n,$ 
\[
0\geq F^{2}-8anF-n^{2}k^{2}(a^{2}-\rho^{2})^{2}-2n^{2}(1+ka^{2})(a^{2}-\rho^{2})-8a^{2}n,
\]
which we write as 
\[
(4an)^{2}+n^{2}k^{2}(a^{2}-\rho^{2})^{2}+2n^{2}(1+ka^{2})(a^{2}-\rho^{2})+8a^{2}n\geq(F-4an)^{2}
\]
giving 
\[
a^{2}\left(16n{}^{2}+n^{2}k^{2}a^{2}+2n^{2}(1+ka^{2})+8n\right)\geq(F-4an)^{2}
\]
Hence, 
\[
a^{2}n^{2}\left(26+3k^{2}a^{2}\right)\geq(F-4an)^{2}
\]
 giving 
\[
an\left(\left(26+3k^{2}a^{2}\right)^{1/2}+4\right)\geq F:=\phi(a-\rho)(a+\rho)\geq\phi(a-\phi)a,
\]
so that 
\[
n\left(\left(26+3k^{2}a^{2}\right)^{1/2}+4\right)\geq\phi(a-\phi),
\]
showing that there exists an absolute constant $C$ such that 
\[
Cn\left(1+ka\right)\geq\phi(a-\phi),
\]
as desired.

\end{document}